\documentclass[11pt,a4paper]{article}
\usepackage[linesnumbered,ruled]{algorithm2e}
\usepackage{titlesec}
\SetAlCapNameFnt{\small }
\SetAlCapFnt{\small }
\usepackage[title]{appendix}
\usepackage{soul}
\usepackage{amsmath,amssymb,amsbsy,amsthm,calc,cases,enumerate,url,framed,fancyhdr,lipsum,tikz,enumitem,natbib,etoolbox,bbm}
\usepackage{graphicx}
\usepackage[pageanchor,hypertexnames=false]{hyperref}
\usepackage{minitoc}
\usepackage{multirow} 
\usepackage{tabularx}

\newcommand{\unparskip}{\vspace{-\parskip}}

\titlespacing\section{0pt}{12pt plus 4pt minus 2pt}{0pt plus 2pt minus 2pt}
\titlespacing\subsection{0pt}{12pt plus 4pt minus 2pt}{0pt plus 2pt minus 2pt}
\titlespacing\subsubsection{0pt}{12pt plus 4pt minus 2pt}{0pt plus 2pt minus 2pt}

\newtheorem{theorem}{Theorem}
\newtheorem{corollary}{Corollary}
\newtheorem{lemma}{Lemma}%
\usepackage{bbm}
\usepackage{stmaryrd}
\newtheorem{proposition}{Proposition}%
\newtheorem{assumption}{Assumption}%

\BeforeBeginEnvironment{proof}{\unparskip}
\BeforeBeginEnvironment{proof_outline}{\unparskip}
 \AfterEndEnvironment{proof}{\unparskip}
  \AfterEndEnvironment{proof_outline}{\unparskip}
 
\BeforeBeginEnvironment{theorem}{\vspace{0.5\parskip}}
\BeforeBeginEnvironment{corollary}{\vspace{0.5\parskip}}
\BeforeBeginEnvironment{lemma}{\vspace{0.5\parskip}}
\BeforeBeginEnvironment{proposition}{\vspace{0.5\parskip}}

\BeforeBeginEnvironment{thm}{\vspace{0.5\parskip}}
\BeforeBeginEnvironment{cor}{\vspace{0.5\parskip}}
\BeforeBeginEnvironment{prop}{\vspace{0.5\parskip}}
\BeforeBeginEnvironment{lem}{\vspace{0.5\parskip}}

\newcommand{\iid}{\overset{\mathrm{iid}}{\sim}}

\newcommand{\dotcup}

\usepackage{xr}

\theoremstyle{remark}
\newtheorem{example}{Example}%
\theoremstyle{remark}
\newtheorem{remark}{Remark}%

\usepackage[normalem]{ulem}

\begin{document}

\title{On the design distribution for predictive Bayesian regression}

\author{Wanyue Sun$^{1}$ \ and Edwin Fong$^{1}$\thanks{Corresponding author. Email: chefong@hku.hk}  \\ \\
$^1$Department of Statistics and Actuarial Science, University of Hong Kong}
\date{}

\maketitle
\vspace{-5mm}

\begin{abstract}
The predictive approach to Bayesian inference accesses the posterior distribution via a sequence of one-step-ahead predictives, enabling inference via predictive resampling without Markov chain Monte Carlo. In the random-design regression setting, an explicit specification of the predictive design distribution is required, yet the impact of this choice has received little formal attention. We study the role of this predictive design distribution in parametric martingale posteriors for regression, and identify predictive notions of identifiability and design invariance that are essential for valid inference, particularly in the high-dimensional regression setting. Building on these foundations, we introduce a novel class of parametric martingale posteriors for regression that satisfies a weak form of these desiderata, and naturally accommodates the high-dimensional setting through regularization. We then illustrate our method through a simulation.
\end{abstract}

\section{Introduction}

\subsection{Parametric martingale posteriors for regression}\label{sec:pmp}
The predictive approach specifies the Bayesian model directly through a sequence of one-step-ahead predictives \citep{fortini2020quasi, berti2023bayesian, fong2023martingale}. A main advantage is that posterior sampling is free of Markov chain Monte Carlo, instead relying on a sequential imputation scheme known as predictive resampling. One key example is the parametric martingale posterior \citep{holmes2023statistical,garelli2024asymptotics, fortini2025exchangeability, fong2026asymptotics}, which utilizes parametric predictives and stochastic gradient descent. A related approach is that of \citet{battaglia2026variational} where a variational posterior predictive is utilized.

We consider random-design parametric regression under the data generating process \(Y \mid X \sim P_{\beta^*}(\cdot \mid X)\) and $X \sim P_X^*$, from which we observe data $(X_{1:n},Y_{1:n})$. Predictive resampling in this setup involves first imputing future covariates $(X_{i})_{i \geq n+1}$ independently and identically distributed from a chosen predictive design distribution $P_X$, which is usually the empirical distribution $\mathbb{P}_X$ of $X_{1:n}$. Next, we impute future responses $(Y_i \mid X_i) \sim P_{i-1}(\cdot \mid X_i)$ for $i \geq n+1$, where $P_{i-1}(y \mid x)$ is the conditional predictive distribution given $(X_{1:i-1},Y_{1:i-1})$. The distribution of the limit of an estimator $\beta_i = \beta(X_{1:i},Y_{1:i})$ as $i \to \infty$ is then termed the martingale posterior. In this paper, we formalize the validity of this predictive resampling procedure for parametric regression, which in practice must be truncated at a finite $i = N$. 

A key property of traditional Bayes is that the posterior of $\beta$ does not depend on the design distribution under standard assumptions \citep{gelman2013bayesian}.  However, the predictive framework explicitly requires the specification of $P_X$, and the effect of this choice has received little attention in the literature. In this paper, we formally investigate the role of $P_X$ during predictive resampling, and establish the importance of predictive notions of identifiability and design invariance, which is particularly relevant when $X$ is high-dimensional. Building on this, we introduce a parametric martingale posterior for regression which, unlike previous works, satisfies a weak version of the aforementioned desiderata. Our method naturally accommodates high-dimensional regression through prior regularization. We then demonstrate our method in a simulation study.

\subsection{Setup}
\label{sec_setup}

Let $P_X$ denote the predictive design distribution, which does not contain unknown parameters, and may not equal $P_X^*$. Next, let $\left\{P_\beta(\cdot \mid x): x \in \mathbb{R}^p, \beta \in \mathbb{R}^p\right\}$ denote a probability kernel with density $p_\beta(y \mid x)$ for $y \in \mathcal{Y} \subseteq \mathbb{R}$. Except in Assumption \ref{as:ident} and Proposition \ref{prop:doob_regression}, we specialize to the one-parameter model  $P_\beta(\cdot \mid x) = P_Y(\cdot \mid \beta^\top x)$, where  $\{P_Y(\cdot \mid t) : t \in \mathbb{R}\}$ is a probability kernel with density $p_Y(y \mid t)$.
We assume that the score function $r(y \mid t) :=  ({\partial}/{\partial t}) \log p_Y(y \mid t)$ exists and satisfies $\int r(y \mid t) \, P_Y(dy \mid t) = 0$, and define
$$s(\beta, y \mid x) := \nabla_\beta \log p_Y(y \mid \beta^\top x) = x \, r\left( y \mid \beta^\top x \right).$$
 Next, we write the conditional second moment of $r(y \mid t)$ as
$$w(t) = \int r(y \mid t)^2 \, P_Y(dy \mid t) $$
where we assume $0 < w(t) < \infty$ for all $t \in \mathbb{R}$.
This framework encompasses a wide range of common models, including generalized linear models with known dispersion and Student-$t$ regression with fixed scale. 
For later discussions involving martingales, we write the natural filtration as $\mathcal{F}_i = \sigma \{(X_{n+1}, Y_{n+1}), \dots, (X_i, Y_i)\}$ for $i \geq n+1$, where we write $\mathcal{F}_n = \{\emptyset, \Omega\}$ to highlight dependence on the fixed observations $(X_{1:n},Y_{1:n})$.

\section{Predictive identifiability and design invariance}
\subsection{Conjugate example}
In this subsection, we empirically demonstrate the role of the predictive design distribution $P_X$ under predictive resampling when $p_i(y \mid x)$ is chosen to be the Bayesian posterior predictive density before formalizing our observations with Doob's theorem in the next subsection. We demonstrate two key properties satisfied by the Bayesian posterior predictive:
\begin{enumerate}
    \item \textit{Identifiability}: If $\beta$ is not identifiable under $P_X$, then predictive resampling is invalid.
    \item \textit{Design invariance}: If $\beta$ is identifiable under $P_X$, then the posterior on $\beta$ is invariant to the choice of $P_X$. 
\end{enumerate}
While the first point is natural, it unfortunately precludes the use of the obvious choice of $\mathbb{P}_X$ when $p > n$ or there is strong collinearity, as we will formalize shortly. The second point is desirable from a modelling perspective, which we outline in Remark \ref{rem:design_invariance}, but achieving this for the general martingale posterior is nontrivial and forms the study of Section \ref{sec:pmp}. \vspace{2mm}

\begin{example}
\label{ex1}
Let $p_{\beta}(y\mid x) = \mathcal{N}(y; \,\beta^\top x, 1)$ and $\pi(\beta) = \mathcal{N}(\beta; \, 0, {I_{p}})$. Given $\left(Y_{1:i}, X_{1:i}\right)$, the posterior under standard settings is conjugate of the form $\pi(\beta\mid X_{1:i}, Y_{1:i}) = \mathcal{N}(\beta; \, {\beta}_{i}, \Sigma_{i})$, where $\beta_i$ and $\Sigma_i$ have well-known forms. The posterior predictive density is then
\begin{align*}
    p_{i}(y \mid x)= 
        \mathcal N\!\left(y;\, \beta_{i}^\top x, 1 + x^\top\,\Sigma_{i}\,x\right).
\end{align*}
We now simulate a dataset of size $n = 100, p = 400$ from the data generating process $X \sim P^*_X = {0.5\,\mathcal{N}(-1,\,I_{p}) + 0.5\,\mathcal{N}(1,\,I_{p})}$ and $(Y \mid X) \sim P_{\beta^*}(\cdot \mid X)$. Crucially, this places us in the high-dimensional regime as $p > n$. 
\begin{figure}
\begin{center}
\includegraphics[width=0.95\textwidth]{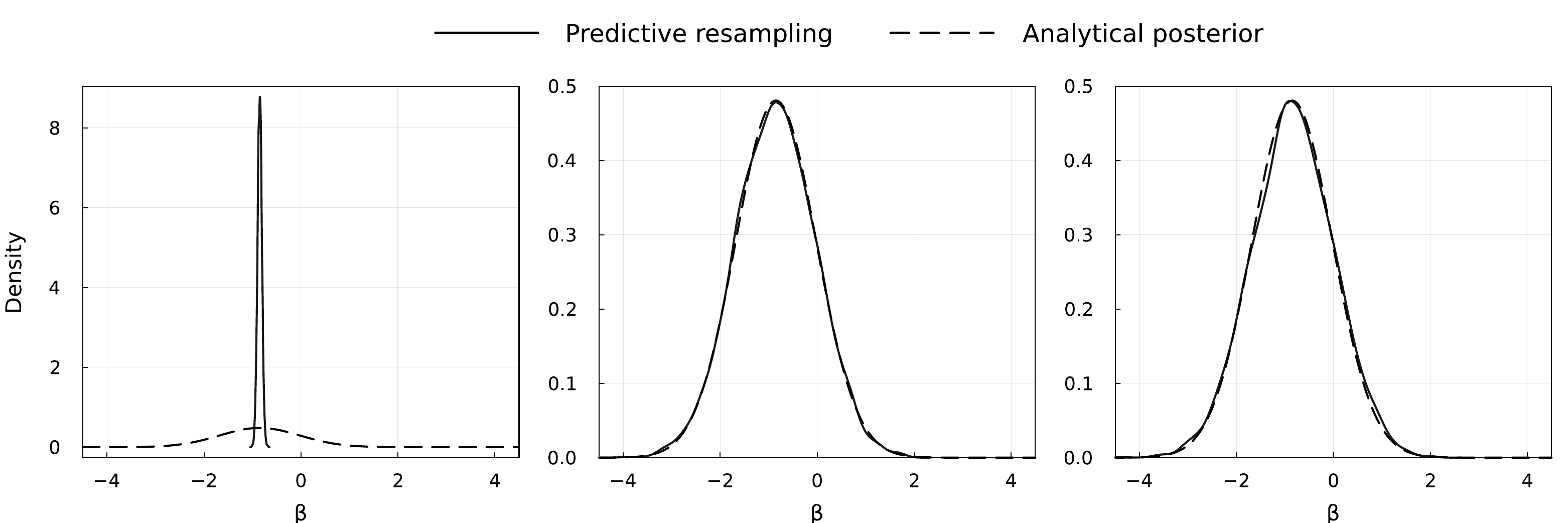}
\end{center}
\vspace{-5mm}
\caption{Posteriors for one component of $\beta$ through predictive resampling with (Left) $P_X = \mathbb{P}_X$; (Middle) $P_X = P^*_X$;  (Right) $P_X = \mathcal N(0, I_{p})$. The analytic posterior is plotted as a baseline.}
\label{fig:pr_vs_bayes_sim_lm}
\end{figure}
Figure \ref{fig:pr_vs_bayes_sim_lm} compares predictive resampling with the analytical posterior under three different choices of $P_X$, where we use the posterior mean $\beta_i$ as the estimator. 
First, Figure \ref{fig:pr_vs_bayes_sim_lm} (Left) shows that when $P_X = \mathbb{P}_X$, predictive resampling deviates from the truth.  As discussed in Section \ref{sec:non_ident} of the Appendix, predictive resampling can still converge in this setting, so this discrepancy is alarming. The underlying reason for this mismatch is that $P_X$ does not satisfy an identifiability condition, as the covariance matrix of $P_X = \mathbb{P}_X$ is rank-deficient when $p > n$. Second, we see that once $P_X$ is identifiable, as in Figures \ref{fig:pr_vs_bayes_sim_lm} (Middle) and (Right), predictive resampling not only recovers the correct posterior, but is also invariant to the choice of \(P_X\). A key point is that $P_X = \mathcal{N}(0,I_p)$, chosen purely for convenience, avoids the need to model $P^*_X$.
\end{example}

\subsection{Doob's theorem}
The preceding simulation example illustrates that the limiting behavior of predictive resampling depends critically on whether the regression parameter is identifiable under the chosen covariate distribution $P_X$. 
Proposition~\ref{prop:doob_regression} formalizes this connection by specializing Doob's consistency theorem \citep{Doob1949} to the random-design regression setting introduced in Section~\ref{sec_setup}, thereby making concrete the notions of predictive identifiability and design invariance. We begin with the case where $P_\beta(\cdot \mid x)$ is a general conditional model. 

\begin{assumption}\label{as:ident}
    For any $\beta \neq \beta'$, there exists a measurable set $B \subseteq \mathbb{R}^p$, possibly depending on $(\beta,\beta')$, such that $P_X(B) > 0$ and $P_\beta(\cdot\mid x)\neq P_{\beta'}(\cdot\mid x)$ for all  $x\in B$.
\end{assumption}
\begin{proposition}
\label{prop:doob_regression}
Let $\Pi$ be a prior distribution on $\beta \in \mathbb{R}^p$. For $i \geq n$, let $\Pi(\cdot \mid \mathcal F_i)$ denote the posterior distribution of the form
$\Pi(d\beta \mid \mathcal{F}_i) \propto \Pi(d\beta)\prod_{j = 1}^{i}p_{\beta}(Y_j \mid X_j)$, where $(X_{1:n},Y_{1:n})$ are fixed observations.  Consider predictive resampling as follows for $i\ge n+1$:
\[
X_{i} \sim P_X, \quad 
(Y_i\mid X_i) \sim P_{i-1}(\cdot\mid X_i), \quad P_{i-1}(\cdot \mid x)
=
\int P_\beta(\cdot \mid x)\,\Pi(d\beta\mid\mathcal F_{i-1}).
\]
Let $\beta_i = E(\beta\mid\mathcal F_i)$ and $E\left(\|\beta\|\mid \mathcal{F}_n\right) < \infty$.
 Then $(\beta_i, \mathcal F_i)_{i\ge n+1}$ is a uniformly integrable martingale and
$\beta_i \to \beta_\infty$
almost surely. Furthermore, under Assumption \ref{as:ident}, we have $\beta_\infty \sim \Pi(\cdot\mid\mathcal F_n)$.
\end{proposition}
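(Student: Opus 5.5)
The plan is to realise predictive resampling as sampling from a single joint Bayesian model, and then apply the martingale convergence theorem and Doob's consistency theorem inside that model.

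\textbf{Step 1 (joint model).} On a probability space, draw $\beta \sim \Pi(\cdot \mid \mathcal F_n)$. Conditionally on $\beta$, draw $(X_i,Y_i)_{i\ge n+1}$ independently with $X_i \sim P_X$ and $(Y_i\mid X_i,\beta)\sim P_\beta(\cdot\mid X_i)$. Since $P_X$ carries no parameters, the factor $\prod_j p_X(X_j)$ cancels in Bayes' rule. Hence the conditional law of $\beta$ given $\mathcal F_i$ in this model is exactly $\Pi(d\beta\mid\mathcal F_i)\propto \Pi(d\beta)\prod_{j\le i}p_\beta(Y_j\mid X_j)$. Its one-step predictive for $(X_{i+1},Y_{i+1})$ is $P_X(dx)\,P_i(dy\mid x)$. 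The joint law of $(X_i,Y_i)_{i\ge n+1}$ is determined by these successive conditionals, so it coincides with the law generated by the predictive resampling scheme. Every distributional statement can therefore be proved in the joint model.

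\textbf{Step 2 (martingale and convergence).} In the joint model, $\beta_i=E(\beta\mid\mathcal F_i)$ is a Doob martingale of the integrable variable $\beta$; integrability holds because $E(\|\beta\|\mid\mathcal F_n)<\infty$. It is therefore uniformly integrable, and Lévy's upward theorem gives $\beta_i\to\beta_\infty=E(\beta\mid\mathcal F_\infty)$ almost surely and in $L^1$, where $\mathcal F_\infty=\sigma(\cup_i\mathcal F_i)$. This establishes the first part of the statement.

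\textbf{Step 3 ($\beta$ is $\mathcal F_\infty$-measurable).} This is the main step. It suffices to show that $\beta$ is almost surely equal to an $\mathcal F_\infty$-measurable function of the data; then $\beta_\infty=\beta$ almost surely, and so $\beta_\infty\sim\Pi(\cdot\mid\mathcal F_n)$.

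Conditionally on $\beta$, the pairs $(X_i,Y_i)$ are i.i.d. with law $Q_\beta(dx,dy)=P_X(dx)P_\beta(dy\mid x)$. For each set $A$ in a countable generating $\pi$-system of $\mathbb R^p\times\mathcal Y$, the strong law of large numbers gives $\frac1m\sum_{i}\mathbbm 1_A(X_i,Y_i)\to Q_\beta(A)$ almost surely. The null set may depend on $\beta$, but integrating over $\beta$ shows it is null under the joint law. Hence the map $\beta\mapsto Q_\beta$ is recovered almost surely by an $\mathcal F_\infty$-measurable map $T$ of the sequence, taking values in the space of probability measures.

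Assumption~\ref{as:ident} makes $\beta\mapsto Q_\beta$ injective. If $\beta\ne\beta'$, then on a set $B$ with $P_X(B)>0$ the conditionals differ. Since $Q_\beta=Q_{\beta'}$ would force $P_\beta(\cdot\mid x)=P_{\beta'}(\cdot\mid x)$ for $P_X$-almost every $x$, we get $Q_\beta\ne Q_{\beta'}$.

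\textbf{Main obstacle (measurability of the inverse).} We need the inverse $Q_\beta\mapsto\beta$ to be measurable, so that $\beta=g(T)$ with $g$ Borel. I would assume, as is implicit in the setup, that $x\mapsto p_\beta(y\mid x)$ and $\beta\mapsto P_\beta$ are measurable kernels. The map $\beta\mapsto Q_\beta$ is then a Borel injection from the Polish space $\mathbb R^p$ into the Polish space of probability measures with the weak topology. By the Lusin--Souslin theorem its image is Borel and its inverse is Borel measurable. Setting $g=$ this inverse yields $\beta=g(T)$ almost surely, which is $\mathcal F_\infty$-measurable.

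This is precisely Doob's consistency argument. One could alternatively cite \citet{Doob1949} directly once Step 1 is established, but the Lusin--Souslin check is where care is needed.
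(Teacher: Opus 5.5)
Your proposal is correct and follows the paper's proof. It uses the same joint model with $\beta\sim\Pi(\cdot\mid\mathcal F_n)$ and a fixed $P_X$, the same Doob martingale $E(\beta\mid\mathcal F_i)$, and the same identifiability argument for the joint family $P_X(dx)\,P_\beta(dy\mid x)$, obtained from Assumption~\ref{as:ident} via $P_X$-a.e.\ uniqueness of the conditional kernel. The only difference is that you re-derive Doob's consistency theorem yourself (strong law on a countable generating $\pi$-system, then Lusin--Souslin for a measurable inverse), whereas the paper checks the identifiability and kernel-measurability hypotheses of Theorem~2.2 of \citet{miller2018detailed} and cites it.
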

Proposition \ref{prop:doob_regression} guarantees that under the identifiability assumption, predictive resampling is equivalent to posterior sampling. We now specialize to the setting where $P_{\beta}(\cdot \mid x) = P_Y(\cdot \mid \beta^\top x)$, which forms the focus of the rest of the paper. Next, we present a simple sufficient condition on the predictive design distribution \(P_X\) that ensures this identifiability requirement holds.\vspace{2mm}
\begin{assumption}\label{as:suff_ident}
    Let $P_X$ be such that $E_{P_X}(XX^\top)$ is finite and positive definite. 
\end{assumption}\vspace{-3mm}
\begin{proposition}
\label{prop:identifiability_sufficient}
Let $P_{\beta}(\cdot \mid x) = P_Y(\cdot \mid \beta^\top x)$, where the one-parameter model is identifiable, that is $P_Y(\cdot\mid t) \neq P_Y(\cdot\mid t')$ for all $t\neq t'$. 
Then Assumption \ref{as:suff_ident} implies Assumption \ref{as:ident}. 
\end{proposition}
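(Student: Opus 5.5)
Fix $\beta \neq \beta'$ and set $v = \beta - \beta' \neq 0$. The natural candidate is
\[
B = \{x \in \mathbb{R}^p : v^\top x \neq 0\},
\]
which is measurable as the complement of a hyperplane through the origin. For any $x \in B$ we have $\beta^\top x \neq \beta'^\top x$. By the assumed identifiability of the one-parameter model, this gives $P_Y(\cdot \mid \beta^\top x) \neq P_Y(\cdot \mid \beta'^\top x)$, that is, $P_\beta(\cdot \mid x) \neq P_{\beta'}(\cdot \mid x)$ for all $x \in B$. So the remaining task is to show that $P_X(B) > 0$.

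For this I will argue by contradiction. Suppose $P_X(B) = 0$, so that $v^\top X = 0$ holds $P_X$-almost surely. Then $(v^\top X)^2 = 0$ almost surely, and hence
\[
v^\top E_{P_X}(XX^\top)\, v = E_{P_X}\bigl[(v^\top X)^2\bigr] = 0.
\]
Exchanging the quadratic form with the expectation is justified because $E_{P_X}(XX^\top)$ is finite under Assumption \ref{as:suff_ident}. This contradicts positive definiteness, since $v \neq 0$. Therefore $P_X(B) > 0$, and Assumption \ref{as:ident} holds with this $B$, which depends on $(\beta, \beta')$ only through $v$.

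No step here is a real obstacle. The only points needing care are that $B$ is measurable and that $v^\top E(XX^\top)v = E[(v^\top X)^2]$ holds when the second moments are finite; both are immediate. Note also that only positive definiteness is used, namely that $P_X$ is not supported on any proper linear subspace, and this is exactly what fails for $\mathbb{P}_X$ when $p > n$.
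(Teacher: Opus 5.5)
Your proposal is correct and follows essentially the same route as the paper: take $B = \{x : (\beta-\beta')^\top x \neq 0\}$, use $E_{P_X}\{(v^\top X)^2\} = v^\top E_{P_X}(XX^\top)\, v > 0$ to get $P_X(B) > 0$, and then apply identifiability of the one-parameter family pointwise on $B$. The only difference is cosmetic: you argue $P_X(B)>0$ by contradiction, while the paper argues it directly; the paper also adds a side remark verifying that $P_Y(\cdot \mid \beta^\top x)$ is a valid probability kernel, which the statement itself does not require.
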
\vspace{2mm}

\begin{remark}
Let $P_{\beta}(\cdot \mid x) = P_Y(\cdot \mid \beta^\top x)$, and suppose the support of $P_X$ lies within a proper subspace $S\subsetneq\mathbb{R}^p$. For any nonzero $w\in S^\perp$, we have $(\beta+w)^\top x=\beta^\top x$ for all $x\in S$, implying $P_{\beta}(\cdot \mid x) = P_{\beta + w}(\cdot \mid x)$ for $P_X$-almost all $x$. Since $w \neq 0$, this violates Assumption \ref{as:ident}. This pathology arises immediately for the empirical distribution $\mathbb{P}_X$ when $p > n$, as the rank deficiency of $X_{1:n}$ yields a non-zero $w \in \mathrm{Null}(X_{1:n})$ such that $w^\top X_i=0$ for all $i=1,\ldots, n$. Hence $\mathbb{P}_X$ cannot guarantee identifiability when $p>n$, consistent with the discrepancy observed in Figure \ref{fig:pr_vs_bayes_sim_lm} (Left). 
\end{remark}\vspace{2mm}

\begin{remark}
\label{rem:design_invariance}
If Assumption \ref{as:ident} holds, then Proposition \ref{prop:doob_regression} implies that when using the Bayesian posterior predictive for predictive resampling, the limiting distribution of $\beta_i$ is invariant to the choice of $P_X$ within the identifiable class. This is shown in Figures \ref{fig:pr_vs_bayes_sim_lm} (Middle) and (Right). We term this property \textit{design invariance} and turn our focus to this property in the next section. 
\end{remark}

\section{Parametric martingale posterior}
\label{sec:pmp}
\subsection{Design invariance}\label{sec:pmp_des}
Design invariance is desirable in the predictive Bayesian framework, as it enables substituting \(P^*_X\) by a convenient working model such as \( P_X = \mathcal N( 0, I_p)\), without affecting the posterior for $\beta$. This is particularly useful in the high-dimensional setting where exact modelling of $P^*_X$ is often infeasible. While this property holds for traditional Bayesian predictives, it is unfortunately not attained for general parametric martingale posteriors proposed in the literature. As an example, \citet{holmes2023statistical,fortini2025exchangeability} introduce an update rule based on stochastic gradient descent with a plug-in predictive, which for the univariate case of Example~\ref{ex1} is
\begin{equation}
\label{eq:var_update}
 X_{i}\sim \mathcal N(0,\kappa^2), \quad (Y_i\mid X_i) \sim \mathcal{N}(\beta_{i-1} X_{i}, 1), \quad \beta_i = \beta_{i-1} + i^{-1}\,X_{i}\,\left(Y_{i}-\beta_{i-1} X_{i}\right).
\end{equation}
Under \eqref{eq:var_update}, the posterior variance can be computed as  $\operatorname{Var}(\beta_\infty \mid \mathcal F_n) =
\kappa^2\sum_{i=n+1}^{\infty} i^{-2}
\approx
\kappa^2\,n^{-1}$.
In this case, doubling $\kappa^2$ would result in a proportional inflation of the posterior variance for $\beta$, despite the conditional predictive $p_{i}(y \mid x)$ being the same, making posterior uncertainty of coefficients sensitive to arbitrary rescaling of the design distribution. We also highlight that the method of \citet{fong2026asymptotics}, which preconditions the score with the observed Fisher information computed from $X_{1:n}$, is still not design invariant, and is furthermore not well-defined if $p > n$. 

\subsection{Recursive update rule}
We now introduce a novel class of parametric martingale posteriors that attains a weak form of design invariance, and is suitable for the $p > n$ setting. As usual, we consider the plug-in predictive $P_i(\cdot \mid x) = P_{Y}(\cdot  \mid \beta_i^\top x)$, where we will only consider identifiable one-parameter models as motivated by Proposition \ref{prop:identifiability_sufficient}. {We highlight that the role of identifiability is more subtle for the martingale posterior compared to traditional Bayes; we discuss this in Section \ref{sec:pmp_ident} of the Appendix.} We initialize $\beta_n$ with an estimate from the real observations $(X_{1:n},Y_{1:n})$, and introduce the following novel recursive updating rule for $i \geq n+1$:
\begin{align}
\label{eq:recur_update}
\beta_i
=
\beta_{i-1}
&+
\alpha_{i-1}(X_i)\,
\mathcal{I}_i^{-1}\,
s(\beta_{i-1}, Y_i \mid X_i)
\end{align}
where $s(\beta,y \mid x)$ is the score function, $\mathcal{I}_i$ is a Fisher information term, and $\alpha_i(x)$ is a scalar correction factor. The latter two are given by
\begin{align*}
\mathcal{I}_i
=
\sum_{j=1}^i w\left( \beta_n^\top X_j \right) X_j X_j^\top
+
D(\beta_n), \quad \alpha_{i}(x)
=
\left[
\frac{
w\left(\beta_n^\top x\right)
}{
w\left( \beta_{i}^\top x\right)
}\bigl\{1 + w\left(\beta_n^\top x\right)\, x^\top \mathcal{I}_{i}^{-1} x\bigr\}
\right]^{1/2}
\end{align*}
and $D(\beta_n)$ is a chosen diagonal regularization matrix with non-negative entries to ensure invertibility of $\mathcal{I}_i$ for all $i \geq n$. 
While the Fisher information term is familiar (now computed from $X_{1:i}$), the scalar $\alpha_i(x)$ and diagonal matrix $D(\beta_n)$ are new. We provide further motivation for (\ref{eq:recur_update}) through connections to Bayesian ridge regression in Section \ref{sec:update_motiv} of the Appendix. 

The scalar correction factor $\alpha_{i}(x)$ is reminiscent of the posterior predictive variance in Example \ref{ex1}. This is no coincidence, as its sole role is to induce a weak form of design invariance by adaptively scaling the size of the update in (\ref{eq:recur_update}) to reduce the impact of the choice of $P_X$. We explore this notion of weak design invariance in the next subsection. We further note that $\alpha_{i}(x)$ is an easily computable scalar, does not require knowledge of $P_X$, and only depends on $(\mathcal{F}_{i},x)$ so it does not affect the martingale. 

The second new ingredient is the diagonal matrix $D(\beta_n)$, which acts as prior shrinkage to ensure positive definiteness of the preconditioning Fisher information term $\mathcal{I}_i$ in the $p > n$ regime. We will let the regularization depend on the initial estimate $\beta_n$, and we consider specific choices of $D(\beta)$ in Section \ref{sec:plugin} corresponding to different shrinkage priors.

\subsection{Weak design invariance and asymptotic normality}
In our setting, we refer to the distribution of $\beta_\infty$ when it exists as the martingale posterior. For general predictive schemes, it is not a simple task to ensure the martingale posterior is invariant to $P_X$. However, it is possible to obtain a weaker condition: we define \textit{weak design invariance} as the property where the posterior mean and covariance matrix of $\beta_\infty$ are invariant to the choice of $P_X$. This is attained by our update (\ref{eq:recur_update}), which we formalize below.
\begin{theorem}[Weak design invariance]
\label{thm:predictive_invariance_mp}
 Under predictive resampling with update rule (\ref{eq:recur_update}) and $P_i(\cdot \mid x) = P_{Y}(\cdot  \mid \beta_i^\top x)$, $(\beta_i, \mathcal{F}_i)_{i \geq n+1}$ is a martingale bounded in $L^2$ and $\beta_i \to \beta_\infty$ almost surely, so $E(\beta_\infty \mid \mathcal{F}_n)$ does not depend on $P_X$.
 If Assumption \ref{as:suff_ident} holds and $w(t)$ is continuous and strictly positive for all $t \in \mathbb{R}$, then $\textnormal{Cov}(\beta_\infty \mid \mathcal{F}_n)$ also does not depend on $P_X$. 
\end{theorem}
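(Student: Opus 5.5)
The plan is to first verify the martingale property, then bound the increments in $L^2$, and finally compute the limiting covariance through a telescoping identity.

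\textbf{Martingale property.} Let $\Delta_i := \beta_i-\beta_{i-1}$. Conditioning on $\mathcal F_{i-1}$ and $X_i$, the quantities $\alpha_{i-1}(X_i)$ and $\mathcal I_i^{-1}$ are fixed, because $\mathcal I_i$ depends only on $X_{1:i}$ and on $\beta_n$, which is $\mathcal F_n$-measurable. Also $Y_i\sim P_Y(\cdot\mid\beta_{i-1}^\top X_i)$. Hence
\[
E\{s(\beta_{i-1},Y_i\mid X_i)\mid\mathcal F_{i-1},X_i\}=X_i\int r(y\mid\beta_{i-1}^\top X_i)\,P_Y(dy\mid \beta_{i-1}^\top X_i)=0,
\]
and the tower property gives $E(\Delta_i\mid\mathcal F_{i-1})=0$.

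\textbf{Conditional second moment of the increment.} Using $w(\beta_{i-1}^\top X_i)$ for the conditional second moment of $r$, the ratio $w(\beta_n^\top x)/w(\beta_{i-1}^\top x)$ in $\alpha^2$ cancels the factor $w(\beta_{i-1}^\top X_i)$. Writing $u_i=w(\beta_n^\top X_i)^{1/2}X_i$, we get
\[
E(\Delta_i\Delta_i^\top\mid \mathcal F_{i-1},X_i)=(1+u_i^\top\mathcal I_{i-1}^{-1}u_i)\,\mathcal I_i^{-1}u_iu_i^\top\mathcal I_i^{-1}.
\]
Since $\mathcal I_i=\mathcal I_{i-1}+u_iu_i^\top$, Sherman--Morrison gives $\mathcal I_i^{-1}u_i=\mathcal I_{i-1}^{-1}u_i/(1+u_i^\top\mathcal I_{i-1}^{-1}u_i)$. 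Therefore
\[
E(\Delta_i\Delta_i^\top\mid\mathcal F_{i-1},X_i)=\frac{\mathcal I_{i-1}^{-1}u_iu_i^\top\mathcal I_{i-1}^{-1}}{1+u_i^\top\mathcal I_{i-1}^{-1}u_i}=\mathcal I_{i-1}^{-1}-\mathcal I_i^{-1}.
\]
This exact identity is the heart of the argument, and the correction factor $\alpha$ is designed to produce it.

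\textbf{$L^2$-boundedness and convergence of the mean.} By orthogonality of martingale increments and the telescoping identity,
\[
E\{(\beta_N-\beta_n)(\beta_N-\beta_n)^\top\mid\mathcal F_n\}=\mathcal I_n^{-1}-E(\mathcal I_N^{-1}\mid\mathcal F_n)\preceq\mathcal I_n^{-1}.
\]
This bound is finite and independent of $N$, given that $\mathcal I_n$ is invertible, which the choice of $D$ ensures. So the martingale is bounded in $L^2$, and the martingale convergence theorem gives $\beta_i\to\beta_\infty$ almost surely and in $L^2$. Then $E(\beta_\infty\mid\mathcal F_n)=\beta_n$, which does not involve $P_X$.

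\textbf{Covariance and the main obstacle.} From $L^2$ convergence,
\[
\mathrm{Cov}(\beta_\infty\mid\mathcal F_n)=\mathcal I_n^{-1}-\lim_N E(\mathcal I_N^{-1}\mid \mathcal F_n).
\]
It therefore suffices to show $E(\mathcal I_N^{-1}\mid\mathcal F_n)\to0$; the covariance then equals $\mathcal I_n^{-1}$, which depends only on the observed data. This limit is the step I expect to be the main obstacle. Here I would use Assumption~\ref{as:suff_ident} together with positivity and continuity of $w$.

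First, $w(\beta_n^\top X)>0$ almost surely, and $E(XX^\top)\succ0$ implies that $X$ is not confined to a hyperplane under $P_X$. Hence $M:=E_{P_X}\{\min(w(\beta_n^\top X),1)\,XX^\top\}$ is finite and positive definite. Truncating $w$ this way avoids any integrability requirement on $w$ itself.

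Next, by the strong law of large numbers applied to the i.i.d.\ $X_{n+1},X_{n+2},\dots$ (with $\beta_n$ fixed given $\mathcal F_n$),
\[
\mathcal I_N\succeq\sum_{j=n+1}^N \min(w(\beta_n^\top X_j),1)X_jX_j^\top,
\]
and the right-hand side is of order $(N-n)M$ almost surely. Hence $\lambda_{\min}(\mathcal I_N)\to\infty$ and $\mathcal I_N^{-1}\to0$ almost surely.

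Finally, $0\preceq\mathcal I_N^{-1}\preceq\mathcal I_n^{-1}$ gives a domination bound, so dominated convergence yields $E(\mathcal I_N^{-1}\mid\mathcal F_n)\to0$. Continuity of $w$ is needed only for measurability and positivity on compacts; the truncation argument handles the rest.
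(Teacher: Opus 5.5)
Your proof is correct and follows the paper's argument step for step: the mean-zero score gives the martingale property, Sherman--Morrison gives the telescoping identity $E(Z_iZ_i^\top\mid\mathcal F_{i-1},X_i)=\mathcal I_{i-1}^{-1}-\mathcal I_i^{-1}$, the bound $\mathcal I_N^{-1}\preceq\mathcal I_n^{-1}$ gives $L^2$-boundedness, and the SLLN with dominated convergence gives $E(\mathcal I_N^{-1}\mid\mathcal F_n)\to0$. The one difference is in that last step: you lower-bound $\mathcal I_N$ with the truncated weight $\min\{w(\beta_n^\top X),1\}$, whereas the paper restricts to a ball $B_r$ where continuity gives $c_l\le w\le c_u$. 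Your version is slightly cleaner and slightly more general, since pointwise positivity of $w$ (plus measurability) already makes $E_{P_X}[\min\{w(\beta_n^\top X),1\}XX^\top]$ finite and positive definite, so continuity of $w$ is not actually needed.
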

Invariance of the posterior mean follows directly from the uniformly integrable martingale, as we have $E(\beta_\infty \mid \mathcal{F}_n)= \beta_n$. Interestingly, the scalar correction term $\alpha_{i}(x)$ is sufficient for design invariance for the full posterior covariance matrix, which follows from a Sherman-Morrison formula argument. 
We highlight that weak design invariance may fail without the identifiability condition given by Assumption \ref{as:suff_ident}; see Section \ref{sec:non_ident} in the Appendix for more details. 

An obvious question arises: how weak is weak design invariance? In Section \ref{sec:sim} and Section \ref{sec:addit_examples} of the Appendix, we show in practice that the above martingale posterior is indistinguishable under different choices of $P_X$.  Furthermore, we will now show that the martingale posterior is asymptotically normal, so the posterior mean and covariance matrix asymptotically characterize the full posterior.

We first outline the necessary assumptions for asymptotic normality. Our first assumption is regarding convergence of the initial estimates $\beta_n$ and $\mathcal{I}_n$ computed from the real observations $(X_{1:n},Y_{1:n})$, now treated as independent and identically distributed samples from $P^*$. We will then consider a sequence of martingale posteriors centered at initial estimates $\beta_n$ and take $n\to \infty$.  

\begin{assumption}\label{as:In_conv}
Let the following hold $P^{*\infty}$-almost surely: $\mathcal{I}_n$ is positive definite for all $n$, $\beta_n$ converges, and $n^{-1}\,\mathcal{I}_n$ converges to a positive definite matrix.
\end{assumption}

Next, we offer two choices of conditions on the predictive design distribution $P_X$ depending on the form of $w(t)$ and $k(t)$, where  $k(t) = {{\int r(y \mid t)^4 \,P_Y(dy \mid t) }}/{w(t)^2}$ is the conditional kurtosis.\vspace{2mm}

\renewcommand{\theassumption}{\arabic{assumption}(i)}
\begin{assumption}\label{as:lyapunov_easy}
Suppose $w(t) = w_0$ and  
$k(t) =k_0$ where $w_0$ and $k_0$ are positive finite constants.   Let $P_X$ satisfy $E_{P_X}\left(\|X\|^4 \right) < \infty$.
\end{assumption}
\addtocounter{assumption}{-1}
\renewcommand{\theassumption}{\arabic{assumption}(ii)}
\begin{assumption}\label{as:lyapunov_hard}
Suppose $w(t)$ and $k(t)$ are continuous and strictly positive for all $t \in \mathbb{R}$. Let $P_X$ have compact support.
\end{assumption}
\renewcommand{\theassumption}{\thesection.\arabic{assumption}}
The forms of $w(t)$ and $k(t)$ in Assumption \ref{as:lyapunov_easy} 
hold for location families like the Gaussian, Laplace, and Student-$t$ families, and also the fixed-shape Gamma family with a log link. The more general case in Assumption \ref{as:lyapunov_hard} then handles models like logistic and Poisson regression. We provide further discussion on the assumptions in Section \ref{sec:assumptions} of the Appendix.

\begin{theorem}[Asymptotic normality]\label{thm:as_invar}
Suppose Assumptions \ref{as:suff_ident}, \ref{as:In_conv} and either \ref{as:lyapunov_easy} or \ref{as:lyapunov_hard} hold. Let $\beta_{n\infty}$ denote the limit under predictive resampling initialized at $\beta_n$ as in Theorem \ref{thm:predictive_invariance_mp}. Then the posterior law of $n^{1/2}\,(\beta_{n\infty} - \beta_n)$ converges weakly $P^{*\infty}$-almost surely to a Gaussian as $n \to \infty$.
\end{theorem}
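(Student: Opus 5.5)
We write $\beta_{n\infty}-\beta_n=\sum_{i\ge n+1}\Delta_i$ with martingale differences $\Delta_i=\alpha_{i-1}(X_i)\,\mathcal{I}_i^{-1}X_i\,r(Y_i\mid\beta_{i-1}^\top X_i)$. The plan is to apply a martingale central limit theorem for triangular arrays, in the form of Hall and Heyde's conditional Lindeberg/Lyapunov version, to the rescaled array $n^{1/2}\Delta_i$ for $i\ge n+1$. Since the sum is infinite, we first truncate at $i=Mn$. The tail $\sum_{i>Mn}$ will be shown to have conditional second moment $O(1/M)$ after scaling by $n$, uniformly in large $n$. The tail can then be removed by a standard approximation argument: take $n\to\infty$ first and then $M\to\infty$. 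All statements are conditional on $\mathcal{F}_n$ along a $P^{*\infty}$-almost sure sequence of initial data, on which Assumption \ref{as:In_conv} holds.

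The first step is the conditional variance. Conditioning on $\mathcal{F}_{i-1}$ and $X_i$, we have $E\{r(Y_i\mid\beta_{i-1}^\top X_i)^2\}=w(\beta_{i-1}^\top X_i)$. This cancels against the factor $w(\beta_{i-1}^\top x)^{-1}$ inside $\alpha_{i-1}^2$. Next we use the Sherman--Morrison identity. Note that $\mathcal{I}_i=\mathcal{I}_{i-1}+w(\beta_n^\top X_i)X_iX_i^\top$ is $\mathcal{F}_i$-measurable, and write $u=w(\beta_n^\top X_i)^{1/2}X_i$. Then
\begin{equation*}
(1+u^\top\mathcal{I}_{i-1}^{-1}u)\,\mathcal{I}_i^{-1}uu^\top\mathcal{I}_i^{-1}=\mathcal{I}_{i-1}^{-1}-\mathcal{I}_i^{-1}.
\end{equation*}
Hence $E(\Delta_i\Delta_i^\top\mid\mathcal{F}_{i-1},X_i)=\mathcal{I}_{i-1}^{-1}-\mathcal{I}_i^{-1}$ exactly. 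This is the same computation that underlies Theorem \ref{thm:predictive_invariance_mp}.

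The quadratic variation therefore telescopes deterministically given the $X$'s:
\begin{equation*}
n\sum_{i=n+1}^{Mn}E(\Delta_i\Delta_i^\top\mid\mathcal{F}_{i-1},X_i)=n\mathcal{I}_n^{-1}-n\mathcal{I}_{Mn}^{-1}.
\end{equation*}
We condition on $X_i$ as well, which keeps this a valid martingale filtration. By Assumption \ref{as:In_conv}, $n\mathcal{I}_n^{-1}\to V^{-1}$, where $V=\lim n^{-1}\mathcal{I}_n$. For the second term, $\mathcal{I}_{Mn}=\mathcal{I}_n+\sum_{j=n+1}^{Mn}w(\beta_n^\top X_j)X_jX_j^\top$. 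The strong law, applied to the i.i.d. $X_j\sim P_X$ with $\beta_n\to\beta_\lim$, gives $n^{-1}\mathcal{I}_{Mn}\to V+(M-1)E_{P_X}\{w(\beta_\lim^\top X)XX^\top\}$. This needs continuity of $w$ plus either constancy of $w$ or compact support to make the convergence uniform in $\beta_n$. The limit is positive definite by Assumption \ref{as:suff_ident}. So the quadratic variation converges in probability to a deterministic matrix $\Sigma_M$, and $\Sigma_M\to V^{-1}$ as $M\to\infty$. The same bound controls the tail, since the scaled tail variance is $n\mathcal{I}_{Mn}^{-1}=O(1/M)$. The limiting Gaussian is $\mathcal N(0,V^{-1})$.

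The second step, which I expect to be the main obstacle, is a conditional Lyapunov condition $n^2\sum_{i=n+1}^{Mn}E\|\Delta_i\|^4\to0$. We bound
\begin{equation*}
E(\|\Delta_i\|^4\mid\mathcal{F}_{i-1},X_i)\le\alpha_{i-1}^4\,\|\mathcal{I}_i^{-1}\|^4\,\|X_i\|^4\,k(\beta_{i-1}^\top X_i)\,w(\beta_{i-1}^\top X_i)^2.
\end{equation*}
The weight $w(\beta_{i-1}^\top X_i)^2$ cancels the denominator of $\alpha^4$. We have $\|\mathcal{I}_i^{-1}\|=O(1/n)$ on the relevant range, and $x^\top\mathcal{I}_{i-1}^{-1}x=O(\|x\|^2/n)$. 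Under Assumption \ref{as:lyapunov_easy} everything is constant and $E\|X\|^4<\infty$, so each summand is $O(n^{-4})$ and the sum is $O(Mn\cdot n^2\cdot n^{-4})\to0$. Under Assumption \ref{as:lyapunov_hard}, the difficulty is that $\beta_{i-1}$ is random and $k,w$ are only locally bounded. I would first show $\sup_{n<i\le Mn}\|\beta_{i-1}-\beta_n\|$ is bounded with high probability, using Doob's maximal inequality and the variance bound $\operatorname{tr}(\mathcal{I}_n^{-1})=O(1/n)$. Compactness of the support then confines $\beta_{i-1}^\top X_i$ to a compact set, where $k$ and $w^{\pm1}$ are bounded by continuity and strict positivity. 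We then apply the Lyapunov bound on this high-probability event via a stopping-time truncation of the martingale, which changes nothing asymptotically. With both conditions verified, the martingale CLT yields weak convergence of $n^{1/2}(\beta_{n\infty}-\beta_n)$ to $\mathcal N(0,V^{-1})$ for almost every data sequence.
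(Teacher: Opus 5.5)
Your proof is correct in outline. It takes a genuinely different, and in places simpler, route than the paper, though one local bound needs repair.

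\textbf{Comparison with the paper.} The paper truncates at $N_n$ with $N_n/n\to\infty$ and works with the natural filtration $\mathcal{F}_{n,i-1}$. That choice forces two extra pieces of machinery:
\begin{enumerate}
\item Lemma \ref{lem:sup_Ini}, the almost sure bound $\sup_{n,i}\|(n+i+1)\mathcal{I}_{ni}^{-1}\|_2<\infty$, which is needed to sum the Lyapunov terms over $i\le N_n$.
\item Lemma \ref{lem:condit_var}, which uses Proposition 6.16 of H\"ausler and Luschgy to transfer the exact telescoping identity from the enlarged conditioning $(\mathcal{F}_{n,i-1},X_i)$ back to $V_n^2$.
\end{enumerate}
Your observation that $\mathcal{G}_{i-1}=\sigma(\mathcal{F}_{i-1},X_i)$ is itself a filtration removes the second item. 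Indeed $\mathcal{G}_{i-1}\subseteq\mathcal{G}_i$, $\Delta_i$ is $\mathcal{G}_i$-measurable, and $E(\Delta_i\mid\mathcal{G}_{i-1})=0$. So the conditional variance is exactly $n(\mathcal{I}_n^{-1}-\mathcal{I}_{Mn}^{-1})$. Its limit $\Sigma_M$ is deterministic, so no nesting condition is needed in the Hall--Heyde theorem.

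Your fixed-ratio truncation at $Mn$ removes the first item. Since $\mathcal{I}_i\succeq\mathcal{I}_n$, you have the deterministic bound $\|\mathcal{I}_i^{-1}\|\le\|\mathcal{I}_n^{-1}\|=O(1/n)$, which suffices for $O(Mn)$ summands. The price is twofold:
\begin{itemize}
\item You must identify $\lim_n n^{-1}\mathcal{I}_{Mn}$ by a law of large numbers that is uniform in $\beta_n$. As you note, this needs constant $w$ or compact support.
\item You must add the double-limit approximation step. Your tail control is right: by dominated convergence with dominator $n\,\mathrm{tr}(\mathcal{I}_n^{-1})$,
\begin{equation*}
\limsup_n nE\,\mathrm{tr}(\mathcal{I}_{Mn}^{-1})\le\mathrm{tr}\{(V+(M-1)J)^{-1}\}=O(1/M),
\end{equation*}
where $J=\lim_n E_{P_X}\{w(\beta_n^\top X)XX^\top\}\succ 0$.
\end{itemize}
The paper's choice $N_n/n\to\infty$ gives a single-limit argument instead, at the cost of the stronger uniform control.

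Your Doob maximal step under Assumption \ref{as:lyapunov_hard} matches the paper's. It actually gives the stronger statement $\sup_i\|\beta_i-\beta_n\|\to0$ in probability. Your reduction to a fixed data sequence on which Assumption \ref{as:In_conv} holds is the same as the paper's passage through its deterministic version of that assumption.

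\textbf{The step that needs repair.} In your fourth-moment bound you split $\|\mathcal{I}_i^{-1}X_i\|\le\|\mathcal{I}_i^{-1}\|\,\|X_i\|$. This discards the Sherman--Morrison factor. What remains of $\alpha_{i-1}^4$ is $\{1+w(\beta_n^\top X_i)X_i^\top\mathcal{I}_{i-1}^{-1}X_i\}^2$. With $X_i^\top\mathcal{I}_{i-1}^{-1}X_i=O(\|X_i\|^2/n)$, your unconditional expectation then involves $E\|X\|^8$. Assumption \ref{as:lyapunov_easy} does not provide this, so ``each summand is $O(n^{-4})$'' does not follow as written.

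The fix is to keep $\mathcal{I}_i^{-1}X_i$ together and use
\begin{equation*}
\mathcal{I}_i^{-1}X_i=\mathcal{I}_{i-1}^{-1}X_i/\{1+w(\beta_n^\top X_i)X_i^\top\mathcal{I}_{i-1}^{-1}X_i\},
\end{equation*}
which cancels that factor entirely and gives
\begin{equation*}
E\left(\|\Delta_i\|^4\mid\mathcal{G}_{i-1}\right)\le w(\beta_n^\top X_i)^2\,\|\mathcal{I}_{i-1}^{-1}\|^4\,\|X_i\|^4\,k(\beta_{i-1}^\top X_i).
\end{equation*}
This is the paper's bound (\ref{eq:gen_EZ}), and with it only $E\|X\|^4<\infty$ is needed. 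Alternatively, you can keep the conditioning on $X_i$ and use that $\max_{i\le Mn}\|X_i\|^4/n\to0$ in probability, which follows from $E\|X\|^4<\infty$.
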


We highlight that the technical machinery required in Theorem \ref{thm:as_invar} is noticeably heavier than in \cite{fong2026asymptotics}, primarily due to the more complex terms $\alpha_i(x)$ and $\mathcal{I}_i$. In particular, we must work with a weaker version of the Lindeberg condition and Doob's maximal inequalities, where these techniques may also be useful for future results. Finally, we highlight that analogous versions of Theorems \ref{thm:predictive_invariance_mp} and \ref{thm:as_invar} hold in the fixed-design case where $P_X$ is replaced with a deterministic design sequence $(X_{i})_{i \geq 1}$. We elaborate on this in Section \ref{sec:fixed_design} of the Appendix.

\subsection{Initial plug-in estimate}
\label{sec:plugin}
To initialize the recursive updates of \eqref{eq:recur_update}, we require a plug-in estimate $\beta_n$ and the regularization matrix $D(\beta_n)$ which ensures invertibility of $\mathcal{I}_i$. In the low-dimensional setting, one could set $\beta_n$ to the maximum likelihood estimate and $D(\beta_n) = 0$. In the high-dimensional setting, sparsity is typically assumed, and regularization is incorporated via shrinkage priors. In this case, we will let $\beta_n$ be the maximum a posteriori estimate computed from $(X_{1:n},Y_{1:n})$ for the chosen prior, and $D(\beta_n)$ will be allowed to depend on the initial estimate $\beta_n$.

Under a scale mixture of Gaussians prior, $D(\beta)$ can be motivated to be a diagonal precision matrix. We summarize the forms of $D(\beta)$ for common priors in Table~\ref{tab:D_n_summary}, with derivations deferred to Section \ref{sec:D_motiv} of the Appendix. For the Bayesian LASSO \citep{park2008bayesian} and continuous spike-and-slab \citep{george1993variable} priors, the diagonal entries are local shrinkage estimates computed from $\beta$. 
In the Gaussian continuous spike-and-slab case, $\gamma^{(j)}$ is the posterior inclusion probability of the $j$-th component $\beta^{(j)}$ and takes the form
\begin{align*}\label{eq:css}
\gamma^{(j)} = \frac{\theta\,\mathcal{N}(\beta^{(j)};\,  0, v_1)}{\theta\,\mathcal{N}(\beta^{(j)}; \,  0, v_1) 
      + (1-\theta)\,\mathcal{N}(\beta^{(j)}; \, 0, v_0)}
\end{align*}
where $\theta \in (0,1)$ is the prior inclusion probability, and $v_0$ and $v_1$ are the spike and slab variances respectively with $v_0 \ll v_1$.

\begin{table}[h]
\centering
\caption{Examples of $D(\beta)$ for common shrinkage priors; $\beta^{(j)}$ is the $j$-th component of $\beta$. }
\vspace{3mm}
\begin{small}
{\begin{tabular}{ccc}
Shrinkage prior & Diagonal matrix $D(\beta)$ & Description \\[4pt]
Ridge                  & $\tau I_p$          & $\tau =$ global precision \\[2pt]
LASSO            & $\text{Diag}(\tau^{(1)}, \dots, \tau^{(p)})$ & $\tau^{(j)} = \lambda\,{|\beta^{(j)}|^{-1}}$ \\[2pt]
Spike-and-slab         & $\text{Diag}(\tau^{(1)}, \dots, \tau^{(p)})$ & $\tau^{(j)} = (1-\gamma^{(j)})\,v_0^{-1} + \gamma^{(j)}\,v_1^{-1}$
\end{tabular}}
\end{small}
\label{tab:D_n_summary}
\end{table}

\subsection{Practical considerations for $P_X$}
\label{sec:px}
While weak design invariance ensures robustness of the posterior to the predictive design distribution $P_X$, its choice remains necessary for numerical implementation. A simple default that  satisfies Assumption \ref{as:suff_ident} and is straightforward to sample from is $P_X = \mathcal{N}(0, \kappa^2 I_p)$ for some $\kappa>0$. Another surprising benefit of design invariance is the freedom to specify $P_X$ based on computational convenience, without affecting posterior uncertainty. 
Predictive resampling requires truncation at sufficiently large $N$ such that $\beta_N$ has little remaining uncertainty. By increasing $\kappa$, we can increase the expected Fisher information under $P_X$, corresponding to more information about $\beta$ per imputed sample of $(X_i,Y_i)$. As a result, choosing $\kappa$ to be sufficiently large can drastically improve the convergence of predictive resampling, thereby improving computational speed through early truncation. We investigate this in Section \ref{sec:kappa} of the Appendix.  

\newpage
\section{Simulation study}\label{sec:sim}
We illustrate our methodology in a high-dimensional robust linear regression simulation with $n = 250$ and $p = 500$. The data generating process is $P^*_X = \mathcal{N}(0, I_p)$ and $p_{\beta^*}(y \mid x) = \text{Student-}t\!\left(\mu = {\beta^*}^\top x, \sigma = 1, \nu = 4\right)$, where $\beta^*$ contains 5 nonzero coefficients $(-2, -2, 2, 2, -2)$. For the parametric martingale posterior, 
we use the Student-$t$ plug-in predictive with recursive update (\ref{eq:recur_update}). We choose $D(\beta_n)$ corresponding to the Gaussian continuous spike-and-slab prior as in Table~\ref{tab:D_n_summary} with $\theta = 0.1, v_1 = 1$, and $v_0 = 0.0005$, where $\beta_n$ is initialized to the maximum a posteriori estimate. We compare to the traditional Bayesian posterior with the equivalent likelihood and prior, where partial conjugacy in this case enables the use of a Gibbs sampler.

Predictive resampling and Gibbs sampling are implemented in Julia, and run on an Apple M1 chip. We highlight that we have not utilized the inherent parallelization of predictive resampling, which would result in an even larger speed-up compared to Gibbs sampling. 
For predictive resampling, we use the same random sequence $X_{n+1:N}$ drawn from $P_X$ across predictive resampling trajectories, as precomputed terms such as $\mathcal{I}_i^{-1}X_i$ can be reused for efficiency. This is also related to the fixed-design setting, which we discuss further in Section \ref{sec:fixed_design}. Further computational details are provided in Section \ref{sec:computation} of the Appendix. 
\begin{figure}[h!]
\centering
\includegraphics[width=0.96\linewidth]{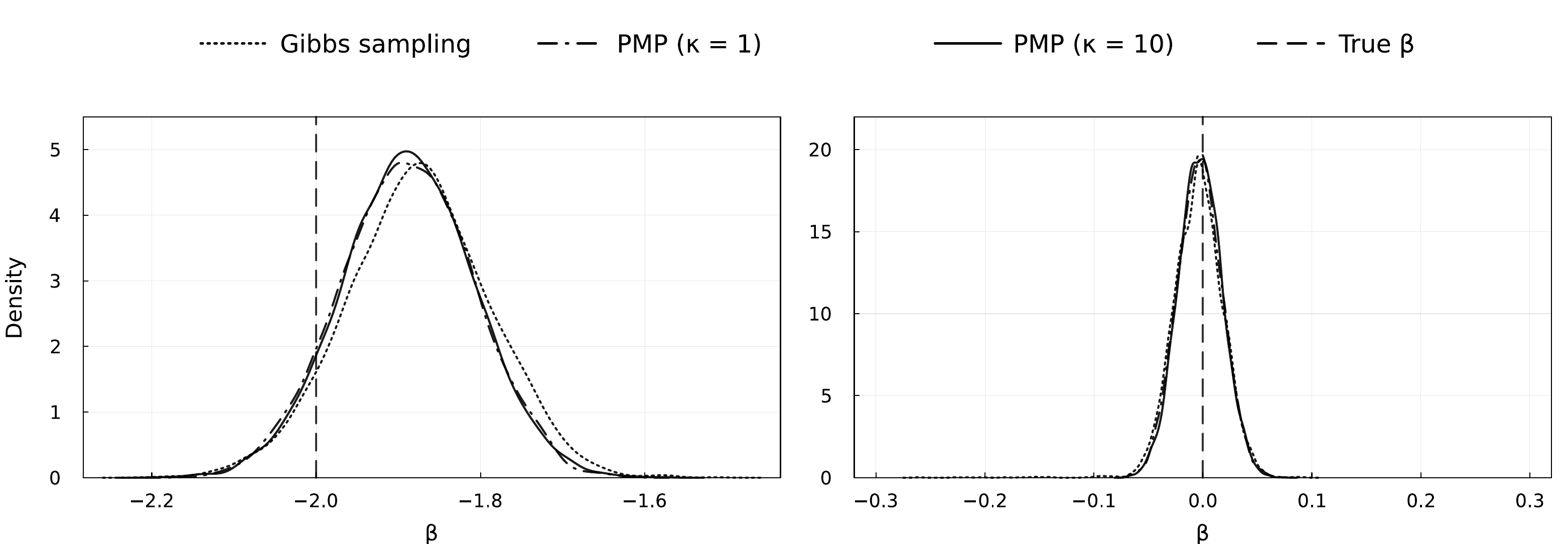}
\caption{Posteriors for selected (Left) active coefficient; (Right) inactive coefficient. PMP is the parametric martingale posterior, and $\kappa$ is the standard deviation of $P_X = \mathcal{N}(0,\kappa^2 I_p)$. } 
\label{fig:pr_vs_bayes_sim_robust}
\end{figure}

Figure~\ref{fig:pr_vs_bayes_sim_robust} illustrates that the parametric martingale posterior and Bayesian posterior are similar in terms of the posterior uncertainty for both active and inactive coefficients. We also see that weak design invariance is sufficient for robustness to the choice of $P_X$, as rescaling \(P_X\) from \(\mathcal{N}(0, \,I_p)\) to \(\mathcal{N}(0, 10^2 I_p)\) does not alter the posterior uncertainty of coefficients based on the proposed update \eqref{eq:recur_update}. As discussed earlier, this choice is accompanied with a computational benefit, as using $P_X$ with the larger covariance $\kappa = 10$ accelerates the convergence of predictive resampling, allowing early truncation at $N = p+100$ without compromising uncertainty quantification. 

Table~\ref{tab:coverage_length_time} presents coverage and interval lengths of 95\% credible intervals and run-times for 200 repeats of the simulation. The Gibbs sampler is run with $10000$ burn-in iterations followed by $25000$ samples, giving an effective sample size of approximately $5000$ for active coefficients. Compared to traditional Bayes, the parametric martingale posterior achieves similar coverage and interval lengths, while requiring significantly less computation time to produce 5000 independent posterior samples. 

\begin{table}[htbp]
\caption{Comparison of the parametric martingale posterior ($\kappa = 10$) with traditional Bayes.}
 \vspace{5mm}
\centering
\begin{small}
{\begin{tabular}{crrrr}
& \multicolumn{2}{c}{Gibbs}  &\multicolumn{2}{c}{PMP} \\[1pt]
 Set & {Cov} & {Length} & {Cov} & {Length} \\[5pt]
 Active   & 97.4  & 3.4  &   96.4  & 3.2 \\
 Inactive & 100.0 & 1.0  & 100.0 & 0.8\\[5pt]
 Max.\ SE & 0.012 & 0.004 &0.015 & 0.001  \\
 Run-time & \multicolumn{2}{c}{59.63\,s}& \multicolumn{2}{c}{0.12\,s}\\[9pt]
\end{tabular}}
\end{small}
\begin{minipage}{0.6\linewidth}
  \footnotesize 
  Cov, coverage is reported as percent; SE, standard error. 
  Interval lengths are multiplied by 10. 
  Run-time is in seconds per dataset.
\end{minipage}
\label{tab:coverage_length_time}
\end{table}

\section*{Acknowledgement}
E. Fong receives funding from the Research Grants Council of Hong Kong through the Early Career Scheme (Grant No. 27304424) and the General Research Fund (Grant No. 17306925).

\section*{Code}
Our implementation and reproducible simulation codes are available on GitHub: \url{https://github.com/wys0071101/Design_Parametric_MP}.

\bibliographystyle{apalike}
\bibliography{paper-ref}

\newpage

\begin{appendices}

\setcounter{equation}{0}
\setcounter{theorem}{0}
\setcounter{proposition}{0}
\setcounter{lemma}{0}
\setcounter{corollary}{0}
\setcounter{assumption}{0}

\renewcommand{\theequation}{\thesection.\arabic{equation}}
\renewcommand{\thetheorem}{\thesection.\arabic{theorem}}
\renewcommand{\theproposition}{\thesection.\arabic{proposition}}
\renewcommand{\thelemma}{\thesection.\arabic{lemma}}
\renewcommand{\thecorollary}{\thesection.\arabic{corollary}}
\renewcommand{\theassumption}{\thesection.\arabic{assumption}}

\renewcommand{\thetable}{\thesection.\arabic{table}}
\renewcommand{\thefigure}{\thesection.\arabic{figure}}

\section*{Summary of Appendix}
    The appendix contains the technical proofs, additional derivations and experimental results. Sections \ref{sec:proof_doob} - \ref{sec:proof_asym} cover the technical proofs of our results. Section \ref{sec:fixed_design} discusses the fixed-design case, and Section \ref{sec:ident} investigates the role of identifiability. Section \ref{sec:update_motiv} provides a general motivation for our recursive update, and Section \ref{sec:experiments} provides additional experimental details and examples.
\maketitle

\section{Proof of Proposition \ref{prop:doob_regression}}\label{sec:proof_doob}
Consider a joint probability space on $\beta$ and the observations where we first draw $\beta \sim \Pi(\cdot \mid \mathcal{F}_n)$, then generate $X_i \iid P_X$ and $(Y_i \mid X_i,\beta) \sim P_{\beta}(\cdot \mid X_i)$ for $i \geq n+1$. The joint law factorizes as
\[
P_\beta(dx,dy)
=
P_X(dx)\,P_\beta(dy\mid x).
\]
Since the marginal distribution $P_X$ is fixed and does not depend on $\beta$, the terms involving $X_{1:i}$ act as a constant factor, and Bayes' theorem thus gives the posterior for $i \geq n + 1$ as
\begin{align*}
 \Pi(d\beta \mid \mathcal{F}_i)&\propto \Pi(d\beta \mid \mathcal{F}_n)\, \prod_{j = n+1}^{i}p_{\beta}(Y_j \mid X_j) \\
 &\propto  \Pi(d\beta)\, \prod_{j = 1}^{i}p_{\beta}(Y_j \mid X_j).
\end{align*}
Marginalizing over $\beta$ recovers the predictive distribution $P_{i-1}(\cdot\mid X_i)$,
$$P(Y_i \in {\cdot} \mid \mathcal{F}_{i-1}, X_i) = \int P_{\beta}( {\cdot} \mid X_i) \, \Pi(d\beta \mid \mathcal{F}_{i-1}),$$
so this construction is equivalent to the predictive resampling scheme.

We will show Proposition \ref{prop:doob_regression} through a specialization of Doob's consistency theorem (Theorem 2.2 of \cite{miller2018detailed}) for our settings. Under predictive resampling and the assumption $E\left(\|\beta\| \mid \mathcal{F}_n\right)<\infty$, it follows from usual arguments that $(\beta_i, \mathcal{F}_i)_{i \geq n+1}$ is a uniformly integrable martingale and 
\[
\beta_i \to \beta_\infty \quad \textnormal{a.s.}
\]
To identify the limit $\beta_\infty$, we now verify the assumptions of Theorem 2.2 of \citet{miller2018detailed} for the joint model for $(X,Y)$. Let $P_\beta$ denote the joint distribution of $(X,Y)$ corresponding to the composition of the probability kernel $P_\beta(\cdot \mid x)$ and the fixed marginal $P_X$.
 
 For the identifiability condition, consider a pair $\beta\neq\beta'$. As $Y$ lies in a standard Borel space, we have $P_X$-a.e. uniqueness of the regular conditional probability kernel $P(\cdot \mid x)$ for a given joint distribution, e.g. \citet[Theorem 8.5]{kallenberg2021foundations}. 
From Assumption \ref{as:ident}, there exists a measurable set $B \subseteq \mathbb{R}^p$, which may depend on ($\beta,\beta'$), where $P_X(B)>0$ and
\[
P_\beta(\cdot\mid x)\neq P_{\beta'}(\cdot\mid x)
\quad \forall x\in B.
\]
As $P_X$ does not change with $\beta$, the above assumption must imply the joint distributions are not equal, that is $P_\beta\neq P_{\beta'}$. The joint model $\{P_\beta:\beta\in\mathbb R^p\}$ is thus identifiable.

Next, we verify measurability. By assumption, $\left\{P_\beta(\cdot \mid x):x \in \mathbb{R}^p, \beta \in \mathbb{R}^p\right\}$ is a valid probability kernel. The joint measure $P_\beta$ can be defined as the composition  of $\{P_\beta(\cdot \mid  x) : x \in \mathbb{R}^p,\beta \in \mathbb{R}^p\}$ and $P_X$, where the latter can be viewed as a constant kernel over $\beta \in \mathbb{R}^p$. \citet[Lemma 3.3]{kallenberg2021foundations} thus ensures $\{P_\beta(\cdot): \beta \in \mathbb{R}^p\}$ is a valid probability kernel, that is the map $\beta \mapsto P_\beta(C)$ for any measurable $C \subseteq  \mathbb{R}^p \times \mathcal Y$ is itself measurable.  

Hence, the joint parametric family $\{P_\beta:\beta\in\mathbb R^p\}$ satisfies the assumption required for \citet[Theorem 2.2]{miller2018detailed}, giving
$\beta_\infty = \beta$ a.s.,
and since $\beta \sim \Pi(\cdot \mid \mathcal F_n)$, we have $\beta_\infty \sim \Pi(\cdot \mid \mathcal F_n)$.

\section{Proof of Proposition \ref{prop:identifiability_sufficient}}
Let $\beta, \beta' \in \mathbb{R}^p$ with $\beta \neq \beta'$, and define
\[
v = \beta - \beta' \neq 0 .
\]
From Assumption \ref{as:suff_ident}, $E_{P_X}(XX^\top) \succ 0$, so we have that
\[
E_{P_X}\{(v^\top X)^2\} = v^\top\, E_{P_X}(XX^\top)\, v > 0 .
\]
Therefore the set
\[
B_v
=
\{x \in \mathbb R^p : v^\top x \neq 0\}
=
\{x \in \mathbb R^p :  \beta^\top x \neq  {\beta'}^\top x\},
\]
which depends on $(\beta,\beta')$, has positive probability under $P_X$, that is
\[
P_X(B_v) > 0 .
\]
For any $x \in B_v$, define
\[
t = \beta^\top x, \quad t' =  {\beta'}^\top x,
\]
where clearly $t \neq t'$. By the assumed identifiability of the one-parameter family $\{P_Y(\cdot \mid t): t \in \mathbb{R}\}$, we have
\begin{align*}
P_Y(\cdot \mid  \beta^\top x) \neq P_Y(\cdot \mid {\beta'}^\top x) \quad \forall x \in B_v.
\end{align*}
The set $B_v$ thus satisfies Assumption \ref{as:ident}.

For completion, we verify that $\{P_Y(\cdot \mid  \beta^\top x) : x \in \mathbb{R}^p,\beta \in \mathbb{R}^p\}$ is  a valid probability kernel in this case so Proposition \ref{prop:doob_regression} applies. From Section \ref{sec_setup}, $\{P_Y(\cdot \mid t) : t \in \mathbb{R}\}$ is a probability kernel by assumption. Since the map $(x,\beta) \mapsto \beta^\top x$ is measurable, the composition $\{P_Y(\cdot \mid  \beta^\top x) : x \in \mathbb{R}^p,\beta \in \mathbb{R}^p\}$ remains a valid probability kernel.

\newpage
\section{Proof of Theorem \ref{thm:predictive_invariance_mp}}\label{sec:weak_proof}
In this section, let $(\beta_i)_{i \geq n+1}$ be the sequence under predictive resampling with update rule (\ref{eq:recur_update}) and $P_i(\cdot \mid x) = P_{Y}(\cdot  \mid \beta_i^\top x)$.
To begin, we restate the preservation of the martingale property.
As $\alpha_{i-1}(X_i)$ and $\mathcal{I}_i$ only depend on $(\mathcal{F}_{i-1},X_i)$ and
\[
E\{s(\beta_{i-1}, Y_i \mid X_i)\mid\mathcal F_{i-1},X_i\}=0,
\]
 the sequence $(\beta_i,\mathcal{F}_i)_{i\geq n+1}$ is a martingale. As assumed in the main paper, $D(\beta_n)$ is chosen so that $\mathcal{I}_i$ is invertible for each $i \geq n$. We note that Assumption \ref{as:suff_ident} and the conditions on $w(t)$ are only required in the final step in Lemma \ref{lem:EIn_0}.

 First, we compute the covariance increment, where the role of the scalar correction factor $\alpha_{i-1}(X_i)$ will become clear. 
\medskip
\begin{lemma}[Covariance increment]
\label{lem_increment_cov}
Let $Z_i=\beta_i-\beta_{i-1}$ be the martingale difference for $i\geq n+1$. Then we have
\[
E\left(Z_i\,Z_i^\top\mid \mathcal F_{i-1}, X_i\right)
=
\mathcal{I}_{i-1}^{-1}-\mathcal{I}_i^{-1}.
\]
\end{lemma}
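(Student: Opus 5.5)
Conditional on $(\mathcal F_{i-1},X_i)$, the increment is $Z_i=\alpha_{i-1}(X_i)\,\mathcal I_i^{-1}X_i\,r(Y_i\mid\beta_{i-1}^\top X_i)$. The quantities $\alpha_{i-1}(X_i)$ and $\mathcal I_i$ are measurable with respect to $(\mathcal F_{i-1},X_i)$, because $\mathcal I_i$ involves only $X_{1:i}$ and the fixed $\beta_n$. The only randomness left is therefore in $Y_i\sim P_Y(\cdot\mid\beta_{i-1}^\top X_i)$, and the plan is to reduce the statement to a rank-one matrix identity.

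First I would compute the conditional second moment of the score. By definition of $w$,
\[
E\left\{r(Y_i\mid \beta_{i-1}^\top X_i)^2\mid\mathcal F_{i-1},X_i\right\}=w(\beta_{i-1}^\top X_i),
\]
so
\[
E\left(Z_iZ_i^\top\mid\mathcal F_{i-1},X_i\right)=\alpha_{i-1}(X_i)^2\,w(\beta_{i-1}^\top X_i)\,\mathcal I_i^{-1}X_iX_i^\top\mathcal I_i^{-1}.
\]
Substituting the definition of $\alpha_{i-1}$, the factor $w(\beta_{i-1}^\top X_i)$ cancels. Writing $c=w(\beta_n^\top X_i)$ and $x=X_i$, the expression becomes
\[
c\,(1+c\,x^\top\mathcal I_{i-1}^{-1}x)\,\mathcal I_i^{-1}xx^\top\mathcal I_i^{-1}.
\]
This cancellation is exactly what the correction factor was designed for.

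Next I would use the fact that $\mathcal I_i=\mathcal I_{i-1}+c\,xx^\top$ is a rank-one update, and apply Sherman--Morrison. This gives
\[
\mathcal I_i^{-1}=\mathcal I_{i-1}^{-1}-\frac{c\,\mathcal I_{i-1}^{-1}xx^\top\mathcal I_{i-1}^{-1}}{1+c\,x^\top\mathcal I_{i-1}^{-1}x},
\]
so that $\mathcal I_{i-1}^{-1}-\mathcal I_i^{-1}=c\,uu^\top/(1+c\,x^\top u)$ with $u=\mathcal I_{i-1}^{-1}x$. It also gives
\[
\mathcal I_i^{-1}x=u-\frac{c\,u\,(x^\top u)}{1+c\,x^\top u}=\frac{u}{1+c\,x^\top u}.
\]
Plugging this into the expression above yields
\[
c\,(1+c\,x^\top u)\,\frac{uu^\top}{(1+c\,x^\top u)^2}=\frac{c\,uu^\top}{1+c\,x^\top u},
\]
which is the claimed $\mathcal I_{i-1}^{-1}-\mathcal I_i^{-1}$.

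The computation is routine. The main points to check are that $\mathcal I_{i-1}$ is invertible (assumed via $D(\beta_n)$) and that $1+c\,x^\top u>0$, which holds since $c>0$ and $\mathcal I_{i-1}^{-1}\succ 0$. I expect the only real subtlety to be bookkeeping of indices: $\alpha_{i-1}$ uses $\mathcal I_{i-1}$ while the update uses $\mathcal I_i$, and the Sherman--Morrison identity $\mathcal I_i^{-1}x=u/(1+c\,x^\top u)$ is what reconciles the two.
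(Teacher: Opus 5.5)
Your proposal is correct and follows the paper's proof essentially step for step. You compute the conditional second moment of the score as $w(\beta_{i-1}^\top X_i)$, cancel it against $\alpha_{i-1}(X_i)^2$, and use the Sherman--Morrison consequence $\mathcal{I}_i^{-1}X_i=\mathcal{I}_{i-1}^{-1}X_i/\{1+w(\beta_n^\top X_i)X_i^\top\mathcal{I}_{i-1}^{-1}X_i\}$ to recognize the rank-one update term, with your explicit positivity check on the denominator being a small detail the paper leaves implicit.
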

\begin{proof}
From the recursive update \eqref{eq:recur_update}, we have
\[
Z_i
=
\alpha_{i-1}(X_i)\,
\mathcal{I}_i^{-1}\,X_i\,
r\left(Y_i \mid \beta_{i-1}^\top X_i\right).
\]
Taking conditional expectations and substituting the moment condition gives
\[
E(Z_i\,Z_i^\top\mid \mathcal F_{i-1}, X_i)
=
\alpha_{i-1}(X_i)^2\, w\left(\beta_{i-1}^\top X_i\right)\,
\mathcal{I}_i^{-1}\,
X_i\,
X_i^\top\,
\mathcal{I}_i^{-1}.
\]
Recall that
\[
\mathcal{I}_i= \sum_{j=1}^i w\left(\beta_n^\top X_j\right)\,X_j X_j^\top + D(\beta_n),
\]
where $\beta_n$ and $D(\beta_n)$ are fixed throughout predictive resampling. By the Sherman-Morrison formula \citep{sherman1950adjustment}, $\mathcal{I}_i$ can be decomposed as
\[
\mathcal{I}_i^{-1}
=
\mathcal{I}_{i-1}^{-1}
-
\frac{
w\left(\beta_n^\top X_i\right)\,
\mathcal{I}_{i-1}^{-1}\,
X_i\,X_i^\top\,
\mathcal{I}_{i-1}^{-1}
}{
1+w\left(\beta_n^\top X_i\right)\,X_i^\top\, \mathcal{I}_{i-1}^{-1}\,X_i
}\cdot
\]
which implies
\[
\mathcal{I}_i^{-1}\,X_i
=
\frac{\mathcal{I}_{i-1}^{-1}\,X_i}
{1+w\left(\beta_n^\top X_i\right)\,X_i^\top\, \mathcal{I}_{i-1}^{-1}\,X_i}\cdot
\]
Substituting the above expression into the covariance formula and using the definition of $\alpha_{i-1}(X_i)$, the factor $w(\beta_{i-1}^\top X_i)$ cancels out and yields
\[
E\left(Z_i \,Z_i^\top \mid \mathcal F_{i-1}, X_i\right)
=
\frac{
w\left(\beta_n^\top X_i\right)\,\mathcal{I}_{i-1}^{-1}\,
X_i\,X_i^\top\,
\mathcal{I}_{i-1}^{-1}
}{
1+w\left(\beta_n^\top X_i\right)\,X_i^\top\, \mathcal{I}_{i-1}^{-1}\,X_i
}\cdot
\]
Recognizing the right-hand side as the update term in the Sherman-Morrison formula for $\mathcal{I}_i^{-1}$, we obtain the critical simplification: 
\[
E\left(Z_i Z_i^\top \mid \mathcal F_{i-1}, X_i\right)
=
\mathcal{I}_{i-1}^{-1} - \mathcal{I}_i^{-1}
\]
as desired.
\end{proof}
Next, the covariance increment will allow us to calculate the posterior covariance using a telescoping argument. 
\begin{lemma}[Covariance]\label{lem:post_cov}
    The posterior covariance takes the form
    \begin{align}\label{eq:cov_N}
       \textnormal{Cov}(\beta_N \mid \mathcal{F}_n)
&= \mathcal{I}_n^{-1} - E\left(\mathcal{I}_N^{-1} \mid \mathcal{F}_n\right).
    \end{align}
\end{lemma}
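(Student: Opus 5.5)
The covariance formula will follow from the martingale structure together with Lemma \ref{lem_increment_cov}, via a telescoping sum. Since $\beta_n$ and $\mathcal{I}_n$ are $\mathcal{F}_n$-measurable (indeed constants given the fixed observations), write
\[
\beta_N - \beta_n = \sum_{i=n+1}^N Z_i, \qquad Z_i = \beta_i - \beta_{i-1}.
\]
Because $(\beta_i,\mathcal{F}_i)$ is a martingale, $E(\beta_N \mid \mathcal{F}_n) = \beta_n$. Hence $\textnormal{Cov}(\beta_N \mid \mathcal{F}_n) = E\{(\beta_N-\beta_n)(\beta_N-\beta_n)^\top \mid \mathcal{F}_n\}$.

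First, expand the outer product into diagonal terms $Z_iZ_i^\top$ and cross terms $Z_iZ_j^\top$ with $i<j$. For a cross term, condition on $(\mathcal{F}_{j-1},X_j)$. Then $Z_i$ is measurable with respect to this $\sigma$-field, while
\[
E(Z_j\mid \mathcal{F}_{j-1},X_j) = \alpha_{j-1}(X_j)\,\mathcal{I}_j^{-1}X_j\, E\{r(Y_j\mid \beta_{j-1}^\top X_j)\mid \mathcal{F}_{j-1},X_j\} = 0,
\]
by the score identity $\int r(y\mid t)\,P_Y(dy\mid t)=0$. By the tower property, every cross term therefore vanishes.

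Second, for the diagonal terms, apply Lemma \ref{lem_increment_cov} and the tower property:
\[
E(Z_iZ_i^\top\mid\mathcal{F}_n) = E(\mathcal{I}_{i-1}^{-1}-\mathcal{I}_i^{-1}\mid \mathcal{F}_n).
\]
Summing over $i=n+1,\dots,N$ telescopes to $\mathcal{I}_n^{-1} - E(\mathcal{I}_N^{-1}\mid\mathcal{F}_n)$, using that $\mathcal{I}_n$ is deterministic given $\mathcal{F}_n$.

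The main technical point is integrability, which the tower-property manipulations require. This follows from the same identity. $\mathcal{I}_i$ is positive definite and increasing in the Loewner order, since each step adds a positive semidefinite rank-one term. Hence $0 \preceq \mathcal{I}_i^{-1} \preceq \mathcal{I}_n^{-1}$, so every $\mathcal{I}_i^{-1}$ is bounded.

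By Lemma \ref{lem_increment_cov}, $E(\|Z_i\|^2\mid\mathcal{F}_{i-1},X_i) = \textnormal{tr}(\mathcal{I}_{i-1}^{-1}-\mathcal{I}_i^{-1})$, which is bounded. Each $Z_i$ is therefore in $L^2$, and the products $Z_iZ_j^\top$ are integrable by Cauchy--Schwarz. Consequently conditional expectations may be applied term by term to the finite sum, and the cross-term argument is legitimate.

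This also gives $\sum_i E\|Z_i\|^2 \le \textnormal{tr}(\mathcal{I}_n^{-1})$, the $L^2$ bound used later in Theorem \ref{thm:predictive_invariance_mp}.
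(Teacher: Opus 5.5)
Your proposal is correct and follows the paper's proof. The paper invokes the orthogonal decomposition $\textnormal{Cov}(\beta_N\mid\mathcal{F}_n)=\sum_{i=n+1}^N E(Z_iZ_i^\top\mid\mathcal{F}_n)$, conditions on $(\mathcal{F}_{i-1},X_i)$ to apply Lemma \ref{lem_increment_cov}, and telescopes, exactly as you do; the only difference is that you verify the vanishing cross terms and the square-integrability of the increments explicitly, where the paper treats these as standard.
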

\begin{proof}
    We begin with the well-known orthogonal decomposition for martingales:
    \begin{align*}
               \textnormal{Cov}(\beta_N \mid \mathcal{F}_n)  = \sum_{i = n+1}^N E\left(Z_i Z_i^\top \mid \mathcal{F}_n\right).
    \end{align*}
    Next,  express each term in the sum with the tower property as
    \begin{align*}
        E\left(Z_i Z_i^\top \mid \mathcal{F}_n\right) = E\left\{E\left(Z_i Z_i^\top \mid \mathcal{F}_{i-1}, X_i\right) \mid  \mathcal{F}_n \right\}
    \end{align*}
    which then gives
    \begin{align*}
        \textnormal{Cov}(\beta_N \mid \mathcal{F}_n)  = E\left\{\sum_{i = n+1}^N E\left(Z_i Z_i^\top \mid \mathcal{F}_{i-1}, X_i\right)  \mid \mathcal{F}_n\right\}.
    \end{align*}
    From Lemma \ref{lem_increment_cov}, a telescoping argument gives
    \begin{align*}
        \sum_{i = n+1}^N E\left(Z_i Z_i^\top \mid \mathcal{F}_{i-1}, X_i\right) = \mathcal{I}_n^{-1} - \mathcal{I}_N^{-1}.
    \end{align*}
    Finally, since $\mathcal{I}_n^{-1}$ is fixed (as it only depends on the fixed $X_{1:n}$ and $\beta_n$), we have the desired result. Note that since the covariance is non-negative, we must have $ E\left(\mathcal{I}_N^{-1} \mid \mathcal{F}_n\right) \preceq \mathcal{I}_n^{-1}$.
\end{proof}

We can already show the martingale is bounded in $L^2$.
\begin{lemma}[Bounded in $L^2$]
\label{lem_L2}
The martingale $(\beta_i,\mathcal{F}_i)_{i\geq n+1}$ is bounded in $L^2$ and 
thus converges a.s. and in $L^2$ to a limit $\beta_\infty$.
\end{lemma}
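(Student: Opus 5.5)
The plan is to read off a uniform second-moment bound from Lemma \ref{lem:post_cov} and then apply the $L^2$ martingale convergence theorem.

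\emph{Step 1 (a uniform bound on the variance).} By the martingale orthogonal decomposition used in Lemma \ref{lem:post_cov}, $\beta_N-\beta_n=\sum_{i=n+1}^N Z_i$, and since $\beta_n$ is $\mathcal{F}_n$-measurable,
\[
E\left(\|\beta_N-\beta_n\|^2\mid\mathcal F_n\right)=\operatorname{tr}\,\textnormal{Cov}(\beta_N\mid\mathcal F_n).
\]
Lemma \ref{lem:post_cov} gives
\[
\operatorname{tr}\,\textnormal{Cov}(\beta_N\mid\mathcal F_n)=\operatorname{tr}\,\mathcal I_n^{-1}-E\left\{\operatorname{tr}\left(\mathcal I_N^{-1}\right)\mid\mathcal F_n\right\}.
\]

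\emph{Step 2 (drop the subtracted term).} By the choice of $D(\beta_n)$, each $\mathcal I_N$ is positive definite. Hence $\mathcal I_N^{-1}\succ 0$ and $\operatorname{tr}(\mathcal I_N^{-1})\ge 0$, which yields
\[
E\left(\|\beta_N\|^2\mid\mathcal F_n\right)\le 2\|\beta_n\|^2+2\operatorname{tr}\left(\mathcal I_n^{-1}\right)
\]
for every $N$. The right-hand side is deterministic given the fixed observations, since $\beta_n$ and $\mathcal I_n$ depend only on $(X_{1:n},Y_{1:n})$. So $\sup_N E(\|\beta_N\|^2\mid\mathcal F_n)<\infty$, which is exactly boundedness in $L^2$. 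No identifiability assumption and no condition on $w$ are needed here.

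\emph{Step 3 (convergence).} Apply the $L^2$ martingale convergence theorem componentwise to each coordinate of $\beta_i$. Each coordinate is a real martingale bounded in $L^2$, so it converges almost surely and in $L^2$. This gives a limit $\beta_\infty$ with $\beta_i\to\beta_\infty$ almost surely and in $L^2$.

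\emph{Main obstacle.} The only delicate point is square-integrability of the increments, needed so that the orthogonal decomposition and Lemma \ref{lem:post_cov} are legitimate. Lemma \ref{lem_increment_cov} gives a finite conditional second moment of $Z_i$ given $(\mathcal F_{i-1},X_i)$. Its expectation is then the expectation of a positive semidefinite matrix that is bounded above by $\mathcal I_{i-1}^{-1}$. I would make this rigorous by induction on $i$. The conditional second moment of $Z_i$ given $(\mathcal F_{i-1},X_i)$ is $\mathcal I_{i-1}^{-1}-\mathcal I_i^{-1}\preceq\mathcal I_{i-1}^{-1}$. Because $\mathcal I_{i-1}\succeq\mathcal I_n$, we get $\mathcal I_{i-1}^{-1}\preceq\mathcal I_n^{-1}$, so every increment is square-integrable with $E(\|Z_i\|^2\mid\mathcal F_n)\le\operatorname{tr}\,\mathcal I_n^{-1}$. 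This justifies the telescoping and the bound in Step 2.
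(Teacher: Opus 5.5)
Your proposal is correct and follows the paper's proof. Both arguments read off $E(\|\beta_N-\beta_n\|^2\mid\mathcal F_n)\le\operatorname{tr}(\mathcal I_n^{-1})$ from Lemma \ref{lem:post_cov} by dropping the nonnegative term $E\{\operatorname{tr}(\mathcal I_N^{-1})\mid\mathcal F_n\}$, and then invoke $L^2$ martingale convergence. The differences are cosmetic: the paper uses orthogonality of the increments to get the sharper bound $\|\beta_n\|^2+\operatorname{tr}(\mathcal I_n^{-1})$ in place of your factor of $2$, and it leaves implicit the square-integrability of each $Z_i$ that you check explicitly (no induction is needed there, since $\mathcal I_{i-1}^{-1}-\mathcal I_i^{-1}\preceq\mathcal I_n^{-1}$ holds directly for every $i$).
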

\begin{proof}
From Lemma \ref{lem:post_cov}, we have
\begin{align*}
    E\!\left(
\|\beta_N-\beta_n\|^2
\mid
\mathcal F_n\right) &= \text{tr}\left( \mathcal{I}_n^{-1}\right) - E\left\{\text{tr}\left( \mathcal{I}_N^{-1}\right)\mid \mathcal{F}_n \right\}\\
&\leq \text{tr}\left( \mathcal{I}_n^{-1}\right)
\end{align*}
since $\mathcal{I}_N^{-1}$ is positive semidefinite with non-negative trace. 
It then follows that
\[
\sup_{N\geq n+1}
E\!\left(
\|\beta_N\|^2
\mid
\mathcal F_n
\right)
\le
\|\beta_n\|^2
+
\,\operatorname{tr}\!\left(\mathcal{I}_n^{-1}\right)
<\infty
\]
where we have used the orthogonality of martingale increments in $L^2$.
\end{proof}

Finally, we handle the term $E(\mathcal{I}_N^{-1} \mid \mathcal{F}_n)$. 
\begin{lemma}\label{lem:EIn_0}
Suppose Assumption \ref{as:suff_ident} holds, and $w(t)$ is continuous and strictly positive for all $t \in \mathbb{R}$. We then have
\begin{align*}
    \lim_{N \to \infty} E(\mathcal{I}_N^{-1} \mid \mathcal{F}_n)  = 0. 
\end{align*}
\end{lemma}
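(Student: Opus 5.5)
Under predictive resampling the imputed covariates $X_{n+1},X_{n+2},\dots$ are i.i.d.\ from $P_X$, and $\mathcal{I}_N$ depends only on the fixed quantities $(X_{1:n},\beta_n,D(\beta_n))$ together with $X_{n+1:N}$. It does not involve the imputed responses or the evolving $\beta_{i}$. The plan is therefore a strong law of large numbers for $\mathcal{I}_N$, followed by dominated convergence applied to the inverse.

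First, define $g(x)=w(\beta_n^\top x)\,xx^\top$, where $\beta_n$ is fixed. Since $w$ is continuous and strictly positive, $g(X)\succeq 0$. Let $A=E_{P_X}\{g(X)\}$, which may have infinite entries if $w$ grows. We need $A$ to be nonsingular in the sense that $v^\top g(X)v=w(\beta_n^\top X)(v^\top X)^2$ has positive mean for every $v\neq 0$. By Assumption \ref{as:suff_ident}, $P_X(v^\top X\neq 0)>0$, and $w>0$ everywhere, so $E\{v^\top g(X) v\}>0$ (possibly $+\infty$).

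To avoid integrability issues, truncate. Pick a compact ball $K$ with $E_{P_X}\{(v^\top X)^2 1_K(X)\}>0$ uniformly. Concretely, $E\{XX^\top 1_{\|X\|\le M}\}\to E(XX^\top)\succ 0$, so for large $M$ the truncated second-moment matrix is positive definite. On $K=\{\|x\|\le M\}$, continuity of $w$ gives $w(\beta_n^\top x)\ge c>0$, hence $g(x)\succeq c\,xx^\top 1_K(x)$. Then
\[
\mathcal{I}_N \succeq c\sum_{j=n+1}^N X_jX_j^\top 1_K(X_j)=:c\,S_N ,
\]
and the strong law gives $N^{-1}S_N\to B:=E\{XX^\top 1_K(X)\}\succ 0$ almost surely. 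Consequently $\lambda_{\min}(\mathcal{I}_N)\ge c\,\lambda_{\min}(S_N)\to\infty$ almost surely, so $\mathcal{I}_N^{-1}\to 0$ almost surely.

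Finally, pass to the limit in expectation. Since $\mathcal{I}_N\succeq \mathcal{I}_n\succ 0$, where $\mathcal{I}_n$ is deterministic given $\mathcal{F}_n$ and invertible by the choice of $D(\beta_n)$, we have $0\preceq\mathcal{I}_N^{-1}\preceq \mathcal{I}_n^{-1}$. Every entry of $\mathcal{I}_N^{-1}$ is therefore bounded by $\operatorname{tr}(\mathcal{I}_n^{-1})$, and bounded convergence gives $E(\mathcal{I}_N^{-1}\mid\mathcal{F}_n)\to 0$.

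The main obstacle is the lower bound on $\lambda_{\min}$ when $w$ may decay in the tails or $XX^\top$ has heavy tails, which is why the argument truncates to a compact set rather than applying the strong law to $g(X)$ directly. The only delicate step is justifying $\lambda_{\min}(N^{-1}S_N)\to\lambda_{\min}(B)$. This follows from entrywise almost sure convergence in finite dimension together with continuity of eigenvalues, and $S_N$ has integrable entries because of the truncation. Monotonicity $\mathcal{I}_{N+1}\succeq\mathcal{I}_N$ also makes $\mathcal{I}_N^{-1}$ monotonically decreasing, giving an alternative monotone-convergence route.
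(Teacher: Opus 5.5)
Your proposal is correct and follows essentially the same route as the paper. Both arguments truncate to a ball on which the truncated second-moment matrix is positive definite and $w(\beta_n^\top x)$ is bounded below, apply the strong law to force $\lambda_{\min}(\mathcal{I}_N)\to\infty$ almost surely, and conclude by dominated convergence using $0\preceq\mathcal{I}_N^{-1}\preceq\mathcal{I}_n^{-1}$; the only cosmetic difference is that you pull out the constant $c$ before applying the strong law, whereas the paper applies it to $w(\beta_n^\top X)XX^\top\mathbbm{1}(X\in B_r)$ directly.
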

\begin{proof}
    In the following, we consider the asymptotic behavior of $\mathcal{I}_N$, and highlight again that the initial dataset $(X_{1:n}, Y_{1:n})$ and the resulting estimator $\beta_n$ are fixed. From Assumption \ref{as:suff_ident}, $E_{P_X}\left(XX^\top\right) \succ 0$, so there must exist a sufficiently large closed ball $B_r = \{x \in \mathbb{R}^p: \|x\| \leq r\}$ such that
    \begin{align*}
        E_{P_X}\left\{XX^\top \mathbbm{1}\left(X \in B_r\right)\right\} \succ 0.
    \end{align*}
    To see the existence of such an $r$, consider taking  $r\to \infty$, where dominated convergence gives $E_{P_X}\left\{XX^\top \mathbbm{1}\left(X \in B_r\right)\right\}  \to E_{P_X}\left(XX^\top\right)$. Continuity of eigenvalues then gives the existence.

    Choose $r$ sufficiently large as discussed, and consider the matrix
    \begin{align*}
        \mathcal{I}_{B_r} = E_{P_X}\left\{w(\beta_n^\top X)\, XX^\top \mathbbm{1}\left(X \in B_r\right)\right\}.
    \end{align*}
    From the continuity of $w(t)$ and $0 < w(t) < \infty$, there exist $c_l > 0$ and $c_u < \infty$ such that
    \begin{align*}
    c_l \leq w(\beta_n^\top x) \leq c_u
    \end{align*}
    for all $x \in B_r$ since $\beta_n^\top x$ lies in a compact set. We then immediately have that
    \begin{align*}
        c_l E_{P_X}\left\{XX^\top \mathbbm{1}\left(X \in B_r\right)\right\} \preceq\mathcal{I}_{B_r} \preceq c_u E_{P_X}\left\{XX^\top\right\},
    \end{align*}
    so $\mathcal{I}_{B_r}$ is finite and positive definite. 

    Now let
\[
A_i=w\left(\beta_n^\top X_i\right)\,X_i\,X_i^\top \mathbbm{1}\left(X_i \in B_r\right)
\]
which are independent and identically distributed random matrices with mean $E_{P_X}(A_i) = \mathcal{I}_{B_r}$. Next, 
\[
\mathcal{I}_N
\succeq
\mathcal{I}_n+\sum_{i=n+1}^N A_i \quad \text{a.s.}
\]
since $w\left(\beta_n^\top X_i\right)\,X_i\,X_i^\top \mathbbm{1}\left(X_i \in B_r^c\right)$ is positive semidefinite. Hence
\[
\frac{1}{N}\,\mathcal{I}_N \succeq \frac{\mathcal{I}_n}{N} + \frac{N-n}{N}
\left(\frac{1}{N-n}\sum_{i=n+1}^N A_i \right)\quad \text{a.s.}
\]
Next, applying the strong law of large numbers entry-wise gives
\[
\frac{1}{N-n}\sum_{i=n+1}^N A_i \to \mathcal{I}_{B_r} \quad \text{a.s.}
\]
 Since $N^{-1}{\mathcal{I}_n}\to0$ and ${(N-n)}/{N}\to1$ as $N\to\infty$, and $\mathcal{I}_{B_r} \succ 0$, Weyl's inequality gives 
\[
\liminf_{N\to \infty} \lambda_{\text{min}}\left(N^{-1}\,\mathcal{I}_N\right) >0  \quad \text{a.s.}
\]
Since $\mathcal{I}_N$ is invertible for each $N$, the above implies 
$$\|\mathcal{I}_N^{-1} \|_2= N^{-1}\,  \|(N^{-1}\,\mathcal{I}_N)^{-1}\|_2 \to 0\quad \text{a.s.}$$ 
where $\|\cdot \|_2$ is the spectral norm. Note that for all $N \geq n+1$, we have $\mathcal{I}_N \succeq \mathcal{I}_n$ a.s. which implies
$$\|\mathcal{I}_N^{-1}\|_2 \le \|\mathcal{I}_n^{-1}\|_2 \quad \text{a.s.}$$
 Since $\mathcal{I}_n$ is non-singular, $\|\mathcal{I}_n^{-1}\|_2$ is a valid integrable dominator. Therefore, by dominated convergence, we have
$$\lim_{N \to \infty} E\!\left( \mathcal{I}_N^{-1} \mid \mathcal{F}_n \right) = E\!\left( \lim_{N \to \infty} \mathcal{I}_N^{-1} \mid \mathcal{F}_n \right) = 0$$
as desired.
\end{proof}

Theorem \ref{thm:predictive_invariance_mp} can now be shown. As the martingale is uniformly integrable, we immediately have
\begin{align*}
    E(\beta_\infty \mid \mathcal{F}_n) = \beta_n,
\end{align*}
which is the posterior mean part of the result. Since $\beta_i \to \beta_\infty$ in $L^2$, the covariance of $\beta_i$ converges to the covariance of $\beta_\infty$.
Taking the limit of (\ref{eq:cov_N}) gives
\[
\operatorname{Cov}(\beta_\infty\mid\mathcal F_n)
=
\lim_{N\to\infty}
\operatorname{Cov}(\beta_N\mid\mathcal F_n)
=
\mathcal{I}_n^{-1}.
\]
This limit is then independent of the predictive design distribution $P_X$.

\section{Proof of Theorem \ref{thm:as_invar}}\label{sec:proof_asym}
\subsection{General lemmas}
We begin with some useful lemmas for finite martingale difference arrays. Let $(Z_{ni}, \mathcal{F}_{ni})_{1\le i\le N_n, n \geq 1}$ be a general square-integrable finite martingale difference array, where $Z_{ni}$ is a scalar. As usual, we have
\begin{align}\label{eq:var_def}
    s_{n}^2 = \sum_{i = 1}^{N_n} E\left(Z_{ni}^2\right), \quad V_{n}^2 = \sum_{i = 1}^{N_{n}} E\left(Z_{ni}^2 \mid \mathcal{F}_{n,i-1}\right).
\end{align}
The first is an obvious connection between the conditional Lyapunov and Lindeberg conditions (see \cite{alj2014conditions}).  
\begin{lemma}[Conditional Lyapunov condition]\label{lem:condit_lyapunov}
Suppose that the following condition holds true for some $\delta > 0$:
\begin{align*}
    s_{n}^{-(2 + \delta)}\, \sum_{i = 1}^{N_n} E\left(|Z_{ni}|^{2 + \delta} \mid \mathcal{F}_{n,i-1} \right)\to 0 \quad \textnormal{in probability.}
\end{align*}
 Then the conditional Lindeberg condition holds true, that is
\begin{align*}
    s_{n}^{-2}\, \sum_{i = 1}^{N_{n}} E\left\{Z_{ni}^2 \mathbbm{1}\left(Z_{ni}^2 > \varepsilon^2\, s_{n}^2 \right) \mid \mathcal{F}_{n,i-1}\right\} \to 0 \quad \textnormal{in probability}
\end{align*}
 for all $\varepsilon > 0$.
\end{lemma}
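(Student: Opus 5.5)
The approach is the standard pointwise comparison: on the event $\{Z_{ni}^2 > \varepsilon^2 s_n^2\}$ we have $|Z_{ni}| > \varepsilon s_n$, and hence
\begin{align*}
Z_{ni}^2 \mathbbm{1}\left(Z_{ni}^2 > \varepsilon^2 s_n^2\right) \leq Z_{ni}^2 \left(\frac{|Z_{ni}|}{\varepsilon s_n}\right)^{\delta} = \varepsilon^{-\delta} s_n^{-\delta}\, |Z_{ni}|^{2+\delta}.
\end{align*}
This inequality holds almost surely for every $i$ and $n$, and it holds trivially off the event since the left-hand side is then zero. Here $s_n$ is a deterministic constant, being an unconditional sum of expectations, so it passes through the conditional expectation. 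We assume $s_n > 0$, which is implicit in the statement since $s_n^{-(2+\delta)}$ must be defined.

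The steps are as follows. First, take conditional expectations given $\mathcal{F}_{n,i-1}$ of the displayed inequality; monotonicity of conditional expectation preserves it almost surely. Second, sum over $1 \leq i \leq N_n$ and multiply by $s_n^{-2}$, giving
\begin{align*}
s_n^{-2}\sum_{i=1}^{N_n} E\left\{Z_{ni}^2 \mathbbm{1}\left(Z_{ni}^2 > \varepsilon^2 s_n^2\right)\mid \mathcal{F}_{n,i-1}\right\} \leq \varepsilon^{-\delta}\, s_n^{-(2+\delta)} \sum_{i=1}^{N_n} E\left(|Z_{ni}|^{2+\delta}\mid\mathcal{F}_{n,i-1}\right)
\end{align*}
almost surely. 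Third, the left-hand side is nonnegative and, for fixed $\varepsilon > 0$, is dominated by $\varepsilon^{-\delta}$ times a quantity that tends to zero in probability by hypothesis. Therefore, for any $\eta > 0$, the probability that the left-hand side exceeds $\eta$ is at most the probability that the Lyapunov sum exceeds $\eta\varepsilon^{\delta}$, which tends to $0$.

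There is no real obstacle here. The only points needing care are measure-theoretic. One is that $|Z_{ni}|^{2+\delta}$ may fail to be integrable; this is harmless because conditional expectations of nonnegative variables are well defined in $[0,\infty]$, and the hypothesis forces them to be finite with probability tending to one. The other is the role of the deterministic normalization $s_n$, which is what allows it to be pulled out of the conditional expectations.
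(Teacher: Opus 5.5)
Your proposal is correct and follows the paper's argument: on $\{Z_{ni}^2 > \varepsilon^2 s_n^2\}$ you bound the indicator by $|Z_{ni}/(\varepsilon s_n)|^{\delta}$, take conditional expectations, sum over $i$, and dominate the Lindeberg sum almost surely by $\varepsilon^{-\delta}$ times the Lyapunov sum. Your remarks that $s_n$ is deterministic and that conditional expectations of nonnegative variables are defined in $[0,\infty]$ are valid details that the paper leaves implicit.
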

\begin{proof}
Following the usual argument, $Z_{ni}^2 > \varepsilon^2\, s_{n}^2$ implies 
\begin{align*}
    \left|\frac{Z_{ni}}{\varepsilon\, s_{n}}\right|^{\delta} > 1.
\end{align*}
We thus have 
\begin{align*}
    \mathbbm{1}\left(Z_{ni}^2 > \varepsilon^2\, s_{n}^2 \right)  \leq\left|\frac{Z_{ni}}{\varepsilon s_{n}}\right|^{\delta}.
\end{align*}
We thus have
\begin{align*}
    {s}_{n}^{-2}\,\sum_{i = 1}^{N_{n}}\, E\left\{Z_{ni}^{2}\, \mathbbm{1}\left(Z_{ni}^2 > \varepsilon^2 \,s_{n}^2 \right) \mid \mathcal{F}_{n,i-1}\right\} \leq \frac{s_{n}^{-(2+\delta)}}{\varepsilon^{\delta}}\,\sum_{i = 1}^{N_{n}}\,E\left(|Z_{ni}|^{2 + \delta} \mid \mathcal{F}_{n,i - 1}\right) \quad \text{a.s.}
\end{align*}
giving the result.
\end{proof}

We now consider a sufficient condition for the Lyapunov condition with $\delta = 2$.
\begin{lemma}\label{lem:condit_lyapunov_suff}
    Let $\widetilde{Z}_{ni} = (n+i)\,Z_{ni}$, and suppose $s_{n}^{-2} = O(n)$. Let
    \begin{align*}
        W_{n} = \sup_{1\le i\le N_{n}} E\left(\widetilde{Z}_{ni}^{4} \mid \mathcal{F}_{n,i-1} \right).
    \end{align*}
If the sequence $(W_{n})_{n \geq 1}$ is bounded in probability, then the conditional Lyapunov condition holds for $\delta = 2$.
\end{lemma}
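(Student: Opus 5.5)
The plan is to bound the conditional fourth moments directly by $W_n$ and then use $s_n^{-2}=O(n)$ to show that the Lyapunov ratio with $\delta = 2$ is $O_P(n^{-1})$. Once this is done, Lemma \ref{lem:condit_lyapunov} is not needed for the claim itself, but it immediately gives the conditional Lindeberg condition afterwards.

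First, since $Z_{ni} = (n+i)^{-1}\widetilde{Z}_{ni}$ and $(n+i)^{-4}$ is deterministic, each summand satisfies
\begin{align*}
E\left(Z_{ni}^4 \mid \mathcal{F}_{n,i-1}\right) = (n+i)^{-4}\, E\left(\widetilde{Z}_{ni}^4 \mid \mathcal{F}_{n,i-1}\right) \le (n+i)^{-4}\, W_n \quad \text{a.s.}
\end{align*}

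Summing over $i$ and comparing with an integral gives
\begin{align*}
\sum_{i=1}^{N_n} E\left(Z_{ni}^4 \mid \mathcal{F}_{n,i-1}\right) \le W_n \sum_{i=1}^{\infty} (n+i)^{-4} \le W_n \int_n^\infty t^{-4}\,dt = \frac{W_n}{3n^3}.
\end{align*}
This bound is uniform in $N_n$, so it covers the case where the truncation $N_n$ grows arbitrarily fast with $n$.

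Next, because $s_n^{-2} = O(n)$, there is a constant $C$ with $s_n^{-4} \le C^2 n^2$ for all large $n$. The Lyapunov ratio is therefore bounded by
\begin{align*}
s_n^{-4}\sum_{i=1}^{N_n} E\left(Z_{ni}^4 \mid \mathcal{F}_{n,i-1}\right) \le \frac{C^2}{3}\,\frac{W_n}{n}.
\end{align*}
Since $W_n = O_P(1)$, the right-hand side is $O_P(n^{-1})$ and hence tends to zero in probability. This is the conditional Lyapunov condition of Lemma \ref{lem:condit_lyapunov} with $\delta=2$.

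The argument has no substantive obstacle. The only care needed is to keep $s_n$ deterministic, since $s_n^2$ is an unconditional sum of second moments as defined in \eqref{eq:var_def}. One also uses the fact that a product of an $O_P(1)$ sequence with a deterministic null sequence converges to zero in probability. The real work is deferred to the later application, where $\sup_i E(\widetilde{Z}_{ni}^4\mid\mathcal{F}_{n,i-1})$ must be shown to be bounded in probability for the specific update \eqref{eq:recur_update}.
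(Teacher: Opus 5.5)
Your proposal is correct and follows the paper's proof: you bound each conditional fourth moment by $(n+i)^{-4}W_n$, then combine $s_n^{-4}=O(n^2)$ with $n^2\sum_{i\ge1}(n+i)^{-4}=O(n^{-1})$ and $W_n=O_P(1)$. Your integral comparison $\sum_{i\ge1}(n+i)^{-4}\le 1/(3n^3)$ simply makes explicit the $O(n^{-1})$ rate that the paper states without derivation.
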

\begin{proof}
 For $\delta = 2$, the conditional Lyapunov term is proportional to
    \begin{align*}
      L_{n} &= n^2 \,\sum_{i = 1}^{N_{n}} (n+i)^{-4} \,E\left(\widetilde{Z}_{ni}^4 \mid \mathcal{F}_{n,i-1}\right).
    \end{align*}
    We clearly have
    \begin{align*}
    L_n\leq W_n\,n^2\,\sum_{i=1}^{\infty}(n+i)^{-4}\quad \text{a.s.}
    \end{align*}
    Since $n^2\,\sum_{i=1}^{\infty}(n+i)^{-4}=O(n^{-1}),$ and $W_n=O_P(1)$, we have $L_{n} \to 0$ in probability.
\end{proof}

For completion, we also state a specialized form of the martingale central limit theorem used here, which is based on \citet[Theorem 6.1]{hausler2015stable}.
\begin{theorem}\label{th:mart_CLT}
   Let $(Z_{ni}, \mathcal{F}_{ni})_{1\le i\le N_n,\, n\ge1}$ be a square-integrable finite martingale difference array, and let $s_{n}^2$ and $V_{n}^2$ be defined according to (\ref{eq:var_def}). Assume that 
    \begin{align*}
    s_{n}^{-4}\, \sum_{i = 1}^{N_n} E\left(Z_{ni}^{4} \mid \mathcal{F}_{n,i-1} \right)\to 0 \quad \textnormal{in probability},
\end{align*}
and 
\begin{align*}
    s_{n}^{-2}V_{n}^2 \to 1 \quad \textnormal{in probability}.
\end{align*}
Under the above, we have
\begin{align*}
    s_{n}^{-1}\sum_{i = 1}^{N_n} Z_{ni} \to \mathcal{N}(0,1)\quad \textnormal{in distribution}.
\end{align*}
\end{theorem}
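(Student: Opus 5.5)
We will reduce the statement to the classical martingale central limit theorem for arrays by normalizing. Set $\xi_{ni} = s_n^{-1} Z_{ni}$ for $1 \le i \le N_n$. Then $(\xi_{ni},\mathcal{F}_{ni})$ is again a square-integrable martingale difference array, with $E(\xi_{ni}\mid\mathcal F_{n,i-1})=0$. Its conditional variance is $\sum_{i=1}^{N_n} E(\xi_{ni}^2 \mid \mathcal{F}_{n,i-1}) = s_n^{-2} V_n^2$, and the target is $\sum_i \xi_{ni} \to \mathcal{N}(0,1)$ in distribution. In this normalization, Theorem~6.1 of \citet{hausler2015stable} (equivalently Corollary~3.1 of Hall and Heyde) needs two hypotheses. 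The first is the conditional variance condition $\sum_i E(\xi_{ni}^2\mid\mathcal F_{n,i-1}) \to \eta^2$ in probability. The second is the conditional Lindeberg condition $\sum_i E\{\xi_{ni}^2 \mathbbm{1}(|\xi_{ni}|>\varepsilon)\mid\mathcal F_{n,i-1}\}\to 0$ in probability for every $\varepsilon>0$.

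The conditional variance condition holds directly, with $\eta^2 = 1$, by the second hypothesis $s_n^{-2}V_n^2 \to 1$. For the Lindeberg condition we invoke Lemma~\ref{lem:condit_lyapunov} with $\delta = 2$. Its hypothesis, $s_n^{-4}\sum_i E(Z_{ni}^4\mid\mathcal F_{n,i-1}) \to 0$ in probability, is exactly the first assumption of the theorem. Its conclusion, $s_n^{-2}\sum_i E\{Z_{ni}^2\mathbbm{1}(Z_{ni}^2>\varepsilon^2 s_n^2)\mid\mathcal F_{n,i-1}\}\to 0$, rewritten in terms of $\xi_{ni}$, is precisely the conditional Lindeberg condition above. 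The cited theorem then yields $\sum_i \xi_{ni} \to \mathcal{N}(0,\eta^2)$ with $\eta^2 = 1$, which is the claim.

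The only point requiring care is the compatibility of the hypotheses with the precise form of the cited theorem. In general the Häusler–Luschgy result needs either nested filtrations ($\mathcal{F}_{n,i}\subseteq\mathcal{F}_{n+1,i}$) or a limiting variance $\eta^2$ that is measurable with respect to a suitable limiting $\sigma$-field. Because our limit $\eta^2 = 1$ is deterministic, it is trivially measurable with respect to every $\sigma$-field. The nesting requirement can therefore be dropped, as in the Hall–Heyde corollary, where the constant-limit case needs no nesting. I expect this bookkeeping, rather than any estimate, to be the main obstacle.

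Two further details should be checked in writing. First, the normalization by the deterministic constant $s_n$ preserves the martingale difference property and square integrability. Second, if one wants stable convergence, it holds automatically, but only convergence in distribution is needed, and mixing with a constant variance reduces to ordinary weak convergence.
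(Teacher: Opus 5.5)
Your proposal is correct and follows the paper's proof step for step. You normalize by $s_n$, apply Theorem~6.1 of \citet{hausler2015stable}, derive the conditional Lindeberg condition from the fourth-moment hypothesis via Lemma~\ref{lem:condit_lyapunov} with $\delta = 2$, and note that the constant limit $\eta^2 = 1$ makes the measurability/nesting requirement automatic, which is exactly the point the paper handles by citing part (b) of that theorem.
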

\begin{proof}
We apply \citet[Theorem 6.1]{hausler2015stable} with the martingale difference as $s_{n}^{-1}Z_{ni}$, then use Lemma \ref{lem:condit_lyapunov} to replace the conditional Lindeberg condition with the conditional Lyapunov condition. Finally, as $\eta^2 = 1$ is a constant, its measurability is automatic; see \citet[Theorem 6.1(b)]{hausler2015stable}.
\end{proof}

\subsection{Setup and notation}
For the proofs in this section, we will tweak our notation as follows compared to the rest of the Appendix. For each $n$ and $i\ge1$, we consider
\begin{align*}
    &{X}_{i} \sim P_X, \, \quad ({Y}_{ni} \mid {X}_{i}) \sim p_{\beta_{n,i-1}}(\cdot \mid {X}_{i}),\\
    \beta_{ni} &= \beta_{n,i-1} + \alpha_{n,i-1}({X}_{i})\, \mathcal{I}_{ni}^{-1}\,X_i \,  r\left(Y_{ni} \mid \beta_{n,i-1}^\top X_i\right)
\end{align*}
where
\begin{align}\label{eq:I_ni}
    \mathcal{I}_{ni} &= \mathcal{I}_n + \Sigma_{ni}\\
    \mathcal{I}_n = \sum_{j = 1}^n w\left(\beta_n^\top X^*_j\right)\, X_j^* {X_j^*}^\top &+ D(\beta_n), \quad 
    \Sigma_{ni} = \sum_{j = 1}^i w\left(\beta_n^
    \top X_j\right)\, X_j X_j^\top,\nonumber
\end{align}
and
\begin{align*}
     \alpha_{ni}(x) &=
\left[
\frac{
w\left(\beta_n^\top\, x\right)
}{
w\left( \beta_{ni}^\top\, x\right)
}\,\left\{1 + w\left(\beta_n^\top\, x\right)\, x^\top\, \mathcal{I}_{ni}^{-1}\, x\right\}
\right]^{1/2}\cdot
\end{align*}
The index $i$ is now indicating the imputed observations \textit{after} index $n$, which is reflected in the notation $Y_{ni}$, $\beta_{ni}$, $\mathcal{I}_{ni}$ and $\alpha_{ni}(x)$.   Note that for each \(n\), we will set \(\beta_{n0}=\beta_n\), \(\mathcal I_{n0}=\mathcal I_n\) as the initial estimates. We will also write the real observations as $(X_{1:n}^*, Y_{1:n}^*)$ to differentiate it from the imputed observations. Henceforth, we define the filtration for our array as
\begin{align*}
    \mathcal{F}_{ni} = \sigma\left\{(X_1,Y_{n1}),\ldots, (X_i, Y_{ni})\right\}
\end{align*}
where $(X_{1:n}^*, Y_{1:n}^*)$ are treated as deterministic. Notice that we use the same sequence ${X}_i$ for each row in the martingale array for convenience in the proof. 

Let $(\beta_{ni}, \mathcal{F}_{ni})_{i \geq 1, n \geq 1}$ denote the infinite martingale array.
Since Theorem \ref{thm:predictive_invariance_mp} holds for each $n\geq 1$ under our assumptions,  each row $(\beta_{ni}, \mathcal{F}_{ni})_{i \geq 1}$ is a martingale bounded in $L^2$ and has an almost sure limit $\beta_{n\infty}$.  In order to apply the martingale central limit theorems, we will project the vectors to scalars and utilize the Cram\'{e}r-Wold theorem. For a fixed vector $u \in \mathbb{R}^p$, define the scalar
\begin{align}\label{eq:Zni}
    Z_{ni} =u^\top  \left(\beta_{ni} - \beta_{n,i-1}\right)
\end{align}
where we omit the dependence on $u$ for notational simplicity. We will return to the multivariate case in Section \ref{sec:trunc}. 

Although we are interested in the infinite limit \( \beta_{n\infty} = \lim_{i \to \infty}\beta_{ni}\), unlike in \cite{fong2026asymptotics}, it will be convenient to first truncate each row $n$ at a finite index \(N_n > n\), where we will assume  
$$N_n/n \to \infty.$$
We will then show that the difference between the finite and infinite martingale arrays is negligible in Section \ref{sec:trunc}. For the remainder of this proof, we will study the specific martingale difference array $(Z_{ni}, \mathcal{F}_{ni})_{1\leq i \leq N_n, n \geq 1}$ as defined above.

\subsection{Proof outline}
We will verify the assumptions required for Theorem \ref{th:mart_CLT}. Unlike \citet{fong2026asymptotics}, who utilize an {unconditional} Lindeberg/Lyapunov condition, here we will require the {conditional} Lindeberg/Lyapunov condition so we can use the strong law of large numbers to handle the term $\mathcal{I}_{ni}$. The verification of the variance condition also requires a delicate application of \citet[Proposition 6.16]{hausler2015stable} to bridge an enlarged filtration to the standard filtration, in order to leverage the telescoping form of the variance as given by Lemma \ref{lem_increment_cov}. 

As in \citet{fong2026asymptotics}, we will assume a deterministic condition first, before generalizing to the a.s. case in Section \ref{sec:trunc}. Below, we have the deterministic version of Assumption \ref{as:In_conv}.
\begin{assumption}\label{as:In_conv_det}
Let the following hold deterministically: $\mathcal{I}_n \succ 0$ for all $n$, $\beta_n \to \beta^*$, and $n^{-1}\,\mathcal{I}_n \to \mathcal{I}_{P_X^*} \succ 0$.
\end{assumption}
We highlight here that the limits $\beta^*$ and $\mathcal{I}_{P_X^*}$ are arbitrary and need not necessarily correspond to the true values. We explicitly define them here as this will be useful to refer to later on. 

\subsection{Intermediate lemmas}
In the proofs to come, we will require control over $w(t)$ and $\mathcal{I}_{ni}^{-1}$. We give two useful lemmas here.
\begin{lemma}\label{lem:compact_wt}
    Under Assumptions \ref{as:In_conv_det} and \ref{as:lyapunov_hard}, there exist constants $c_l > 0$ and $c_u <\infty$ such that
$$c_l\leq w(\beta_n^\top\, x)\leq c_u$$
 for all $n \geq 1$ and $x \in \textnormal{supp}\left(P_X\right)$. 
\end{lemma}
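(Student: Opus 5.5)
Under Assumption \ref{as:lyapunov_hard}, $P_X$ has compact support, so there is a finite radius $R$ with $\textnormal{supp}(P_X) \subseteq B_R = \{x : \|x\| \le R\}$. The plan is to show that the scalars $\beta_n^\top x$ lie in one fixed compact interval for all $n$ and all $x$ in the support, and then use continuity and strict positivity of $w$ on that interval.

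First, Assumption \ref{as:In_conv_det} gives $\beta_n \to \beta^*$, so the sequence $(\beta_n)_{n\ge1}$ is bounded. Set $M = \sup_{n\ge1}\|\beta_n\| < \infty$; convergent sequences in $\mathbb{R}^p$ are bounded. By Cauchy--Schwarz,
\[
|\beta_n^\top x| \le \|\beta_n\|\,\|x\| \le MR \quad \text{for all } n\ge1,\ x \in \textnormal{supp}(P_X).
\]
So every argument $\beta_n^\top x$ lies in the compact interval $K = [-MR, MR]$, and $K$ does not depend on $n$ or $x$.

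Next, Assumption \ref{as:lyapunov_hard} says $w(t)$ is continuous and strictly positive on $\mathbb{R}$. By Section \ref{sec_setup} it is also finite. A continuous function on the compact set $K$ attains its minimum and maximum there. Take $c_l = \min_{t\in K} w(t)$ and $c_u = \max_{t\in K} w(t)$. Then $c_l > 0$ because the minimum is a value of $w$ and $w > 0$, and $c_u < \infty$ because the maximum is a value of $w$ and $w$ is finite. Since $\beta_n^\top x \in K$, this gives $c_l \le w(\beta_n^\top x) \le c_u$ uniformly in $n$ and $x \in \textnormal{supp}(P_X)$.

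There is no real obstacle here. The one point to handle carefully is that the bound must be uniform in $n$, which is why we need boundedness of the whole sequence $(\beta_n)$ rather than only the limit $\beta^*$. If $\textnormal{supp}(P_X)$ is read as the closed support, it is compact by assumption, so $R$ is finite and the argument goes through unchanged.
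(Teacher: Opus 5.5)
Your proof is correct and follows the paper's argument. Boundedness of the convergent sequence $(\beta_n)$ and compactness of $\textnormal{supp}(P_X)$ place every $\beta_n^\top x$ in a single compact interval, on which the continuous, strictly positive and finite $w$ attains a positive minimum and a finite maximum; your version just makes explicit, via Cauchy--Schwarz and $M=\sup_n\|\beta_n\|$, the uniformity in $n$ that the paper leaves implicit.
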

\begin{proof}
    To begin, we note that $\beta_n \to \beta^*$ from Assumption \ref{as:In_conv_det} and $\text{supp}\left(P_X\right)$ is compact from Assumption \ref{as:lyapunov_hard},  so $\beta_n^\top\, x$ lies in a compact interval of $\mathbb{R}$ for all $n\geq 1$ and $x \in \textnormal{supp}\left(P_X\right)$. As $w(t)$ is continuous and satisfies $0 < w(t) < \infty$ for all $t \in \mathbb{R}$, $w(\beta_n^\top x)$ lies in a compact interval of $\mathbb{R}$ for all $n \geq 1$ and $x \in \text{supp}\left(P_X\right)$.
\end{proof}

\begin{lemma}\label{lem:sup_Ini}
Consider the non-negative random variable                          
 \begin{align}\label{eq:Rni}
 R_{ni} =\left\|(n+ i + 1)\, \mathcal{I}_{ni}^{-1} \right\|_2
 \end{align}
 which is finite a.s. for each $(n \geq 1,i \geq 0)$. Under Assumptions \ref{as:suff_ident}, \ref{as:In_conv_det} and either \ref{as:lyapunov_easy} or \ref{as:lyapunov_hard}, we have
 \begin{align*}
 \sup_{n \geq 1, i \geq 0} R_{ni }< \infty \quad \textnormal{a.s.}
 \end{align*}
\end{lemma}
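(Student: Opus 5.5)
We reduce the claim to a uniform lower bound on the smallest eigenvalue. Since $\|\mathcal{I}_{ni}^{-1}\|_2 = \lambda_{\min}(\mathcal{I}_{ni})^{-1}$, it suffices to show that almost surely there is a $c>0$ with
\begin{align*}
\lambda_{\min}(\mathcal{I}_n + \Sigma_{ni}) \geq c\,(n+i+1) \quad \text{for all } n\geq 1,\ i \geq 0.
\end{align*}
We use $\mathcal{I}_n + \Sigma_{ni} \succeq \mathcal{I}_n$ and $\mathcal{I}_n + \Sigma_{ni} \succeq \Sigma_{ni}$, and treat the ranges $i \leq n$ and $i > n$ separately. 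For $i \leq n$, Assumption \ref{as:In_conv_det} gives $n^{-1}\mathcal{I}_n \to \mathcal{I}_{P_X^*} \succ 0$, and $\mathcal{I}_n \succ 0$ for every $n$. Hence $\inf_n n^{-1}\lambda_{\min}(\mathcal{I}_n) =: c_0 > 0$: the tail is bounded below by continuity of eigenvalues, and there are only finitely many remaining $n$. Then $\lambda_{\min}(\mathcal{I}_{ni}) \geq c_0 n \geq c_0(n+i+1)/3$. This part is deterministic.

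For $i > n$ we bound $\Sigma_{ni}$ from below and use $(n+i+1) \leq 3i$. The key point is that the imputed covariates $X_1, X_2, \ldots$ are shared across rows $n$; only the weights $w(\beta_n^\top X_j)$ depend on $n$. We would decouple the $n$-dependence by a uniform lower bound on the weight.

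Under Assumption \ref{as:lyapunov_easy}, $w \equiv w_0$, so $\Sigma_{ni} = w_0 \sum_{j\le i} X_jX_j^\top$ does not depend on $n$.

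Under Assumption \ref{as:lyapunov_hard}, Lemma \ref{lem:compact_wt} gives $w(\beta_n^\top X_j) \geq c_l$ on $\mathrm{supp}(P_X)$ uniformly in $n$, so $\Sigma_{ni} \succeq c_l \sum_{j \leq i} X_jX_j^\top$.

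In both cases we need a single $n$-free random matrix sequence $S_i = \sum_{j\le i} X_jX_j^\top$. By the strong law of large numbers applied entrywise, $i^{-1}S_i \to E_{P_X}(XX^\top)$ almost surely; this expectation is finite and positive definite by Assumption \ref{as:suff_ident}, which is automatic for compact support. By Weyl's inequality, $\lambda_{\min}(i^{-1}S_i) \geq \lambda_0/2$ for all $i \geq i_0(\omega)$, where $\lambda_0 = \lambda_{\min}(E_{P_X}(XX^\top))$.

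The main obstacle is the finitely many $i < i_0(\omega)$ with $i > n$. These indices can occur only when $n < i_0(\omega)$, so they form a finite set of pairs $(n,i)$. For each such pair, $\mathcal{I}_{ni} \succeq \mathcal{I}_n \succ 0$, so $R_{ni} \le (n+i+1)\|\mathcal{I}_n^{-1}\|_2 < \infty$. A maximum over finitely many finite quantities is finite, so the supremum over this exceptional set is finite almost surely.

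Combining the three ranges gives
\begin{align*}
\sup_{n,i} R_{ni} \leq \max\Big\{3/c_0,\ 3\cdot 2/(c\,\lambda_0),\ \max_{\text{finite set}} R_{ni}\Big\} < \infty \quad \text{a.s.},
\end{align*}
where $c = w_0$ or $c = c_l$. The crucial structural fact is that the same sequence $X_i$ is used in every row. This is why one almost sure event, from the strong law of large numbers for $S_i$, controls all $n$ simultaneously, rather than requiring a union over rows.
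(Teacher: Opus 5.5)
Your proof is correct and follows essentially the paper's route. Both arguments reduce to $\lambda_{\min}(\mathcal{I}_{ni})$, use $\inf_n n^{-1}\lambda_{\min}(\mathcal{I}_n)>0$, pass to the $n$-free matrix $\sum_{j\le i}X_jX_j^\top$ (via $w\equiv w_0$ or the bound $c_l$ from Lemma~\ref{lem:compact_wt}), and apply the strong law once for all rows. The only difference is cosmetic: the paper avoids your split into $i\le n$ versus $i>n$ and your finite exceptional set by adding the two bounds with Weyl's inequality, $\lambda_{\min}(\mathcal{I}_{ni})\ge n a_n + i b_i$ with $a_n\ge\varepsilon$. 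This gives $R_{ni}\le 2/\varepsilon + i/(\varepsilon+i b_i)$, a convergent sequence of finite terms. One small wording point: under compact support only finiteness of $E_{P_X}(XX^\top)$ is automatic, while positive definiteness still comes from Assumption~\ref{as:suff_ident}.
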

\begin{proof}
   First, finiteness of $R_{ni}$ follows from $\mathcal{I}_{ni} \succeq \mathcal{I}_n$ a.s. and the fact that $\mathcal{I}_n \succ 0$ for each $n$ from Assumption \ref{as:In_conv_det}. We now tackle the proof under the two sets of assumptions separately.
    
    \textit{Constant $w(t)= w_0$ case}:  Since $\|\mathcal{I}_{ni}^{-1}\|_2 = 1/\lambda_{\min}\left(\mathcal{I}_{ni}\right)$, we have
$$ R_{ni} = \frac{n+i+1}{\lambda_{\min}(\mathcal{I}_{ni})}\cdot $$
Next, we write for $n \geq 1$ and $i \geq 1$:
$$ a_n = \lambda_{\min}(n^{-1}\,\mathcal{I}_n), \quad b_i = \lambda_{\min}(i^{-1}\,\Sigma_i), \quad \Sigma_i = w_0\,\sum_{j=1}^i X_j \,X_j^\top$$
and we define $b_0 = 0$. From Assumption \ref{as:In_conv_det}, $a_n > 0$ for each $n$ and $a_n \to a > 0$, so there exists $\varepsilon > 0$ such that 
$$a_n \geq \varepsilon \quad \text{for all } n \geq 1.$$
Next, Weyl's inequality gives $\lambda_{\min}(\mathcal{I}_{ni}) \geq na_n + i b_i$
which gives the following for $n \geq 1$ and $i \geq 0$:
\begin{align*}
    R_{ni} &\leq \frac{n+ 1}{n \varepsilon + i b_i} + \frac{i}{n \varepsilon + i b_i}\\
    & \leq \frac{2}{\varepsilon} + \frac{i}{\varepsilon + i b_i}\cdot
\end{align*}
Note that the above does not depend on $n$, so it is also a valid upper bound for $\sup_{n \geq 1} R_{ni}$. From Assumption \ref{as:suff_ident} and the strong law of large numbers, we have $b_i \to b > 0$ a.s., so 
\begin{align*}
    \lim_{i \to \infty}\frac{i}{\varepsilon + i b_i} = \frac{1}{b} \quad \text{a.s.}
\end{align*}
As $i/(\varepsilon + i b_i) < \infty$ for each $i\geq 0$ and it converges to a bounded limit a.s., we have
\begin{align*}
    \sup_{n \geq 1, i \geq 0} R_{ni} < \infty \quad \text{a.s.}
\end{align*}
\textit{General $w(t)$ case}:
The general case under Assumption \ref{as:lyapunov_hard} is a minor extension of the above. In this case, we note that Lemma \ref{lem:compact_wt} gives the existence of $c_l > 0$ such that
 \begin{align*}
     \Sigma_{ni}  = \sum_{j = 1}^i w(\beta_n^\top\,X_j) \, X_j\, X_j^\top \succeq c_l\, \sum_{j = 1}^i X_j X_j^\top \quad \text{a.s.}
 \end{align*}
As a result, we have 
\begin{align*}
    \mathcal{I}_{ni} \succeq \mathcal{I}_n + c_l \sum_{j = 1}^i X_j X_j^\top \quad \text{a.s.}
\end{align*}
Therefore, under Assumption \ref{as:In_conv_det} and Assumption \ref{as:suff_ident}, the proof is identical to the constant $w(t) = w_0$ case but with $c_l$ instead of $w_0$. 
\end{proof}

\subsection{Conditional Lyapunov condition}
The main result for the conditional Lyapunov condition is as follows.
\begin{lemma}\label{lem:pmp_lyapunov}
    Under Assumptions \ref{as:suff_ident}, \ref{as:In_conv_det} and either \ref{as:lyapunov_easy} or \ref{as:lyapunov_hard}, the conditional Lyapunov condition with $\delta = 2$ holds true.
\end{lemma}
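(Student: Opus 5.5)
The plan is to verify the hypotheses of Lemma \ref{lem:condit_lyapunov_suff}, namely $s_n^{-2} = O(n)$ and that $W_n = \sup_{1\le i\le N_n} E(\widetilde{Z}_{ni}^4 \mid \mathcal{F}_{n,i-1})$ is bounded in probability. Lemma \ref{lem:condit_lyapunov_suff} then gives the conditional Lyapunov condition with $\delta = 2$.

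\textbf{Variance scale.} I would first handle $s_n^2$. Write $u^\top$ applied to Lemma \ref{lem_increment_cov} in the row-$n$ notation. Telescoping as in Lemma \ref{lem:post_cov} gives
\[
s_n^2 = u^\top\bigl\{\mathcal{I}_n^{-1} - E(\mathcal{I}_{nN_n}^{-1})\bigr\}u .
\]
By Assumption \ref{as:In_conv_det}, $n\,u^\top \mathcal{I}_n^{-1}u \to u^\top \mathcal{I}_{P_X^*}^{-1}u > 0$ for $u\neq 0$. I would show $n\,E(u^\top\mathcal{I}_{nN_n}^{-1}u)\to 0$ using Lemma \ref{lem:sup_Ini}. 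The bound $n\|\mathcal{I}_{nN_n}^{-1}\|_2 \le n R_{nN_n}/(n+N_n+1)$ tends to $0$ a.s. because $N_n/n\to\infty$ and $\sup R_{ni}<\infty$ a.s. It is also dominated by $n\|\mathcal{I}_n^{-1}\|_2$, which is bounded deterministically. Dominated convergence then yields $n s_n^2 \to u^\top\mathcal{I}_{P_X^*}^{-1}u>0$, hence $s_n^{-2}=O(n)$.

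\textbf{Fourth moment.} Next I would compute the conditional fourth moment:
\[
E(Z_{ni}^4\mid \mathcal F_{n,i-1}) = E\Bigl[\alpha_{n,i-1}(X_i)^4 \,(u^\top \mathcal I_{ni}^{-1}X_i)^4\, k(\beta_{n,i-1}^\top X_i)\, w(\beta_{n,i-1}^\top X_i)^2 \Bigm| \mathcal F_{n,i-1}\Bigr].
\]
Substituting $\alpha^4$ cancels the $w(\beta_{n,i-1}^\top X_i)^2$ factor. Using the Sherman--Morrison identity $\mathcal I_{ni}^{-1}X_i = \mathcal I_{n,i-1}^{-1}X_i/(1+w(\beta_n^\top X_i)X_i^\top\mathcal I_{n,i-1}^{-1}X_i)$ cancels the $(1+\cdots)^2$ factor as well. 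This gives the bound
\[
E(Z_{ni}^4\mid \mathcal F_{n,i-1}) \le E\bigl[w(\beta_n^\top X_i)^2 k(\beta_{n,i-1}^\top X_i)\,\|u\|^4\|\mathcal I_{n,i-1}^{-1}\|_2^4\|X_i\|^4 \mid \mathcal F_{n,i-1}\bigr].
\]
Multiplying by $(n+i)^4$ turns $(n+i)^4\|\mathcal I_{n,i-1}^{-1}\|_2^4$ into $R_{n,i-1}^4$, which is $\mathcal F_{n,i-1}$-measurable and can be pulled out. Hence
\[
W_n \le \|u\|^4 \Bigl(\sup_{n,i} R_{ni}\Bigr)^4 \sup_{n,i} E\bigl[w(\beta_n^\top X)^2 k(\beta_{n,i-1}^\top X)\|X\|^4 \mid \mathcal F_{n,i-1}\bigr],
\]
where $X\sim P_X$ is independent of $\mathcal F_{n,i-1}$.

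\textbf{The two assumption cases.} Under Assumption \ref{as:lyapunov_easy}, the last factor equals $w_0^2k_0E_{P_X}\|X\|^4<\infty$. Together with Lemma \ref{lem:sup_Ini} this shows $W_n$ is bounded by an a.s.\ finite random variable not depending on $n$, hence bounded in probability.

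Under Assumption \ref{as:lyapunov_hard}, Lemma \ref{lem:compact_wt} bounds $w(\beta_n^\top x)\le c_u$ and $\|x\|$ is bounded on the compact support. The obstacle is $k(\beta_{n,i-1}^\top x)$, since $\beta_{n,i-1}$ is random and $k$ is only continuous, not bounded. I would control $\sup_{n,i}\|\beta_{ni}-\beta_n\|$ via the update itself. For $x$ in the compact support, the same cancellation shows $\alpha_{n,i-1}(x)\|\mathcal I_{ni}^{-1}x\|$ is at most $c\,\|\mathcal I_{n,i-1}^{-1}\|_2^{1/2}\cdot w(\beta_{n,i-1}^\top x)^{-1/2}$. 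This is not immediately uniform, so I would instead take the route the outline hints at: Doob's maximal inequality on each row. From Lemma \ref{lem:post_cov}, $E\sup_i\|\beta_{ni}-\beta_n\|^2\le 4\,\mathrm{tr}(\mathcal I_n^{-1}) = O(n^{-1})$. Then $\sup_i\|\beta_{ni}-\beta_n\|\to0$ in probability, so with probability tending to one all $\beta_{n,i-1}^\top x$ lie in a fixed compact interval where $k$ is bounded. That gives $W_n=O_P(1)$. I expect this uniform localization of $\beta_{n,i-1}$ to be the main delicate step.
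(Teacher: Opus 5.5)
Your proposal is correct and follows essentially the same route as the paper. You reduce to Lemma \ref{lem:condit_lyapunov_suff}, use the Sherman--Morrison cancellation to bound the conditional fourth moment by $\|u\|^4 R_{n,i-1}^4\, w(\beta_n^\top X_i)^2\|X_i\|^4 k(\beta_{n,i-1}^\top X_i)$, invoke Lemma \ref{lem:sup_Ini}, and in the general case control $k$ through Doob's $L^2$ maximal inequality with the trace bound from Lemma \ref{lem_L2}; your $s_n^{-2}=O(n)$ step is exactly Lemma \ref{lem:s_nNn}. The only difference is minor: you apply Doob to the centred martingale to get $E\sup_i\|\beta_{ni}-\beta_n\|^2\le 4\,\mathrm{tr}(\mathcal I_n^{-1})=O(n^{-1})$ and localize directly to a fixed compact set where $k$ is bounded, whereas the paper only shows $\sup_i\|\beta_{ni}\|^2=O_P(1)$ and passes through Berge's theorem and the envelope $U(s)$, so your version is slightly more elementary.
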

\begin{proof}
In our setting, we have 
\begin{align}\label{eq:tilde_Zni}
\widetilde Z_{ni} :=  (n+i)\,Z_{ni} = (n+i)  \,\frac{\sqrt{w\left(\beta_n^\top\, X_i\right)}\,
u^\top \mathcal I_{n,i-1}^{-1}\,X_i
}{
\sqrt{1+w\left(\beta_n^\top\, X_i\right)\,X_i^\top\,\mathcal I_{n,i-1}^{-1}\,X_i
}}\, \frac{r(Y_{ni}\mid \beta_{n,i-1}^\top X_i)}{\sqrt{w\left(\beta_{n,i-1}^\top\, X_i\right)}}
\end{align}
where $Z_{ni}$ is defined in (\ref{eq:Zni}).
 It is not too hard to show that
\begin{align}\label{eq:gen_EZ}
    E\left( \widetilde{Z}_{ni}^4  \mid \mathcal{F}_{n,i-1}, X_i \right) \leq \|u\|^4 R_{n,i-1}^4 \,\, w\left(\beta_n^\top\, {X}_{i}\right)^2 \|{X}_{i}\|^4\,  k\left(\beta_{n,i-1}^\top X_i\right) \quad \text{a.s.}
\end{align}
where $R_{ni}$ is defined in (\ref{eq:Rni}). We now seek to apply Lemma \ref{lem:condit_lyapunov_suff}, where we note that $s_{n}^{-2} = O(n)$ from Lemma \ref{lem:s_nNn} to come. We now consider the two sets of assumptions separately to show 
$$\sup_{1 \leq i \leq N_n}E\left(\widetilde{Z}_{ni}^4 \mid \mathcal{F}_{n,i-1}\right) = O_p(1).$$

\textit{Constant $w(t) = w_0$ and $k(t) = k_0$ case}:
We begin with verifying the conditional Lyapunov condition under Assumption \ref{as:lyapunov_easy}.  In this case, taking the expectation of (\ref{eq:gen_EZ}) over $X_i$ gives 
\begin{align*}
E\left(\widetilde{Z}_{ni}^4 \mid \mathcal{F}_{n,i-1} \right) \leq  K_1\,\|u\|^4\, R_{n,i-1}^4 \quad \text{a.s.}
\end{align*} where $K_1 = w_0^2\,k_0\,E_{P_X}(\|X\|^4) < \infty$ is a constant independent of $n$ and $i$ following Assumption \ref{as:lyapunov_easy}. From Lemma \ref{lem:sup_Ini}, taking the supremum over $(n,i)$ gives
\begin{align*}
    \sup_{n \geq 1}\left\{\sup_{1 \leq i \leq N_n}E\left(\widetilde{Z}_{ni}^4 \mid \mathcal{F}_{n,i-1} \right)\right\} < \infty \quad \text{a.s.}
\end{align*}
 so
 \begin{align*}
     \sup_{1 \leq i \leq N_n}E\left(\widetilde{Z}_{ni}^4 \mid \mathcal{F}_{n,i-1} \right) = O_P(1)
 \end{align*}
 and by Lemma \ref{lem:condit_lyapunov_suff}, the conditional Lyapunov condition holds.

\textit{General $w(t)$ and $k(t)$ case}:
We now consider the general case under Assumption \ref{as:lyapunov_hard}. Let $\mathcal{X} = \text{supp}\left(P_X\right)$ be the compact support. Taking the expectation of (\ref{eq:gen_EZ}) over $X_i$ gives
 \begin{align*}
     E\left( \widetilde{Z}_{ni}^4  \mid \mathcal{F}_{n,i-1} \right) \leq K_2\,\|u\|^4 \, R_{n,i-1}^4\,\tilde{k}(\beta_{n,i-1}) \quad \text{a.s.}
 \end{align*}
 where $K_2 = c_u^2\,\sup_{x\in\mathcal X}\|x\|^4$ following Lemma \ref{lem:compact_wt} and 
 \begin{align*}
     \tilde{k}(\beta):=  \sup_{x \in \mathcal{X}} k\left(\beta^\top\, x\right).
 \end{align*}
We can handle the Fisher information and the kurtosis term separately as
\begin{align*}
   \sup_{i \geq 1} E\left( \widetilde{Z}_{ni}^4  \mid \mathcal{F}_{n,i-1} \right) \leq K_2\,\|u\|^4  \, \left(\sup_{i \geq 0} R_{ni}^4 \right)\, \left(\sup_{i \geq 0}\tilde{k}(\beta_{ni})\right) \quad \text{a.s.}
\end{align*}
For the Fisher information term, Lemma \ref{lem:sup_Ini} again gives $\sup_{n \geq 1, i \geq 0} R_{ni}^4< \infty$ a.s. so $\sup_{i \geq 0} R_{ni}^4 = O_P(1)$.

For the kurtosis term, we note that since $k(\beta^\top\, x)$ is continuous in $(\beta,x)$ and $\mathcal{X}$ is compact and does not depend on $\beta$, Berge's maximum theorem gives that $\tilde{k}(\beta)$ is also continuous. Next, consider the envelope function
\begin{align*}
    U(s) = \sup_{\| \beta \|^2 \leq s} \tilde{k}(\beta)
\end{align*}
which is finite for each $s \geq 0$ from the continuity of $\tilde{k}$ and compactness of the closed ball.  Crucially, the definition of $U(s)$ implies
\begin{align*}
    \sup_{i \geq 0} \tilde{k}(\beta_{ni}) \leq U\left(\sup_{i \geq 0} \| \beta_{ni}\|^2\right).
\end{align*}
Combining the above with the fact that $U(s)$ is non-decreasing then gives
\begin{align*}
    {P}\left(\sup_{i \ge 0} \tilde{k}(\beta_{ni}) > U(M)\right) &\leq  {P}\left\{U\left(\sup_{i \geq 0} \| \beta_{ni}\|^2\right) > U(M)\right\}\\
    &\le {P}\left(\sup_{i \ge 0} \|\beta_{ni}\|^2 > M\right).
\end{align*}
From the above, to show $\sup_{i \geq 0}\tilde{k}(\beta_{ni}) = O_P(1)$, it is sufficient to show that $\sup_{i \geq 0} \| \beta_{ni}\|^2 = O_P(1)$, which we now do using Doob's $L_p$ inequality.

Since $(\beta_{ni},\mathcal{F}_{ni})_{i \geq 1}$ is a (vector) martingale for each $n$, convexity of the norm and Jensen's inequality imply $(\|\beta_{ni}\|,\mathcal{F}_{ni})_{i \geq 1}$ is a positive (scalar) sub-martingale for each $n$. We can thus apply Doob's $L_p$ inequality, e.g. \citet[Theorem 4.4]{kuhn2023maximal}, to get
\begin{align*}
    E\left(\sup_{i \geq 1}\|\beta_{ni}\|^2 \right) = E\left\{\left(\sup_{i \geq 1}\|\beta_{ni}\|\right)^2\right\}\leq 4 \sup_{i \geq 1} E\left(\| \beta_{ni} \|^2 \right)
\end{align*}
where the right-most term is finite as $(\beta_{ni},\mathcal{F}_{ni})_{i \geq 1}$ is a martingale bounded in $L^2$. The proof of Lemma \ref{lem_L2} further gives
\begin{align*}
    \sup_{i \geq 1} E\left(\| \beta_{ni} \|^2 \right) \leq \|\beta_n\|^2 + \text{tr}\left( \mathcal{I}_n^{-1}\right).
\end{align*}
From Assumption \ref{as:In_conv_det},  $n\,\mathcal{I}_n^{-1} \to \mathcal{I}_{P_X^*}^{-1}$ and $\beta_n \to \beta^*$ implies
\begin{align*}
    \|\beta_n\|^2 + \text{tr}\left( \mathcal{I}_n^{-1}\right) \to \|\beta^*\|^2,
\end{align*}
so 
$$\sup_{n \geq 1}  E\left(\sup_{i \geq 1}\|\beta_{ni}\|^2\right) < \infty$$ and Markov's inequality implies 
$$\sup_{i \geq 1}\|\beta_{ni}\|^2 = O_P(1).$$ 
As $\beta_{n0} = \beta_{n}$ is deterministic, putting all the above together gives
 \begin{align*}
\sup_{i \geq 0}\tilde{k}\left(\beta_{ni}\right) = O_P(1),
 \end{align*}
 which in turn implies 
\[
\sup_{1\le i\le N_n}
E\left(
\widetilde Z_{ni}^4
\mid
\mathcal F_{n,i-1}
\right)
=
O_P(1).
\]
By Lemma \ref{lem:condit_lyapunov_suff}, the conditional Lyapunov condition holds.
\end{proof}
\subsection{Variance condition}
Before verifying the variance condition, a useful intermediate result is the behaviour of $s_{n}^2$ as $n\to \infty$.
\begin{lemma}\label{lem:s_nNn}
     Under Assumptions \ref{as:suff_ident}, \ref{as:In_conv_det} and either \ref{as:lyapunov_easy} or \ref{as:lyapunov_hard}, we have
     \begin{align*}
        n\, s_{n}^2 \to u^\top\, \mathcal{I}_{P_X^*}^{-1}\, u.
     \end{align*}
\end{lemma}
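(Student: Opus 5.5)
We compute $s_n^2$ exactly using the covariance identity from Lemma \ref{lem:post_cov}, applied row-wise. By orthogonality of martingale increments and Lemma \ref{lem_increment_cov} (which holds for each row $n$ with $\mathcal{I}_{n,i-1}$ and $\mathcal{I}_{ni}$ in place of $\mathcal{I}_{i-1}$ and $\mathcal{I}_i$), the tower property and telescoping give
\begin{align*}
s_n^2 = \sum_{i=1}^{N_n} E\left(Z_{ni}^2\right) = u^\top\left\{\mathcal{I}_n^{-1} - E\left(\mathcal{I}_{nN_n}^{-1}\right)\right\}u .
\end{align*}
Here $\mathcal{I}_n$ is deterministic under Assumption \ref{as:In_conv_det}. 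It therefore suffices to show two things: $n\,u^\top \mathcal{I}_n^{-1}u \to u^\top \mathcal{I}_{P_X^*}^{-1}u$, and $n\,E(\mathcal{I}_{nN_n}^{-1}) \to 0$.

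The first limit follows directly from Assumption \ref{as:In_conv_det}. Since $n^{-1}\mathcal{I}_n \to \mathcal{I}_{P_X^*} \succ 0$ and matrix inversion is continuous at positive definite matrices, $n\,\mathcal{I}_n^{-1} = (n^{-1}\mathcal{I}_n)^{-1} \to \mathcal{I}_{P_X^*}^{-1}$.

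The second limit is the only real work, and we use Lemma \ref{lem:sup_Ini}. By definition of $R_{ni}$ we have
\begin{align*}
\left\|n\,\mathcal{I}_{nN_n}^{-1}\right\|_2 = \frac{n}{n+N_n+1}\,R_{nN_n} \leq \frac{n}{N_n}\,\sup_{m\geq 1,\, i\geq 0} R_{mi}.
\end{align*}
The supremum is finite almost surely by Lemma \ref{lem:sup_Ini}, and $n/N_n \to 0$ by the choice of truncation, so the left-hand side tends to $0$ almost surely. To pass the limit through the expectation we need a dominating variable. Because $\mathcal{I}_{nN_n} \succeq \mathcal{I}_n$, we have $\|n\,\mathcal{I}_{nN_n}^{-1}\|_2 \leq \|n\,\mathcal{I}_n^{-1}\|_2$. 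This bound is deterministic and converges, so it is bounded uniformly in $n$ by a constant. Dominated convergence (or bounded convergence) then gives $n\,E(\mathcal{I}_{nN_n}^{-1}) \to 0$, and hence $n\,u^\top E(\mathcal{I}_{nN_n}^{-1})u \to 0$.

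The main subtlety is that almost-sure control of $\mathcal{I}_{nN_n}^{-1}$ comes from a supremum over the whole array. This works only because the same sequence $X_i$ is used in every row, so Lemma \ref{lem:sup_Ini} yields a single almost-sure bound that is uniform in $n$. The uniform deterministic dominator $\|n\,\mathcal{I}_n^{-1}\|_2$ is what makes the interchange of limit and expectation legitimate. Combining the two limits gives $n\,s_n^2 \to u^\top \mathcal{I}_{P_X^*}^{-1}u$.
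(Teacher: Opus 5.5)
Your proposal is correct and follows essentially the same route as the paper: the tower property with the telescoping identity of Lemma~\ref{lem_increment_cov}, the deterministic limit $n\,\mathcal{I}_n^{-1}\to\mathcal{I}_{P_X^*}^{-1}$, almost-sure vanishing of $n\,\mathcal{I}_{n,N_n}^{-1}$ via Lemma~\ref{lem:sup_Ini} and $n/N_n\to 0$, and dominated convergence using the deterministic dominator $n\,\mathcal{I}_n^{-1}$ obtained from $\mathcal{I}_{n,N_n}\succeq\mathcal{I}_n$. The only cosmetic difference is that you bound $\|n\,\mathcal{I}_{n,N_n}^{-1}\|_2$ by $(n/N_n)\sup_{m,i} R_{mi}$ in one step, whereas the paper first bounds $\|N_n\,\mathcal{I}_{n,N_n}^{-1}\|_2$ uniformly in $n$ and then multiplies by $n/N_n$.
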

\begin{proof}
    Recall that
\[
s_{n}^2
=
\sum_{i=1}^{N_n}
E\left(Z_{ni}^2\right).
\]
By Lemma~\ref{lem_increment_cov}, we have
\begin{align}\label{eq:telescop_var}
\sum_{i=1}^{N_n}
E\left(Z_{ni}^2\mid \mathcal F_{n,i-1},X_i\right)
=
u^\top\,
\left(
\mathcal I_n^{-1}
-
\mathcal I_{n, N_n}^{-1}
\right)\,u,
\end{align}
where \(\mathcal I_{n,0}=\mathcal I_n\). Taking expectations, applying the tower property and multiplying by $n$ gives
\begin{align}
n\,s_{n}^2
&=
u^\top\,
\left(n\,\mathcal I_n^{-1}\right)\,u
-
n\,E\left(u^\top\,\mathcal I_{n,N_n}^{-1}\,u\right).
\label{eq:truncated_sn_decomp}
\end{align}
We now study the two terms on the right-hand side separately. First, by Assumption~\ref{as:In_conv_det},
\[
u^\top\,\left(n\,\mathcal I_n^{-1}\right)\,u
\to
u^\top\,\mathcal{I}_{P_X^*}^{-1}\,u.
\]
It thus remains to show that the truncation remainder is negligible. First, we note that
\begin{align*}
    \|N_n\,\mathcal I_{n,N_n}^{-1}\|_2 = \frac{N_n}{n + N_n + 1}\,  R_{n, N_n} \leq \sup_{i \geq 0} R_{ni}
\end{align*}
where $R_{ni}$ is defined in (\ref{eq:Rni}). From Lemma \ref{lem:sup_Ini}, we have
\[
\sup_{n \geq 1}
\|N_n\,\mathcal I_{n,N_n}^{-1}\|_2
<\infty
\quad \textnormal{a.s.}
\]
Consequently, since \(n/N_n\to0\), we have
\[
\left|n \,u^\top\,\mathcal I_{n,N_n}^{-1}\,u \right|
\le
\frac n{N_n}\,\|u\|^2
\|N_n\,\mathcal I_{n,N_n}^{-1}\|_2
\to0
\quad \textnormal{a.s.}
\]
which implies $n \,u^\top\,\mathcal I_{n,N_n}^{-1}\,u \to 0$ a.s.

To handle the expectation, we once again invoke dominated convergence. Note that for all $N_n \geq 1$, the matrix $\mathcal{I}_{n, N_n} = \mathcal{I}_n + \Sigma_{n, N_n}$ satisfies $\mathcal{I}_{n, N_n} \succeq \mathcal{I}_n$ a.s. 
Consequently, for any vector $u$, the quadratic form is bounded by
$$n\, u^\top\, \mathcal{I}_{n, N_n}^{-1}\, u \leq u^\top\, \left(n\, \mathcal{I}_n^{-1}\right)\, u \quad \text{a.s.}$$
By Assumption \ref{as:In_conv_det}, $n\, \mathcal{I}_n^{-1} \to \mathcal{I}_{P_X^*}^{-1}$ is a deterministic convergent sequence, meaning there exists a dominating constant for $u^\top\, (n\, \mathcal{I}_n^{-1})\, u$, thus giving
$$E\left( n\, u^\top\, \mathcal{I}_{n, N_n}^{-1}\, u \right) \to 0.$$
Combining this with the deterministic limit of the first term in \eqref{eq:truncated_sn_decomp}, we conclude that:
$$n\, s_{n}^2 \to u^\top\, \mathcal{I}_{P_X^*}^{-1}\, u.$$

\end{proof}

We are now ready to verify the conditional variance condition. 
\begin{lemma}\label{lem:condit_var}
     Under Assumptions \ref{as:suff_ident}, \ref{as:In_conv_det} and either \ref{as:lyapunov_easy} or \ref{as:lyapunov_hard}, we have
    \begin{align*}
        s_{n}^{-2}\, V_{n}^2 \to 1 \quad \textnormal{in probability.}
    \end{align*}
\end{lemma}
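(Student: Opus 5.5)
The plan is to compare $V_n^2$, which conditions only on $\mathcal F_{n,i-1}$, with its enlarged-filtration counterpart, which also conditions on $X_i$ and telescopes exactly by Lemma~\ref{lem_increment_cov}. Set $\mathcal G_{n,i-1}=\sigma(\mathcal F_{n,i-1},X_i)$, so that $\mathcal F_{n,i-1}\subseteq\mathcal G_{n,i-1}\subseteq\mathcal F_{ni}$, and define
\[
\widetilde V_n^2=\sum_{i=1}^{N_n}E\left(Z_{ni}^2\mid\mathcal G_{n,i-1}\right)=u^\top\left(\mathcal I_n^{-1}-\mathcal I_{n,N_n}^{-1}\right)u ,
\]
where the identity is (\ref{eq:telescop_var}). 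The proof of Lemma~\ref{lem:s_nNn} already shows $n\,u^\top\mathcal I_{n,N_n}^{-1}u\to0$ a.s., using Lemma~\ref{lem:sup_Ini} and $n/N_n\to0$. Together with Assumption~\ref{as:In_conv_det}, this gives
\[
n\,\widetilde V_n^2\to u^\top\mathcal I_{P_X^*}^{-1}u \quad \text{a.s.}
\]
Lemma~\ref{lem:s_nNn} gives $n\,s_n^2\to u^\top\mathcal I_{P_X^*}^{-1}u$, and this limit is strictly positive for $u\neq0$. It therefore suffices to show $n\,(\widetilde V_n^2-V_n^2)\to0$ in probability.

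For this, write $n(\widetilde V_n^2-V_n^2)=\sum_{i=1}^{N_n}\xi_{ni}$, with
\[
\xi_{ni}=n\left\{E\left(Z_{ni}^2\mid\mathcal G_{n,i-1}\right)-E\left(Z_{ni}^2\mid\mathcal F_{n,i-1}\right)\right\}.
\]
By the tower property, $\xi_{ni}$ is $\mathcal G_{n,i-1}\subseteq\mathcal F_{ni}$-measurable and satisfies $E(\xi_{ni}\mid\mathcal F_{n,i-1})=0$. Hence $(\xi_{ni},\mathcal F_{ni})$ is itself a martingale difference array. I would then apply a conditional (Lenglart-type) criterion, e.g.\ \citet[Proposition 6.16]{hausler2015stable} or Lenglart's inequality. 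This reduces the goal to showing
\[
\sum_{i=1}^{N_n}E\left(\xi_{ni}^2\mid\mathcal F_{n,i-1}\right)\to0 \quad \text{in probability.}
\]
A conditional criterion is needed because $R_{ni}$ is only bounded a.s., not in $L^1$, so unconditional second-moment bounds are unavailable.

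To bound this conditional sum I would use conditional Jensen twice:
\[
E\left(\xi_{ni}^2\mid\mathcal F_{n,i-1}\right)\le n^2E\left\{E\left(Z_{ni}^2\mid\mathcal G_{n,i-1}\right)^2\mid\mathcal F_{n,i-1}\right\}\le n^2E\left(Z_{ni}^4\mid\mathcal F_{n,i-1}\right).
\]
The right-hand side equals $n^2(n+i)^{-4}E(\widetilde Z_{ni}^4\mid\mathcal F_{n,i-1})$. The proof of Lemma~\ref{lem:pmp_lyapunov} shows, under either Assumption~\ref{as:lyapunov_easy} or \ref{as:lyapunov_hard}, that $W_n=\sup_{i\le N_n}E(\widetilde Z_{ni}^4\mid\mathcal F_{n,i-1})=O_P(1)$. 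Therefore the conditional sum is at most
\[
W_n\,n^2\sum_{i\ge1}(n+i)^{-4}=O_P(n^{-1})\to0 .
\]
This is the same computation as in Lemma~\ref{lem:condit_lyapunov_suff}.

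The main obstacle is the bridging step. One must check carefully that $\xi_{ni}$ is adapted to and centred under the original filtration $\mathcal F_{ni}$ rather than $\mathcal G$. One must also invoke a version of the Lenglart/Häusler--Luschgy result that allows the conditional variance to be only $o_P(1)$, not bounded in expectation. Once this is in place, combining the three limits gives
\[
s_n^{-2}V_n^2=\frac{n\widetilde V_n^2-n(\widetilde V_n^2-V_n^2)}{n s_n^2}\to1 \quad \text{in probability.}
\]
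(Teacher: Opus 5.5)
Your proof is correct, and it has the same skeleton as the paper's. You compare $V_n^2$ with the enlarged-filtration sum, which telescopes to $u^\top(\mathcal I_n^{-1}-\mathcal I_{n,N_n}^{-1})u$ by Lemma~\ref{lem_increment_cov}. You normalise with Lemma~\ref{lem:s_nNn}. You then control the discrepancy by conditional Jensen together with the bound $W_n=O_P(1)$ from the proof of Lemma~\ref{lem:pmp_lyapunov}.

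The difference is the bridging tool. The paper works with the uncentred, nonnegative adapted array $U_{ni}=\{s_n^{-2}E(Z_{ni}^2\mid\mathcal F_{n,i-1},X_i)\}^{1/2}$ and invokes Proposition 6.16 of H\"ausler and Luschgy. This forces it to check square-integrability, tightness of $\sum_i E(U_{ni}^2\mid\mathcal F_{n,i-1})$, and a conditional Lindeberg condition (the last again via the fourth moments). It then obtains $\sum_i U_{ni}^2-\sum_i E(U_{ni}^2\mid\mathcal F_{n,i-1})\to0$ in probability.

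You instead centre by hand: your $\xi_{ni}$ is exactly $n s_n^2\{U_{ni}^2-E(U_{ni}^2\mid\mathcal F_{n,i-1})\}$. You then apply Lenglart's inequality to this martingale difference array, which only needs $\sum_i E(\xi_{ni}^2\mid\mathcal F_{n,i-1})=o_P(1)$. With the fourth-moment bound already in hand, your route is more direct and makes the tightness and Lindeberg checks unnecessary. The paper's route has the advantage that it would still go through under Lindeberg-type control of $U_{ni}$ alone.

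Two small clarifications. First, for your centred array the tool you actually need is Lenglart's inequality, not Proposition 6.16. As used in the paper, that proposition compares a sum of squares with its compensator rather than showing a martingale sum is negligible. Second, the stopping-time proof of Lenglart's inequality only requires each $\xi_{ni}$ to be integrable. This follows from $E(Z_{ni}^2)\le u^\top\mathcal I_n^{-1}u$, so you never need an unconditional fourth-moment bound, consistent with your remark about $R_{ni}$.
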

\begin{proof}
We highlight that while $V_{n}^2$ is challenging to work with, (\ref{eq:telescop_var}) is easy to work with.
We will thus connect the enlarged conditioning \(\left(\mathcal F_{n,i-1}, X_i\right)\) to $V_{n}^2$ using Proposition 6.16 of \citet{hausler2015stable}, which leverages the conditional Lindeberg condition. 
Consider the array of $\sigma$-algebras $\left(\mathcal{F}_{ni}\right)_{1 \leq i \leq N_n, n \geq 1}$ and random variables $\left(U_{ni}\right)_{1 \leq i \leq N_n, n\geq 1}$ where
\[
U_{ni}
=
\sqrt{
s_{n}^{-2}\,
E\left(Z_{ni}^2\mid \mathcal F_{n,i-1},X_i\right)
},
\]
and 
\[
\sum_{i=1}^{N_n}U_{ni}^2
=
\frac{
n\,u^\top\,\mathcal I_n^{-1}\,u
-
n\,u^\top\,\mathcal I_{n,N_n}^{-1}\,u
}{
n\,s_{n}^2
}\cdot
\]
The numerator can be handled via Lemma \ref{lem:s_nNn} minus the final expectation step, so we clearly have
\[
\sum_{i=1}^{N_n}U_{ni}^2
\to 1 \quad \text{a.s.} 
\]

It therefore remains to verify the required regularity conditions for $U_{ni}$. We first verify the square-integrability and that the sum of the conditional variances is bounded in probability. Taking expectations and applying the tower property, we have
\begin{align*}
\sum_{i = 1}^{N_n}E\left( U_{ni}^2\right) = 
E\left[
\sum_{i=1}^{N_n}\,
E\left(
U_{ni}^2
\mid
\mathcal F_{n,i-1}
\right)
\right]
&=
s_{n}^{-2}\,
\sum_{i=1}^{N_n}
E\left(
Z_{ni}^2
\right)
\\
&=
1,
\end{align*}
where the last equality follows from the definition of \(s_{n}^2\). Since each summand is nonnegative, each \(U_{ni}\) is clearly square-integrable. Moreover, from the above, the sequence
\[
\sum_{i=1}^{N_n}
E\left(
U_{ni}^2
\mid
\mathcal F_{n,i-1}
\right)
\]
for $n \geq 1$ is bounded in \(L^1\), so it follows from Markov's inequality that 
\[
\sum_{i=1}^{N_n}
E\left(
U_{ni}^2
\mid
\mathcal F_{n,i-1}
\right)
=
O_P(1).
\]

We next verify the conditional Lindeberg condition for $U_{ni}$. A similar argument to Lemma \ref{lem:condit_lyapunov} yields
\begin{align*}
\sum_{i=1}^{N_n}
E\left\{
U_{ni}^2\,\mathbbm 1 \left({|U_{ni}|>\varepsilon}\right) \mid \mathcal F_{n,i-1}\right\} \leq \varepsilon^{-2}\, \sum_{i=1}^{N_n} E\left(U_{ni}^4 \mid \mathcal F_{n,i-1} \right) \quad \text{a.s.}
\end{align*}
where
\[
U_{ni}^4=s_{n}^{-4}\,\left\{E\left(Z_{ni}^2\mid \mathcal F_{n,i-1},X_i\right)\right\}^2.
\]
Using Jensen's inequality,
\[
\left\{
E\left(Z_{ni}^2\mid \mathcal F_{n,i-1},X_i\right)
\right\}^2
\leq
E\left(Z_{ni}^4\mid \mathcal F_{n,i-1},X_i\right) \quad \text{a.s.}
\]
So
\begin{align*}
\sum_{i=1}^{N_n} E\left(U_{ni}^4 \mid \mathcal F_{n,i-1} \right) \leq
\,s_{n}^{-4}\,
\sum_{i=1}^{N_n}
E(Z_{ni}^4\mid \mathcal F_{n,i-1}) \quad \text{a.s.}
\end{align*}
Since this is exactly the conditional Lyapunov term for $\delta =2$, Lemma \ref{lem:pmp_lyapunov} immediately gives for all $\varepsilon > 0$
\[
\sum_{i=1}^{N_n}
E\left\{
U_{ni}^2\,
\mathbbm 1\left(|U_{ni}|>\varepsilon\right)
\mid\mathcal F_{n,i-1}
\right\}
\to0 \quad \text{in probability}.
\]
Proposition 6.16 of \citet{hausler2015stable} therefore yields 
\[
\sum_{i=1}^{N_n}
E\left(
U_{ni}^2
\mid
\mathcal F_{n,i-1}
\right)
-
\sum_{i=1}^{N_n}
U_{ni}^2
\to0
\quad\text{in probability}.
\]
Since $\sum_{i=1}^{N_n}
U_{ni}^2
\to1$ a.s., we have
\[
\sum_{i=1}^{N_n}
E\left(
U_{ni}^2
\mid
\mathcal F_{n,i-1}
\right)
\to1
\quad\text{in probability}.
\]
Substituting the definition of $U_{ni}$, we obtain
\[
s_{n}^{-2}\,
\sum_{i=1}^{N_n}
E\left(
Z_{ni}^2
\mid
\mathcal F_{n,i-1}
\right)
\to1
\quad\text{in probability}
\]
as desired.
\end{proof}

\subsection{Infinite martingale array and a.s. convergence}\label{sec:trunc}
Given Lemmas \ref{lem:pmp_lyapunov} and \ref{lem:condit_var}, we can now apply Theorem \ref{th:mart_CLT} to obtain
\begin{align*}
s^{-1}_{n}\, u^\top\left(\beta_{n,N_n} - \beta_n\right) \to \mathcal{N}(0,1)
\end{align*}
in distribution.  Lemma \ref{lem:s_nNn}, Slutsky's theorem and the Cram\'{e}r-Wold theorem then give
\begin{align*}
n^{1/2}\left(\beta_{n,N_n} - \beta_n\right) \to \mathcal{N}\left(0,\mathcal{I}_{P_X^*}^{-1}\right)
\end{align*}
in distribution. 

Next, we make the connection to the infinite martingale array $(\beta_{ni}, \mathcal{F}_{ni})_{i \geq 1, n \geq 1}$ based on a similar argument to the proof of \citet[Theorem 3.6]{Hall2014}. We first note that
\begin{align*}
    E\left\{n(\beta_{n\infty} - \beta_{n,N_n})(\beta_{n\infty} - \beta_{n,N_n})^\top \right\} = nE\left(\mathcal{I}_{n,N_n}^{-1}\right)
\end{align*}
following Theorem \ref{thm:predictive_invariance_mp}. We have already shown in the proof of Lemma \ref{lem:s_nNn} that $N_n/n  \to \infty$ implies
\begin{align*}
    nE\left(\mathcal{I}_{n,N_n}^{-1}\right) \to 0
\end{align*}
as $n \to \infty$. As a result, we clearly have
\begin{align*}
n^{1/2}(\beta_{n\infty} - \beta_{n,N_n}) \to 0 \quad \text{in probability}
\end{align*} 
as $n \to \infty$, which in turn implies
\begin{align*}
n^{1/2}\left(\beta_{n\infty} - \beta_n\right) \to \mathcal{N}\left(0,\mathcal{I}_{P_X^*}^{-1}\right)
\end{align*}
in distribution. 

Finally, the same argument as for \citet[Theorem 3]{fong2026asymptotics} can be applied to convert the above result to the weak convergence a.s. case under Assumption \ref{as:In_conv}, thereby giving Theorem \ref{thm:as_invar}.

\subsection{Verifying assumptions}\label{sec:assumptions}
In this subsection, we briefly provide some discussion on how to verify Assumption \ref{as:In_conv} in practice, as well as some 
 examples of models satisfying Assumptions \ref{as:lyapunov_easy} and \ref{as:lyapunov_hard}. 

For Assumption \ref{as:In_conv}, we note that the latter condition on the a.s. convergence of $n^{-1}\mathcal{I}_n$ can be shown under the assumption of $\beta_n \to \beta^*$ a.s. and appropriate conditions on $w(t)$ and $P_X^*$ by using a uniform law of large numbers argument. Sufficient conditions are the variants of Assumptions \ref{as:suff_ident} and either \ref{as:lyapunov_easy} or \ref{as:lyapunov_hard} where $P_X$ is replaced with $P_X^*$. It is likely that weaker assumptions will work, but we consider these for cohesiveness. We now provide a rough outline of the argument. The constant case $w(t) = w_0$ is obvious as we can simply rely on the strong law of large numbers. For the latter case, let us define
\begin{align*}
    \mathcal{I}^*_n(\beta) = \frac{1}{n}\sum_{j = 1}^n w(\beta^\top X_j^*)\,  X_j^* X_j^{*\top}, \quad \mathcal{I}^*(\beta) = E_{P_X^*}\left\{w(\beta^\top X) XX^\top\right\}.
\end{align*}
To verify the conditions required for $\mathcal{I}_n^*(\beta_n) \to \mathcal{I}^*(\beta^*)$, we can use the decomposition
\begin{align*}
    \left\|\mathcal{I}_n^*(\beta_n) - \mathcal{I}^*(\beta^*) \right\| \leq  \left\|\mathcal{I}_n^*(\beta_n) - \mathcal{I}^*(\beta_n) \right\| +  \left\|\mathcal{I}^*(\beta_n) - \mathcal{I}^*(\beta^*) \right\|. 
\end{align*}
Note that since $0 < w(t) < \infty$ is continuous, $P_X^*$ has compact support, and $\beta_n \to \beta^*$ $P^{*\infty}$-a.s., we have that $0 < c_l \leq w(\beta_n^\top X_j^*) \leq c_u < \infty$ $P^{*\infty}$-a.s. from Lemma \ref{lem:compact_wt}. As a result, $\mathcal{I}^*(\cdot)$ is continuous from continuity and dominance of $w(t)$, so the second term goes to 0. The first term can be handled by a uniform strong law of large numbers, where
\begin{align*}
    \sup_{\beta \in B(\beta^*)}\left\|\mathcal{I}_n^*(\beta) - \mathcal{I}^*(\beta) \right\|\to 0 \quad P^{*\infty}\text{-a.s.}
\end{align*}
for $B(\beta^*)$ as a suitable compact neighborhood of $\beta^*$, following the convergence of $\beta_n$ to $\beta^*$. The conditions for the uniform strong law again hold under similar continuity and dominance arguments of $w(t)$.

For Assumptions \ref{as:lyapunov_easy} and \ref{as:lyapunov_hard}, we include a few common examples to illustrate the forms of $w(t)$ and $k(t)$ corresponding to the two cases.
\begin{example}[Location model]
    For location models, consider $p_Y(y \mid t) = p_0(y - t)$, which encompasses many examples including the Student-$t$ example of Section \ref{sec:sim}. Under appropriate regularity conditions, we have
    \begin{align*}
        r(y \mid t) = -\frac{p_0'(y - t)}{p_0(y-t)}
    \end{align*}
    which gives
    \begin{align*}
        w(t) = \int \left\{\frac{p_0'(y - t)}{p_0(y-t)}\right\}^2 p_Y(y \mid t) \, dy = \int \left\{\frac{p_0'(z)}{p_0(z)}\right\}^2 p_0(z)\, dz 
    \end{align*}
    through a change of variables $z = y-t$. Since the above does not depend on $t$, we clearly have $w(t) = w_0$ as desired. A similar argument can be applied for the kurtosis function to show $k(t) = k_0$. We thus satisfy Assumption \ref{as:lyapunov_easy}.
\end{example}

\begin{example}[Gamma with log link]
    For the Gamma family with a log link, we have $P_Y(\cdot \mid t) = \text{Gamma}\left\{\alpha, \alpha\exp(-t)\right\}$, where we are using the shape/rate parametrization. This then gives
\begin{align*}
    r(y \mid t) = \alpha \left\{y\exp(-t) - 1\right\},
\end{align*}
and one can verify that
\begin{align*}
    w(t) = \alpha, \quad k(t) = 3 + \frac{6}{\alpha}
\end{align*}
which are both constant for fixed shape, so we can apply Assumption \ref{as:lyapunov_easy}.
\end{example}

\begin{example}[Logistic]\label{ex:logistic}
For logistic regression, we can verify that
\begin{align*}
    w(t) = \frac{\exp(t)}{\left\{1 + \exp(t)\right\}^2}, \quad k(t) = \frac{1}{w(t)} - 3
\end{align*}
which is clearly not constant, and $k(t)$ can be very large for large $|t|$. This intuitively explains the requirement of $P_X$ having compact support in Assumption \ref{as:lyapunov_hard}.
\end{example}

\begin{example}[Poisson with log link]
For the Poisson family with a log link, one can show that
\begin{align*}
    w(t) = \exp(t), \quad k(t) = \frac{1}{w(t)} + 3
\end{align*}
where Assumption \ref{as:lyapunov_hard} is again relevant.
\end{example}

\section{Fixed-design case}\label{sec:fixed_design}
As mentioned in the main paper, it is more computationally efficient to precompute required quantities such as $\mathcal{I}_{i}^{-1} X_i$  for a single realization of the predictive covariate sequence $X_{n+1:N}$ for a sufficiently large truncation iteration $N$, which is then used across all parallel predictive resampling trajectories. Further computational details are provided in Section \ref{sec:computation}.

The above setting is closely related to the fixed-design formulation, where the predictive covariate sequence $(X_i)_{i \ge n+1}$ is treated as a fixed deterministic sequence instead of being random samples from $P_X$. 
In this case, the filtration is $(\mathcal{F}_i)_{i \geq n+1}$ where $\mathcal{F}_i = \sigma(Y_{n+1},\dots,Y_i)$. We again use the notation  $\mathcal{F}_n = \{\emptyset,\Omega\}$ to indicate the observed dataset $\left(X_{1:n}, Y_{1:n}\right)$ is treated as constant. We now show that under standard regularity conditions, this fixed-design setting preserves weak design invariance, where the proof is very similar to that of Theorem \ref{thm:predictive_invariance_mp}. We highlight that the posterior covariance in the fixed-design case is also exactly the same as the random-design case as shown in the following result, so no uncertainty is lost in this setting. We begin with the fixed-design version of Assumption \ref{as:suff_ident}.
\begin{assumption}\label{as:fix_suff_ident}
    Let $\tilde{X} = (\tilde{X}_i)_{i \ge 1}$ be a deterministic covariate sequence which satisfies
\begin{align*}
\lim_{N \to \infty} \frac{1}{N}\, \sum_{i=1}^N  \tilde{X}_i \tilde{X}_i^\top = \Sigma_{X},
\label{eq:fixed_design_lln}
\end{align*}
where $\Sigma_X$ is finite and $\Sigma_X \succ 0$. Furthermore, we require
\begin{align*}
    \limsup_{N \to \infty }\, \frac{1}{N}\sum_{i = 1}^N \|\tilde{X}_i\|^4 < \infty.
\end{align*}
\end{assumption}
The extra condition is required to control the tails of the deterministic sequence for the non-constant $w(t)$ case. We now consider predictive resampling with update rule (\ref{eq:recur_update}) and $P_i(\cdot \mid x) = P_{Y}(\cdot  \mid \beta_i^\top x)$, where we set $X_i = \tilde{X}_{i-n}$ for $i \geq n+1$ as our future covariate sequence. 
\begin{corollary}[Weak design invariance, fixed-design]
\label{cor:fixed_design}
 Under predictive resampling as described above, $(\beta_i, \mathcal{F}_i)_{i \geq n+1}$ is a martingale bounded in $L^2$ and $\beta_i \to \beta_\infty$ almost surely, so $E(\beta_\infty \mid \mathcal{F}_n)$ does not depend on $\tilde{X}$.
 If Assumption \ref{as:fix_suff_ident} holds and $w(t)$ is continuous and strictly positive for all $t \in \mathbb{R}$, then $\textnormal{Cov}(\beta_\infty \mid \mathcal{F}_n)$ also does not depend on $\tilde{X}$. 
\end{corollary}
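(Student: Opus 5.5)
The proof follows that of Theorem \ref{thm:predictive_invariance_mp} almost line by line, with the filtration now $\mathcal{F}_i=\sigma(Y_{n+1},\dots,Y_i)$ and the covariates $X_i=\tilde X_{i-n}$ deterministic.

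\textbf{Martingale property.} Since $\alpha_{i-1}(X_i)$ and $\mathcal I_i$ are $\mathcal F_{i-1}$-measurable and the score has conditional mean zero under $P_Y(\cdot\mid\beta_{i-1}^\top X_i)$, we have $E(\beta_i-\beta_{i-1}\mid\mathcal F_{i-1})=0$.

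\textbf{Covariance identity.} The Sherman--Morrison computation of Lemma \ref{lem_increment_cov} uses nothing about the randomness of $X_i$. It therefore gives
\[
E(Z_iZ_i^\top\mid\mathcal F_{i-1})=\mathcal I_{i-1}^{-1}-\mathcal I_i^{-1}.
\]
In the fixed-design case every $\mathcal I_i$ is deterministic, because it depends only on $\beta_n$, $D(\beta_n)$ and the covariates. Telescoping as in Lemma \ref{lem:post_cov} then yields the exact identity
\[
\mathrm{Cov}(\beta_N\mid\mathcal F_n)=\mathcal I_n^{-1}-\mathcal I_N^{-1},
\]
with no expectation on the right.

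\textbf{$L^2$ bound and the mean.} As in Lemma \ref{lem_L2}, $\sup_N E\|\beta_N\|^2\le\|\beta_n\|^2+\mathrm{tr}(\mathcal I_n^{-1})$. Hence the martingale converges a.s. and in $L^2$, and $E(\beta_\infty\mid\mathcal F_n)=\beta_n$. This does not depend on $\tilde X$, which settles the first claim without any assumption.

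\textbf{Covariance.} Once we show $\mathcal I_N^{-1}\to0$ under Assumption \ref{as:fix_suff_ident}, $L^2$ convergence gives $\mathrm{Cov}(\beta_\infty\mid\mathcal F_n)=\mathcal I_n^{-1}$, which is free of $\tilde X$. This step replaces Lemma \ref{lem:EIn_0}, and it is the main obstacle: there is no strong law, and $\tilde X$ need not be bounded, so $w(\beta_n^\top\tilde X_i)$ cannot be bounded below uniformly. I would truncate at a radius $r$:
\[
\mathcal I_N\succeq c_l(r)\sum_{i\le N-n}\tilde X_i\tilde X_i^\top\mathbbm 1(\|\tilde X_i\|\le r),
\]
where $c_l(r)=\inf_{\|x\|\le r}w(\beta_n^\top x)>0$ by continuity and strict positivity of $w$. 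To control the discarded part, I would use the fourth-moment condition and Cauchy--Schwarz/Markov:
\[
\frac1N\sum_i\|\tilde X_i\|^2\mathbbm 1(\|\tilde X_i\|>r)\le\frac{1}{r^2}\,\frac1N\sum_i\|\tilde X_i\|^4.
\]
The right-hand side is at most $C/r^2$ for all large $N$, by the $\limsup$ bound. Choosing $r$ with $C/r^2<\lambda_{\min}(\Sigma_X)/2$ gives
\[
\liminf_N\lambda_{\min}\Bigl(N^{-1}\sum_i\tilde X_i\tilde X_i^\top\mathbbm 1(\|\tilde X_i\|\le r)\Bigr)\ge\lambda_{\min}(\Sigma_X)/2>0
\]
by Weyl's inequality. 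Consequently $\lambda_{\min}(\mathcal I_N)\gtrsim N$, so $\|\mathcal I_N^{-1}\|_2=O(N^{-1})\to0$. No dominated convergence is needed here, since everything is deterministic.
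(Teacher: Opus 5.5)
Your proposal is correct and follows essentially the same route as the paper's proof. Both use the deterministic telescoping identity $\mathrm{Cov}(\beta_N\mid\mathcal F_n)=\mathcal I_n^{-1}-\mathcal I_N^{-1}$ from the Sherman--Morrison step. Both then truncate at a radius $r$, control the discarded part by $r^{-2}\,N^{-1}\sum_i\|\tilde X_i\|^4$, and apply Weyl's inequality to obtain $\liminf_N\lambda_{\min}(N^{-1}\mathcal I_N)>0$, with no dominated convergence needed.
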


\begin{proof}
The update $Z_i = \beta_i - \beta_{i-1}$ satisfies $E(Z_i \mid \mathcal{F}_{i-1}) = 0$ since $E\{s(\beta_{i-1}, Y_i \mid X_i)\mid \mathcal{F}_{i-1}\} = 0$. Thus, $(\beta_i, \mathcal{F}_i)_{i \ge n+1}$ is a martingale.  Following the identity in Lemma \ref{lem_increment_cov}, the conditional covariance of the increments is
$$E(Z_i Z_i^\top \mid \mathcal{F}_{i-1}) = \mathcal{I}_{i-1}^{-1} - \mathcal{I}_i^{-1}.$$ 
We can thus also show a similar result to Lemma \ref{lem:post_cov} albeit without the expectation, that is
    \begin{align*}
       \textnormal{Cov}(\beta_N \mid \mathcal{F}_n)
&= \mathcal{I}_n^{-1} - \mathcal{I}_N^{-1}.
    \end{align*}

To show $\mathcal{I}_N^{-1} \to 0$, we can follow a similar proof technique to Lemma \ref{lem:EIn_0}, although more work is needed to show the existence of a sufficiently large compact ball $B_r$ to control the minimum eigenvalue of $N^{-1}\mathcal{I}_N$. Again for $B_r = \{x \in \mathbb{R}^p: \|x\| \leq r\}$, let us define
\begin{align*}
 \Sigma_N(r) = \frac{1}{N}\sum_{i=1}^N \tilde X_i\tilde X_i^\top\mathbbm 1(\tilde X_i\in B_r)
\end{align*}
with $\Sigma_N := \Sigma_N(\infty)$. We must now show the existence of a sufficiently large $r$ such that
\begin{align*}
\liminf_{N\to \infty}\lambda_{\min}\left\{   \Sigma_N(r)\right\}>0.
\end{align*}
To do this, Weyl's inequality gives
\begin{align*}
\lambda_{\min}\left\{   \Sigma_N(r)\right\} \geq \lambda_{\min}\left(   \Sigma_N\right) - \lambda_{\max}\left\{ \frac{1}{N}\sum_{i=1}^N \tilde X_i\tilde X_i^\top\mathbbm 1(\tilde X_i\in B_r^c)\right\}.
\end{align*}
Since $\lambda_{\min}(\Sigma_N) \to \lambda_{\min}\left(\Sigma_X\right) > 0$ from Assumption \ref{as:fix_suff_ident}, we only need to upper bound the latter term. Using a Rayleigh quotient argument, one can show that
\begin{align*}
    \lambda_{\max}\left\{ \frac{1}{N}\sum_{i=1}^N \tilde X_i\tilde X_i^\top\mathbbm 1(\tilde X_i\in B_r^c)\right\} \leq \frac{1}{N}\sum_{i = 1}^N \| \tilde{X}_i\|^2 \mathbbm{1}\left(\|\tilde{X}_i\| > r\right).
\end{align*}
Finally, a Lyapunov style argument gives
\begin{align*}
     \frac{1}{N}\sum_{i = 1}^N \| \tilde{X}_i\|^2 \mathbbm{1}\left(\|\tilde{X}_i\| > r\right) \leq \frac{1}{r^2}\left(  \frac{1}{N}\sum_{i = 1}^N \| \tilde{X}_i\|^4\right).
\end{align*}
From Assumption \ref{as:fix_suff_ident}, let
\[
C=
\limsup_{N\to\infty}
\frac1N
\sum_{i=1}^N
\|\tilde X_i\|^4
<\infty.
\]
We can then choose $r$ sufficiently large so that
\[
\frac{C}{r^2}
<
\frac12\lambda_{\min}(\Sigma_X),
\]
which gives
\[
\liminf_{N\to\infty}
\lambda_{\min}\{\Sigma_N(r)\}
\ge
\frac12\lambda_{\min}(\Sigma_X)
>
0.
\]
The rest of the proof then proceeds as in Lemma \ref{lem:EIn_0}, where we again obtain the lower bound $$\liminf_N \lambda_{\min}\left(N^{-1} \mathcal{I}_N\right) > 0.$$
Finally, we can skip the dominance argument, and $\mathcal{I}_N^{-1} \to 0$ implies $\textnormal{Cov}(\beta_\infty \mid \mathcal{F}_n) =\mathcal{I}_n^{-1}$ as in the random-design case.
\end{proof}

Similarly, it is possible to show a version of Theorem \ref{thm:as_invar} under the case where the predictive design sequence (and potentially the observed covariate sequence) is deterministic. Again, the proof will generally be easier as we do not need to take expectations over the covariate distribution. To begin, we note that Assumptions  \ref{as:lyapunov_easy} and \ref{as:lyapunov_hard} can be easily replaced by their fixed-design counterparts, given below. Since Assumption \ref{as:fix_suff_ident} already contains the 4th moment control, it will not be necessary for Assumption \ref{as:fixed_lyapunov_easy}.
\renewcommand{\theassumption}{\thesection.\arabic{assumption}(i)}
\begin{assumption}\label{as:fixed_lyapunov_easy}
Suppose $w(t) = w_0$ and  
$k(t) =k_0$ where $w_0$ and $k_0$ are positive finite constants. 
\end{assumption}
\addtocounter{assumption}{-1}
\renewcommand{\theassumption}{\thesection.\arabic{assumption}(ii)}
\begin{assumption}\label{as:fixed_lyapunov_hard}
Suppose $w(t)$ and $k(t)$ are continuous and strictly positive for all $t \in \mathbb{R}$. Let $\tilde{X} = (\tilde{X}_i)_{i \ge 1}$ satisfy $\sup_{i\geq 1} \|\tilde{X}_i\| < \infty$. 
\end{assumption}
\renewcommand{\theassumption}{\thesection.\arabic{assumption}}
Note that Assumption \ref{as:In_conv} and \ref{as:In_conv_det} can remain unchanged. The real observations may be random-design or fixed-design,  where in the latter case we have a real design sequence $\tilde{X}^* = (\tilde{X}_i^*)_{i \geq 1}$ and  $P^{*\infty}$ is now the joint law of $Y_{1:\infty}^*$ which is no longer identically distributed. Analogous to Corollary \ref{cor:fixed_design}, the filtration we use will be $(\mathcal{F}_{ni})_{i \geq 1, n \geq 1}$ where $\mathcal{F}_{ni} = \sigma\left(Y_{n1},\ldots,Y_{ni}\right)$ and we use the same deterministic sequence $(\tilde{X}_i)_{i \geq 1}$ for each row in the martingale array. We then have the following result. 
\begin{corollary}[Asymptotic normality, fixed-design]\label{cor:norm}
    Suppose Assumptions \ref{as:fix_suff_ident}, \ref{as:In_conv} and either \ref{as:fixed_lyapunov_easy} or \ref{as:fixed_lyapunov_hard} hold. Let $\beta_{n\infty}$ be the limit under predictive resampling initialized at $\beta_n$ as in Corollary \ref{cor:fixed_design}. Then the posterior law of $n^{1/2}\,(\beta_{n\infty} - \beta_n)$ converges weakly $P^{*\infty}$-almost surely to a Gaussian as $n \to \infty$.
\end{corollary}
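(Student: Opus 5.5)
The plan is to mirror the proof of Theorem \ref{thm:as_invar} step by step, with every expectation over $X_i\sim P_X$ replaced by a deterministic average along $\tilde X$. Since the filtration is $\mathcal F_{ni}=\sigma(Y_{n1},\dots,Y_{ni})$ and the covariates are constant, the enlarged conditioning $(\mathcal F_{n,i-1},X_i)$ coincides with $\mathcal F_{n,i-1}$. So Lemma \ref{lem_increment_cov} gives the conditional variance directly as
\[
V_n^2=u^\top\bigl(\mathcal I_n^{-1}-\mathcal I_{n,N_n}^{-1}\bigr)u .
\]
The Häusler--Luschgy bridging argument of Lemma \ref{lem:condit_var} is therefore unnecessary. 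We again work first under Assumption \ref{as:In_conv_det}, project with a fixed $u$, and truncate at $N_n$ with $N_n/n\to\infty$.

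\textbf{Step 1: deterministic analogue of Lemma \ref{lem:sup_Ini}.} Here $\mathcal I_{ni}=\mathcal I_n+\sum_{j\le i} w(\beta_n^\top\tilde X_j)\tilde X_j\tilde X_j^\top$, and the goal is $\sup_{n,i}(n+i+1)\|\mathcal I_{ni}^{-1}\|_2<\infty$.
\begin{itemize}
\item Under Assumption \ref{as:fixed_lyapunov_hard}, boundedness of $\tilde X$ and $\beta_n\to\beta^*$ give a uniform lower bound $c_l$ on $w(\beta_n^\top\tilde X_j)$, exactly as in Lemma \ref{lem:compact_wt}.
\item Under Assumption \ref{as:fixed_lyapunov_easy}, $w=w_0$ trivially.
\item In either case, Assumption \ref{as:fix_suff_ident} gives $b_i=\lambda_{\min}(i^{-1}\sum_{j\le i}\tilde X_j\tilde X_j^\top)\to\lambda_{\min}(\Sigma_X)>0$.
\end{itemize}
The Weyl argument of Lemma \ref{lem:sup_Ini} then goes through verbatim, and the bound is now deterministic rather than almost sure. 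It immediately yields $n s_n^2=nV_n^2\to u^\top\mathcal I_{P_X^*}^{-1}u$ as in Lemma \ref{lem:s_nNn}. Consequently $s_n^{-2}V_n^2\to1$ in probability (indeed deterministically).

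\textbf{Step 2: conditional Lyapunov condition via Lemma \ref{lem:condit_lyapunov_suff}.} The bound (\ref{eq:gen_EZ}) gives
\[
E(\widetilde Z_{ni}^4\mid\mathcal F_{n,i-1})\le \|u\|^4 R_{n,i-1}^4\, w(\beta_n^\top\tilde X_i)^2\|\tilde X_i\|^4\, k(\beta_{n,i-1}^\top\tilde X_i).
\]
The two cases are handled differently.
\begin{itemize}
\item \emph{Bounded case (Assumption \ref{as:fixed_lyapunov_hard}).} The argument of Lemma \ref{lem:pmp_lyapunov} applies unchanged. One takes the supremum of $k$ over $\beta$ in a ball and over the compact closure of $\{\tilde X_i\}$, and controls $\sup_i\|\beta_{ni}\|^2=O_P(1)$ through Doob's inequality and the $L^2$ bound from Corollary \ref{cor:fixed_design}.
\item \emph{Constant case (Assumption \ref{as:fixed_lyapunov_easy}).} Here $\tilde X$ may be unbounded, so $\sup_i\|\tilde X_i\|^4$ is not available. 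I would not use Lemma \ref{lem:condit_lyapunov_suff} as stated, but bound the Lyapunov sum directly:
\[
n^2\sum_{i\le N_n}(n+i)^{-4}\|\tilde X_i\|^4\le n^2\sum_{i\ge1}(n+i)^{-4}\|\tilde X_i\|^4 .
\]
Summation by parts against the partial sums $S_m=\sum_{i\le m}\|\tilde X_i\|^4\le Cm$ (valid by the limsup condition in Assumption \ref{as:fix_suff_ident}) bounds this by $O(n^2\sum_m m(n+m)^{-5})=O(n^{-1})\to0$.
\end{itemize}
This step, the constant-$w$ case with unbounded fourth moments handled only on average, is the main obstacle. I expect the summation-by-parts estimate to be exactly what the fourth-moment condition is for.

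\textbf{Step 3: CLT, truncation, almost sure upgrade.} With both conditions verified, Theorem \ref{th:mart_CLT}, Slutsky and Cramér--Wold give $n^{1/2}(\beta_{n,N_n}-\beta_n)\to\mathcal N(0,\mathcal I_{P_X^*}^{-1})$. The truncation error is controlled exactly as in Section \ref{sec:trunc}:
\[
E\{n\|\beta_{n\infty}-\beta_{n,N_n}\|^2\}=n\,\mathrm{tr}(\mathcal I_{n,N_n}^{-1})\to0 \quad\text{by Step 1.}
\]
Finally, the argument of \citet[Theorem 3]{fong2026asymptotics} converts the deterministic-$\mathcal I_n$ result to $P^{*\infty}$-almost sure weak convergence under Assumption \ref{as:In_conv}. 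This argument does not depend on whether the real data are fixed- or random-design.
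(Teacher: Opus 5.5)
Your proposal is correct and follows the paper's proof essentially step for step: a deterministic version of Lemma \ref{lem:sup_Ini}, the observation that $V_n^2=s_n^2$ makes Lemma \ref{lem:condit_var} unnecessary, the bounded case of Lemma \ref{lem:pmp_lyapunov} carries over unchanged, and the constant-$w$ case is handled by bounding $n^2\sum_i (n+i)^{-4}\|\tilde X_i\|^4$ directly using $\sum_{i\le N}\|\tilde X_i\|^4=O(N)$. Your summation by parts is the discrete form of the paper's device, which writes $(n+i)^{-4}=\int_{n+i}^\infty 4s^{-5}\,ds$ and exchanges sum and integral; both give the same $O(n^{-3})$ bound on the sum.
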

\begin{proof}
    The proof structure is largely unchanged, with the fixed-design case actually simplifying much of the proof. For example, Lemma \ref{lem:sup_Ini} will hold deterministically instead of almost surely, Lemma \ref{lem:s_nNn} and Section \ref{sec:trunc} will not require an expectation over $\mathcal{I}_{n,N_n}^{-1}$, and Lemma \ref{lem:condit_var} is no longer necessary since $V_{n}^2 = s_{n}^{2}$.\vspace{1mm}
    
    One difference is the proof of Lemma \ref{lem:pmp_lyapunov} in the constant $w(t) = w_0$ and $k(t)= k_0$ case. As we do not take an expectation over $\|\tilde{X}_i\|^4$, we must handle   this term in (\ref{eq:gen_EZ}) differently. Instead, we will have 
    \begin{align*}
E\left(\widetilde{Z}_{ni}^4 \mid \mathcal{F}_{n,i-1}\right) \leq w_0^2\, k_0\, K_3 \, \|u\|^4\, \|\tilde{X}_i\|^4 
    \end{align*}
    where $K_3$ is the deterministic upper bound on $\sup_{n \geq 1, i \geq 0} R_{ni}^4$ from the fixed-design version of Lemma \ref{lem:sup_Ini}. To check the conditional Lyapunov condition with $\delta = 2$, we then note that
    \begin{align*}
        s_{n}^{-4} \sum_{i = 1}^{N_n}E\left({Z}_{ni}^4 \mid \mathcal{F}_{n,i-1}\right)  \leq O(n^2) \sum_{i = 1}^{\infty}(n+i)^{-4} \|\tilde{X}_i\|^4.
    \end{align*}
To control the above, we have
\begin{align*}
    \sum_{i = 1}^\infty (n+i)^{-4} \|\tilde{X}_i\|^4 = \sum_{i = 1}^\infty \|\tilde{X}_i\|^4 \int_{n+i}^\infty 4s^{-5} ds\,   &= \int_{n+1}^\infty 4s^{-5} \left(\sum_{i = 1}^{\lfloor{s - n}\rfloor} \|\tilde{X}_i\|^4\right)\,ds.
\end{align*}
    As Assumption \ref{as:fix_suff_ident} gives $\sum_{i = 1}^N \|\tilde{X}_i\|^4 = O(N)$, we have
    $$\sum_{i = 1}^{\infty}(n+i)^{-4} \|\tilde{X}_i\|^4 = O(n^{-3}),$$ 
    and the conditional Lyapunov condition holds. Finally, we note that the general $w(t)$ and $k(t)$ case is unchanged as we rely on the boundedness of $\tilde{X}_i$ here.
\end{proof}

Finally, we highlight that it would be interesting to extend Doob's consistency theorem to the fixed-design case, which is likely possible in our simplified linear setting, although we leave a proper investigation for future work. The fixed-design setting is for example, addressed in \citet[Theorem 7.7]{choi2008remarks} for the linear regression model with nonparametric errors, and a potential direction is to investigate if \citet[Proposition 7.5]{choi2008remarks} can be applied to our setting.

\section{Identifiability}\label{sec:ident}
\subsection{Limit under non-identifiability}\label{sec:non_ident}
In this section, we illustrate how predictive resampling can still converge under violation of Assumption \ref{as:suff_ident}, albeit to an incorrect limit. We consider the general parametric martingale posterior setup given in (\ref{eq:recur_update}), as this also encompasses the Bayesian linear regression case given in Example \ref{ex1}.

We begin by noting that both Lemmas \ref{lem:post_cov} and \ref{lem_L2} hold even if $P_X$ does not satisfy Assumption \ref{as:suff_ident}, as this property is not used in those results. As a result, $(\beta_i,\mathcal{F}_i)_{i \geq n+1}$ is still a martingale bounded in $L^2$, so $\beta_i$ still converges to $\beta_\infty$ a.s. under non-identifiability. {However, the key issue is that the underlying parameter $\beta$ is not measurable with respect to $\sigma(X_{1:\infty}, Y_{1:\infty})$ under non-identifiability. Consequently, the limit $\beta_\infty$ fails to recover $\beta$ as required in Doob's consistency theorem, thus invalidating the uncertainty quantification obtained from predictive resampling.} We see the effects of this in Example \ref{ex1}, where we obtain a mismatched posterior to the ground truth, and discuss this further in Section \ref{sec:pmp_ident}.

It is also possible to characterize this mismatch. The identifiability condition only plays a role in the final step of the proof of Theorem  \ref{thm:predictive_invariance_mp}. Although we will still have
    \begin{align*}
E\!\left\{
\left(\beta_N-\beta_n\right)\,\left(\beta_N-\beta_n\right)^\top
\mid
\mathcal F_n,X_{n+1:N}
\right\}
&= \mathcal{I}_n^{-1} - \mathcal{I}_N^{-1},
\end{align*}
under non-identifiability, we do not generally have $\mathcal{I}_N^{-1} \to 0$ a.s. A key side effect of violating Assumption \ref{as:suff_ident} is that the distribution of $\beta_\infty$ will no longer be weakly design invariant.

As an example, consider the linear regression model with $\sigma = 1$ and a ridge penalty, where 
\begin{align*}
    \mathcal{I}_N = \sum_{j=1}^N X_j X_j^\top
+\tau I,
\end{align*}
and suppose as well that $p > n$ and $P_X = \mathbb{P}_X$ which makes the analysis easier. This can be shown to be equivalent to Example \ref{ex1}; see Section \ref{sec:update_motiv}. We will still have 
$$\frac{1}{N-n}\sum_{j = n+1}^N X_j X_j^\top \to \Sigma_n \:=\frac{1}{n}\sum_{i = 1}^n X_i X_i^{\top} \quad \text{a.s.}$$ 
but $\Sigma_n$ is no longer invertible, violating Assumption \ref{as:suff_ident}. Next, we write
\begin{align*}
    \mathcal{I}_N \approx N \Sigma_n + \tau I,
\end{align*}
where we have used the fact that $\mathcal{I}_n = n \Sigma_n + \tau I$. Next, we note that
\begin{align*}
    \mathcal{I}_N^{-1} &\approx \tau^{-1}\delta_N\left( \Sigma_n + \delta_N I\right)^{-1} = \tau^{-1} \left\{ I - \left(\Sigma_n + \delta_NI\right)^{-1}\Sigma_n \right\}
\end{align*}
where $\delta_N = \tau/N$. Taking $\delta_N \to 0$ then gives
\begin{align*}
    \mathcal{I}_N^{-1} \to \tau^{-1}\left(I - \Sigma_n^{\dagger}\Sigma_n\right) \quad \text{a.s.}
\end{align*}
where $\Sigma_n^{\dagger}$ is the Moore-Penrose inverse of $\Sigma_n$. To see the above, we can take the spectral decomposition $\Sigma_n = U \Lambda U^\top$, which gives
\begin{align*}
    \lim_{\delta \to 0}\left(\Sigma_n + \delta I\right)^{-1}\Sigma_n= U \Lambda^0 U^\top = \Sigma_n^{\dagger} \Sigma_n,
\end{align*}
where $\Lambda^0_{jj} = \mathbbm{1}(\Lambda_{jj} > 0)$. If $\Sigma_n$ were invertible, then the limit of $\mathcal{I}_N^{-1}$ is clearly 0. Otherwise, since we have $\mathcal{I}_n^{-1} = (n\Sigma_n + \tau I)^{-1}$, the limiting covariance under predictive resampling takes the form
\begin{align*}
    \left(n\Sigma_n + \tau I\right)^{-1} - \tau^{-1}\left(I - \Sigma_n^{\dagger}\Sigma_n\right) &= \left(n\Sigma_n + \tau I\right)^{-1}\left\{I - \tau^{-1}\left(n\Sigma_n + \tau I\right)\left(I - \Sigma_n^{\dagger}\Sigma_n\right) \right\}\\
    &= \left(n\Sigma_n + \tau I\right)^{-1}\Sigma_n^{\dagger} \Sigma_n.
\end{align*}
One can verify that the above is still a valid covariance matrix, and since $(I - \Sigma_n^{\dagger}\Sigma_n) \succeq 0$,
the above is clearly smaller in Loewner order than $\mathcal{I}_n^{-1} =  \left(n\Sigma_n + \tau I\right)^{-1}$. As a result, all marginal variances will be smaller, which agrees with Figure \ref{fig:pr_vs_bayes_sim_lm} (Left) under $P_X = \mathbb{P}_X$. 

Finally, we repeat the experiment from Section \ref{sec:sim} but with $P_X = \mathbb{P}_X$, which is illustrated in Figure \ref{fig:pr_vs_bayes_sim_px}. As we can see, the marginal variances are again smaller than that of Bayes or the parametric martingale posterior with an identifiable $P_X$, especially for the inactive component. This further demonstrates the importance of identifiability for weak design invariance.
\begin{figure}[h!]
\centering
\includegraphics[width=0.96\linewidth]{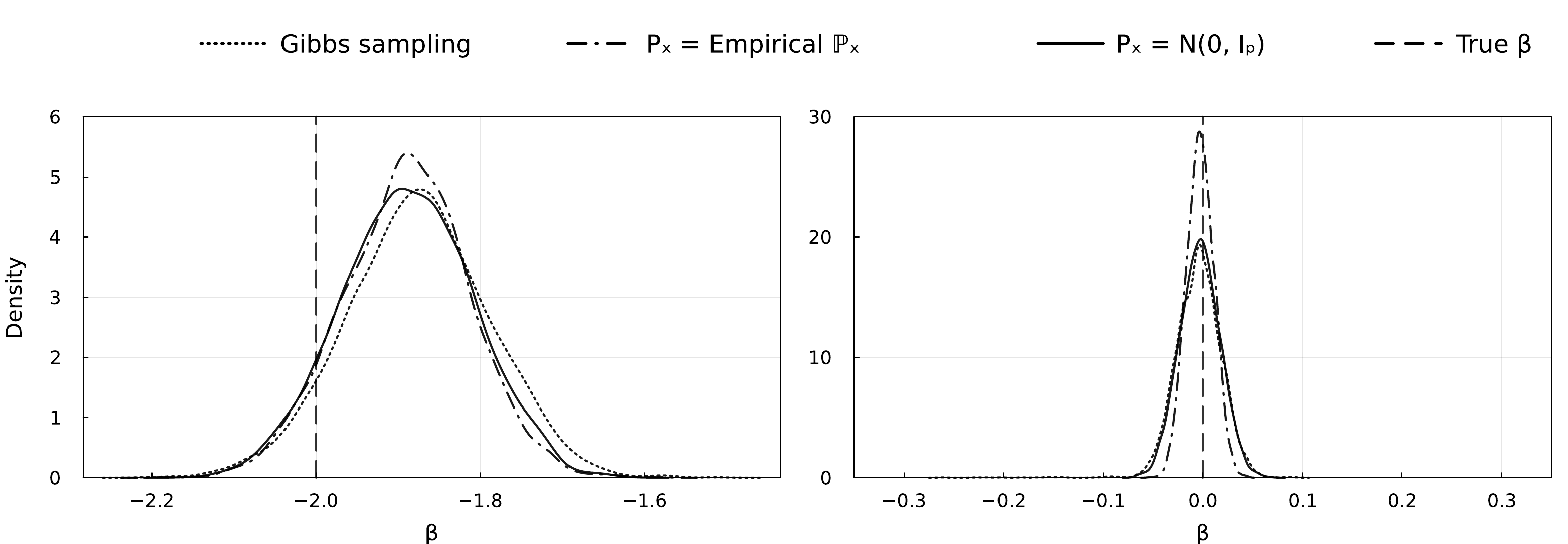}
\caption{Posteriors for the Student-$t$ example for selected (Left) active coefficient; (Right) inactive coefficient. Predictive resampling is with update rule \eqref{eq:recur_update} using $P_X = \mathbb{P}_X$ versus $P_X = \mathcal N(0, I_{p})$. } 
\label{fig:pr_vs_bayes_sim_px}
\end{figure}

\subsection{Identifiability for the parametric martingale posterior}\label{sec:pmp_ident}
In the main paper, we \textit{define} the distribution of $\beta_\infty$ to be the martingale posterior, and proceed to verify properties such as existence, weak design invariance and asymptotic normality. While the practical impact of choosing a non-identifiable $P_X$ is quite visible on the parametric martingale posterior (e.g. as discussed in Section \ref{sec:non_ident}), the role of identifiability in general is less clear than in traditional Bayes. For example, Theorems \ref{thm:predictive_invariance_mp} and \ref{thm:as_invar} do not explicitly rely on the one-parameter family $P(\cdot \mid t)$ being identifiable, although it is intuitively meaningful. We provide an informal outline of the importance of identifiability here. 

As in the proof of Proposition \ref{prop:doob_regression}, let $P_\beta$ denote the joint distribution of $(X,Y)$ where we compose $P_\beta(\cdot \mid x)$ with the fixed $P_X$. One way to highlight the role of identifiability is to consider the functional view of the nonparametric martingale posterior \citep{fong2023martingale}. Since $\beta_i \to \beta_\infty$ a.s. under predictive resampling, we will have $P_\beta \to P_{\beta_\infty}$ weakly a.s. under appropriate regularity conditions. Note that the empirical distribution $\mathbb{P}_i$ of $(X_i,Y_i)$ will also have the same weak a.s. limit, e.g. \citet[Corollary 2]{fong2024bayesian} or \citet[Theorem 2.2]{berti2004limit}. The functional view would then regard the distribution of functionals of $P_{\beta_\infty}$ as the martingale posterior.
Under the identifiability assumptions of Proposition \ref{prop:identifiability_sufficient}, the mapping $\beta \to P_\beta$ is injective, so there will exist a functional $f$ such that $\beta_\infty = f(P_{\beta_\infty})$ a.s. This can likely be formalized (e.g. measurability of $f$) using a similar argument as in Doob's theorem \citep[Theorems 4.1, 4.2]{miller2018detailed}. The interesting property is that the distribution of $\beta_\infty$ can still exist even if $P_Y( \cdot \mid t)$ is not identifiable, or if $P_X$ does not satisfy Assumption \ref{as:suff_ident}, although we lose the functional interpretation and arguably the validity of the resulting posterior.

\section{Motivation for recursive update}\label{sec:update_motiv}
In this section, we motivate our recursive update form, in particular the choice of $\mathcal{I}_i$ and $\alpha_i(x)$, through looking at Bayesian linear regression. Let $P_\beta(\cdot \mid x) = \mathcal{N}(\beta^\top x, \sigma^2)$ with a conjugate prior $\beta \sim \mathcal{N}(0,\tau^{-1} I)$. It is well known that the posterior mean after $i$ observations takes the form
\begin{align*}
    \beta_i = \left(\sum_{j=1}^{i} \frac{X_j\,X_j^{\top}}{\sigma^2} +\tau I \right)^{-1}\,\sum_{j=1}^{i}\frac{X_j\,Y_j}{\sigma^2}\cdot
\end{align*}
A well known identity for recursive least squares is
\begin{align*}
    \beta_i = \beta_{i-1} + \left(\sum_{j=1}^{i} \frac{X_j\,X_j^{\top}}{\sigma^2} +\tau I \right)^{-1}\frac{X_i\left(Y_i - \beta_{i-1}^\top X_i\right)}{\sigma^2}\cdot
\end{align*}
It is not too hard to verify that for the linear regression model, we have
\begin{align*}
  s(\beta,y \mid x) = \frac{x(y- \beta^\top x)}{\sigma^2}, \quad w(t) = w_0 = \frac{1}{\sigma^2}
\end{align*}
hence allowing us to write
\begin{align*}
     \beta_i &= \beta_{i-1} + \mathcal{I}_i^{-1} s(\beta_{i-1},Y_i \mid X_i),\\
     \mathcal{I}_i &= \sum_{j = 1}^i w_0 X_j X_j^\top + D
\end{align*}
where $D = \tau I$. To motivate the scalar correction term, we note that the posterior predictive implies
\begin{align*}
    \left\{\left(Y_i - \beta_{i-1}^\top X_i\right)\mid \mathcal{F}_{i-1}, X_{i}\right\} \sim \mathcal{N}\left\{0, \sigma^2 \left(1 + w_0 X_i^\top \mathcal{I}_{i-1}^{-1}X_i\right) \right\}.
\end{align*}
Now suppose instead $\{(Y_i - \beta_{i-1}^\top X_i)\mid \mathcal{F}_{i-1}, X_i\} \sim \mathcal{N}(0,\sigma^2)$ as we want to use the plug-in predictive. If we scale $(Y_i - \beta_{i-1}^\top X_i)$ by $\alpha_{i-1}(X_i)$, where
\begin{align*}
    \alpha_{i}(x) = \left(1 + w_0 \, x^\top \mathcal{I}_{i}^{-1}\,x\right)^{1/2},
\end{align*}
then the scaled residual would match the posterior predictive case. Combining this, we obtain the update
\begin{align*}
     \beta_i &= \beta_{i-1} + \alpha_{i-1}(X_i)\,\mathcal{I}_i^{-1} s(\beta_{i-1},Y_i \mid X_i).
\end{align*}
Finally, we note that unlike what one might expect for stochastic gradient descent (e.g. for the maximum a posteriori estimate), the prior regularization only enters into $\mathcal{I}_i$ without affecting the score function for Bayesian updating of the posterior mean.

\subsection{Generalizing beyond linear regression}\label{sec:gen_w}
To generalize beyond Gaussian linear regression, the score function can easily just be replaced with that of the appropriate density. For $\mathcal{I}_i$, we notice that the conditional Fisher information for the case $P_\beta(\cdot \mid x) = P_Y(\cdot \mid \beta^\top x)$ would take the form
\begin{align*}
    E\left\{s(\beta, Y \mid x)\, s(\beta, Y \mid x)^\top \mid x\right\} = w(\beta^\top x)\,  xx^\top,
\end{align*}
so it is natural to replace all occurrences of $w_0$ with $w(\beta^\top x)$. We opt to plug-in the value of the initial estimate $\beta_n$ here, as this is our best estimate of $\beta^*$, and do so similarly for $D(\beta_n)$ as we discuss in Section \ref{sec:D_motiv}. 
Although it will not make much difference in practice, we briefly consider a variant of the update in Section \ref{sec:adaptive_I} where $w(\cdot)$ may depend on updated values of $\beta_i$, and illustrate the technical challenges this poses. Finally, the role of the additional ratio term $w(\beta_n^\top x)/w(\beta_i^\top x)$ in $\alpha_i(x)$ can be explicitly seen in the proof of Theorem \ref{thm:predictive_invariance_mp} to attain design invariance.

\subsection{Generalizing the regularization matrix $D$}\label{sec:D_motiv}

In high-dimensional regression, the assumption of sparsity is often necessary, which is incorporated through shrinkage priors in the Bayesian setting. In this subsection, we utilize this connection to inspire generalized choices of $D$ beyond just $D = \tau I$ as motivated at the start of this section. Many classic shrinkage priors can be expressed as a scale mixture of normals \citep{west1987scale}, which takes the form
\[
\beta^{(j)} \mid \tau^{(j)}
\sim
\mathcal N(0,1/{\tau^{(j)}}), \quad \tau^{(j)}
\sim
p(\tau^{(j)}\mid \lambda),
\quad
j=1,\dots,p
\]
where $\lambda$ are hyperparameters. For the model $p_\beta(y \mid x) = \mathcal{N}\left(y; \beta^\top x, \sigma^2\right)$ as considered at the start of this section, this prior choice is conjugate conditional on $(\tau^{(1)},\ldots,\tau^{(p)})$, giving a conditional posterior mean of
\begin{align*}
    E\left\{\beta \mid \mathcal{F}_i, \left(\tau^{(1)},\ldots,\tau^{(p)} \right)\right\} =  \left(\sum_{j=1}^{i} \frac{X_j\,X_j^{\top}}{\sigma^2} +D\right)^{-1}\,\sum_{j=1}^{i}\frac{X_j\,Y_j}{\sigma^2}
\end{align*}
where $D = \text{Diag}\left(\tau^{(1)},\ldots,\tau^{(p)}\right)$. The above provides a clear justification for the form of $\mathcal{I}_i$, where the remaining question is what to plug in for $\left(\tau^{(1)},\ldots,\tau^{(p)}\right)$.

As discussed in Section \ref{sec:plugin}, predictive resampling is initialized at the maximum a posteriori estimate $\beta_n$ given $(X_{1:n}, Y_{1:n})$ corresponding to the chosen sparse prior on $\beta$. Given this initialization, we propose to set $\left(\tau^{(1)},\ldots,\tau^{(p)}\right)$ as a function of the initial estimate $\beta_n$ by capturing its estimated sparsity, thereby justifying the notation $D(\beta_n)$ used in the main paper. As with $w(\beta_n^\top x)$ in Section \ref{sec:gen_w}, we hold $D(\beta_n)$ fixed throughout predictive resampling. 

To derive appropriate plug-in estimates of $\left(\tau^{(1)},\ldots,\tau^{(p)}\right)$, we suggest leveraging the conditional posterior utilized in Gibbs samplers. One potential choice is the Gaussian conditional posterior mean
\begin{align*}
  D_{jj}(\beta_n) &=  E\left(\tau^{(j)} \mid \mathcal{F}_n, \beta = \beta_n \right) = E\left(\tau^{(j)} \mid \beta^{(j)} = \beta_n^{(j)} \right)
\end{align*}
 for $j = 1,\ldots,p$, where we highlight that the conditional posterior of $(\tau^{(1)},\ldots, \tau^{(p)})$ does not depend on the observations. The above term is often in closed form for scale mixture of normals priors; otherwise we may use the conditional maximum a posteriori estimate which can be computed using numerical techniques. Other plug-in estimates can be used if the above suggestions are not stable.
 
We now derive the plug-in estimates given in Table \ref{tab:D_n_summary} following the above construction. 
\begin{example}[LASSO]
 The well-known Bayesian LASSO prior \citep{park2008bayesian} is a scale mixture of normals prior with
 \begin{align*}
    \frac{1}{\tau^{(j)}} \sim \text{Exp}\left(\frac{\lambda^2}{2}\right)
 \end{align*}
 where $\lambda$ corresponds to the usual LASSO penalty hyperparameter and we have assumed $\sigma^2=1$ is known. One can then show the conditional posterior is
 \begin{align*}
     \tau^{(j)} \mid \beta^{(j)}\sim \text{Inverse-Gaussian}\left(\frac{\lambda}{|\beta^{(j)}|}, \lambda^2\right)
 \end{align*}
 which has posterior mean ${\lambda}/{|\beta^{(j)}|}$, thus matching the entry in Table \ref{tab:D_n_summary}. In the case where $|\beta^{(j)}|$ is close or equal to 0, another option for the plug-in estimate is the reciprocal form
 $$
    \frac{1}{
    E\!\left[
    1/\tau^{(j)}
    \mid
    \beta^{(j)}
    \right]
    }
    =
    \frac{\lambda^2}{
    \lambda\, |\beta^{(j)}| + 1
    }
$$
which can be shown from properties of the inverse-Gaussian distribution. The above is stable for small values of $|\beta^{(j)}|$.
 
\end{example}

\begin{example}[Gaussian continuous spike-and-slab]
Another popular class of priors is the continuous spike-and-slab prior \citep{george1993variable}, where 
\begin{align*}
   \tau^{(j)} &= \left(1-z^{(j)}\right) v_0^{-1} + z^{(j)} v_1^{-1}, \quad z^{(j)} \sim \text{Bernoulli}\left(\theta\right),
\end{align*}
where $z^{(j)} \in \{0,1\}$, $\theta \in (0,1)$ and $v_0 \ll v_1$. This corresponds to a Gaussian spike with variance $v_0$ and a slab with variance $v_1$. It is not too hard to show that
$$E\left(\tau^{(j)} \mid \beta^{(j)}\right) = (1-\gamma^{(j)})\,v_0^{-1} + \gamma^{(j)}\,v_1^{-1}$$
where
\begin{align*}
    \gamma^{(j)} = E\left(z^{(j)} \mid \beta^{(j)}\right) =  \frac{\theta\,\mathcal{N}(\beta^{(j)};\, 0, v_1)}{\theta\,\mathcal{N}(\beta^{(j)};\, 0, v_1) 
      + (1-\theta)\,\mathcal{N}(\beta^{(j)};\,  0, v_0)}\cdot
\end{align*}
We note that $\gamma^{(j)}$ is essentially the conditional posterior inclusion probability, and the above computation is also used in the EMVS algorithm of \citet{rovckova2014emvs}. 
 \end{example}

\begin{example}[Spike-and-slab LASSO]
The continuous spike-and-slab prior with the Gaussian kernel typically requires post-hoc thresholding of posterior inclusion probabilities for variable selection \citep{bai2021spike}. In contrast, the spike-and-slab LASSO prior \citep{rockova2018spike} replaces the Gaussian kernels with Laplace kernels, refining the \(\ell_1\) penalty by allowing adaptive shrinkage tailored to the magnitude of each coefficient. 

In this setting, we have a slightly more complex prior, where
\begin{align*}
    \frac{1}{\tau^{(j)}} \mid z^{(j)} &\sim \text{Exp}\left(\frac{\lambda_{z^{(j)}}^2}{2}\right), \quad 
    \lambda_{z^{(j)}} =
\left(1-z^{(j)}\right)\,\lambda_0
+
z^{(j)}\,\lambda_1, 
\quad z^{(j)}\sim
\mathrm{Bernoulli}(\theta),
\end{align*}
where again $\theta \in (0,1)$, and $\lambda_0 \gg \lambda_1$. Now, we have
\begin{align*}
    \left(\tau^{(j)}
\mid
\beta^{(j)},
z^{(j)}\right)
\sim
\text{Inverse-Gaussian}
\left(
\frac{
\lambda_{z^{(j)}}
}{
|\beta^{(j)}|
},
\lambda_{z^{(j)}}^2
\right),
\end{align*}
therefore giving
\begin{align*}
  E\left(\tau^{(j)} \mid \beta^{(j)}\right)=  E\left\{E\left(\tau^{(j)} \mid \beta^{(j)},z^{(j)}\right)\mid \beta^{(j)} \right\} =   \left({1-{\gamma}^{(j)}}\right)
\frac{\lambda_0}
{
\left|
\beta^{(j)}
\right|
}
+
{\gamma}^{(j)}
\frac{\lambda_1}
{
\left|
\beta^{(j)}
\right|
}
\end{align*}
where the inclusion probability $\gamma^{(j)}$ can again be shown to be the form
\begin{align*}
 \gamma^{(j)} =   \frac{
\theta\,\lambda_1\,
\exp\left(
-\lambda_1
|
\beta^{(j)}
|
\right)
}
{
(1-\theta)\,\lambda_0\,
\exp\left(
-\lambda_0
|
\beta^{(j)}
|
\right)
+
\theta\,\lambda_1\,
\exp\left(
-\lambda_1
|
\beta^{(j)}
|
\right)
}\cdot 
\end{align*}
\end{example}

We make a final remark regarding the uncertainty quantification in our method, which is particularly highlighted in the spike-and-slab examples. For variable selection priors,  an additional layer of uncertainty arises from the uncertainty in the variable inclusion indicators $z^{(j)}$, which is not handled in our case as we plug in an estimated value of $E(z^{(j)} \mid \beta_n^{(j)})$. Thus, our martingale posterior framework focuses exclusively on local coefficient uncertainty, and extending the framework to quantify model selection uncertainty is an important and nontrivial future direction of research.

\subsection{Adaptive weights $w(\beta^\top x)$}\label{sec:adaptive_I}
As discussed in Section \ref{sec:gen_w}, we briefly consider one other variant of $\mathcal{I}_i$ and $\alpha_i(x)$ which allows for $w(\cdot)$ to depend on $\beta_i$. Suppose we now have
\begin{align*}
    \mathcal{I}_i = \sum_{j = 1}^i w(\beta_{j-1}^\top X_j) X_j X_j^\top + D, \quad   \alpha_i(x) = \left\{{1 + w\left(\beta_{i}^\top x\right) x^\top \mathcal{I}_i^{-1} x}\right\}^{1/2}.
\end{align*}
The main difference is that $\mathcal{I}_i$ is no longer evaluated at a single $\beta$, instead relying on the whole sequence $\beta_{1:i-1}$ in an online manner.

In order to attempt to show weak design invariance, one can verify that the above choice gives
\begin{align*}
    E[Z_i Z_i^\top \mid  \mathcal{F}_{i-1}, X_i] &= \alpha_{i-1}(X_i)^2 \cdot \frac{w(\beta_{i-1}^\top X_i) \mathcal{I}_{i-1}^{-1} X_i X_i^\top \mathcal{I}_{i-1}^{-1}}{\left\{1 + w(\beta_{i-1}^\top X_i)X_i^\top \mathcal{I}_{i-1}^{-1} X_i \right\}^2} = \mathcal{I}_{i-1}^{-1} - \mathcal{I}_i^{-1},
\end{align*}
similar to Lemma \ref{lem_increment_cov}. We can then still show that
\begin{align*}
E\!\left\{
\left(\beta_N-\beta_n\right)\,\left(\beta_N-\beta_n\right)^\top
\mid
\mathcal F_n
\right\}
&= \mathcal{I}_n^{-1} - E\left(\mathcal{I}_N^{-1}\mid \mathcal{F}_n\right).
\end{align*} 
We now face the crux of the issue: we can no longer rely on a strong law of large numbers argument to show $E\left(\mathcal{I}_N^{-1}\mid \mathcal{F}_n\right)\to 0$, as $\mathcal{I}_N$ depends on the random $\beta_N$. We suspect this will introduce the need for stringent conditions on $w(t)$ and $P_X$, and will be much less general than our Theorem \ref{thm:predictive_invariance_mp}. Proving Theorem \ref{thm:as_invar} will also be significantly more challenging, as we also rely on the strong law of large numbers argument here. This approach will also significantly increase the computational cost, as the dependence of $\mathcal{I}_i$ on $\beta_{n+1:i}$ prevents precomputation. 
Finally, we note that having an adaptive weight will not have much practical impact, since the target limiting covariance would still be \(\mathcal I_n^{-1}\) if we can establish weak design invariance.

\section{Additional experiments and details}\label{sec:experiments}
\subsection{Computation}\label{sec:computation}

This subsection outlines additional computational details on the implementation of (\ref{eq:recur_update}). In particular, we discuss the computation of the initial estimate, the precomputation of terms involving $\mathcal{I}_i^{-1}$, and details on potential parallelization of predictive resampling. As outlined in the main paper, all methods are implemented in Julia \citep{bezanson2017julia} for computational expediency, due to the inherent recursive nature of our methodology.

Predictive resampling relies on initializing $\beta_n$ to the maximum a posteriori estimate corresponding to the appropriate prior. When the objective function is non-convex, such as for the continuous spike-and-slab prior, the numerical optimization procedure requires more care to obtain a good estimate. We can resort to standard techniques such as deterministic annealing \citep{ueda1998deterministic} or dynamic posterior exploration with a warm start which is often used in the spike-and-slab case \citep{rockova2018spike,bai2021spike}. For the simulation in Section \ref{sec:sim}, we implement an expectation-maximization algorithm as the base optimizer, which can be viewed as an extension of the EMVS algorithm of \citet{rovckova2014emvs} with an additional continuous latent variable due to the Student-$t$ likelihood. To carry out dynamic posterior exploration, we initialize at $\beta_n = 0_p$, then gradually decrease the spike variance along a ladder from $v_0 = 0.5$ to the target $v_0 =0.0005$ using the estimate from the previous $v_0$ value as the warm start for our expectation-maximization optimizer.

As discussed in Section \ref{sec:sim}, it is computationally convenient to reuse the same (random) sequence $X_{n+1:N}$ across predictive resampling trajectories, which can be justified theoretically as in Section \ref{sec:fixed_design}. In particular, the expensive terms in (\ref{eq:recur_update}) are the ones that involve $\mathcal{I}_i^{-1}$.
In the absence of parallelization as in Section \ref{sec:sim}, we utilize the Sherman-Morrison update to compute $(\mathcal{I}_i^{-1})_{(n + 1)\leq i \leq N}$ in $O\{(N-n)p^2\}$ operations given the sequence $X_{n+1:N}$ and initial $\mathcal{I}_n^{-1}$. In this process, we will have precomputed the terms $\mathcal{I}_i^{-1} X_i$ and $\{1 + w(\beta_n^\top X_i) X_i \mathcal{I}_{i-1}^{-1} X_i\}$ for $(n+1) \leq i \leq N$, which are exactly needed to compute $\mathcal{I}_i^{-1}s(\beta_{i-1},Y_i \mid X_i)$ and $\alpha_{i-1}(X_i)$ respectively when predictive resampling. We highlight that this precomputation is very similar to that of recursive least squares, so we may be able to further leverage techniques within this literature to speed up or approximate our computation, which we leave for future work. 

Finally, we briefly discuss the potential for parallelization. Unlike Gibbs sampling, predictive resampling is inherently parallelizable across trajectories. As a result, the run-time for predictive resampling (after precomputation) can be reduced to essentially the run-time of a single trajectory given sufficient parallelization. Parallelization may also be used to accelerate precomputation, e.g. through block processing of $\mathcal{I}_i^{-1}$ for blocks of $\{n+1,\ldots,N\}$, and also allowing for different $X_{n+1:N}$ sequences across predictive resampling trajectories if desired. 

\subsection{Impact of correction term $\alpha_i(x)$}
We empirically demonstrate the importance of the scalar correction factor $\alpha_i(x)$ in this subsection. Figure \ref{fig:pr_vs_bayes_sim_correction} illustrates the limit under the setting of Section \ref{sec:sim} using recursive update \eqref{eq:recur_update} \textit{without} correction:
\begin{align*}
    \beta_i= \beta_{i-1}
&+ \mathcal{I}_i^{-1}\,s(\beta_{i-1}, Y_i \mid X_i).
\end{align*}
We see that the marginal posteriors are clearly noticeably different for $P_X = \mathcal{N}(0, I_p)$ versus $\mathcal{N}(0,10^2 \, I_p)$, highlighting the importance of the scalar correction factor for (weak) design invariance. Without the scalar factor, we no longer have the telescoping argument required for Theorem \ref{thm:predictive_invariance_mp}.
\begin{figure}[h!]
\centering
\includegraphics[width=0.96\linewidth]{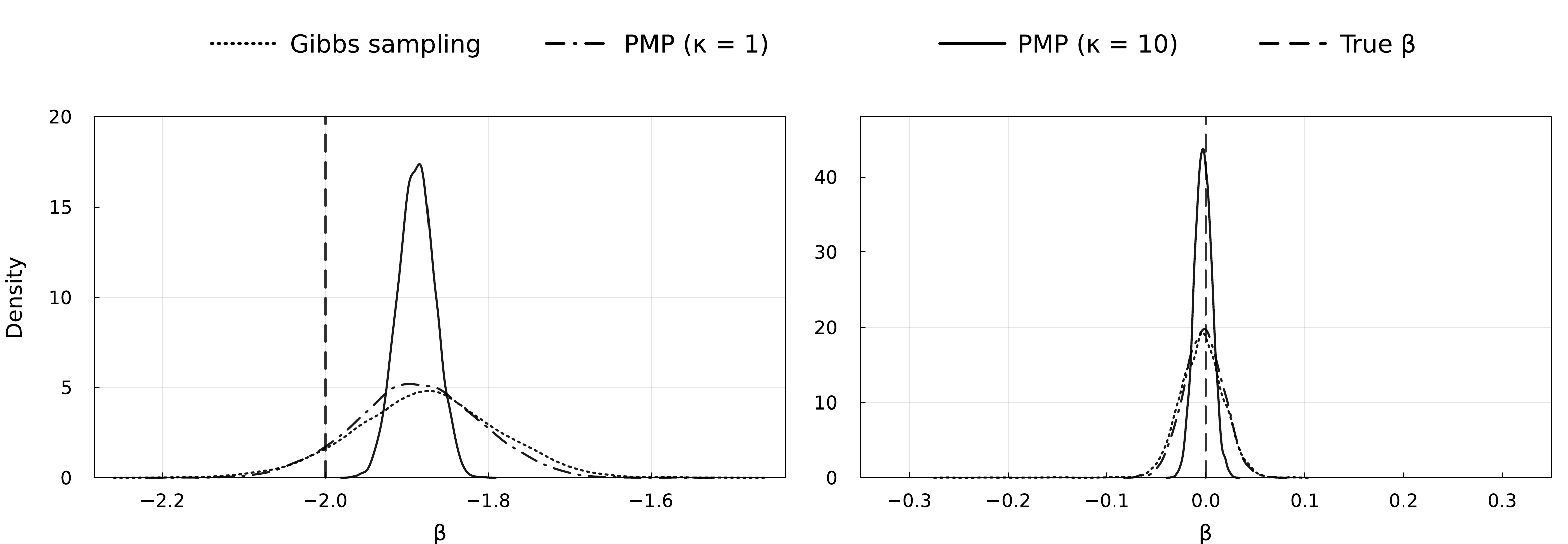}
\caption{Posteriors for the Student-$t$ example for selected (Left) active coefficient; (Right) inactive coefficient. Predictive resampling is with update rule \eqref{eq:recur_update} minus the correction term $\alpha_{i-1}(X_i)$, and using $P_X = \mathcal N(0, I_{p})$ versus $P_X = \mathcal N(0, 10^2\,I_{p})$. } 
\label{fig:pr_vs_bayes_sim_correction}
\end{figure}

\subsection{Convergence rate of predictive resampling}\label{sec:kappa}
In this subsection, we explore the impact of the scale parameter in $P_X = \mathcal{N}(0,\kappa^2 I_p)$ on the convergence rate of predictive resampling under the setting of Section \ref{sec:sim}. Figure \ref{fig:pr_convergence_robust} illustrates predictive resampling trajectories under different values of $\kappa$, where it is clear that increasing $\kappa$ allows the trajectories to converge to their limiting values faster, with diminishing returns after $\kappa = 10$. We also highlight a notable observation: predictive resampling will always require a truncation value of at least $N \geq p+1$ for convergence, which is not too surprising as otherwise the parameter will be non-identifiable given $(X_{1:N},Y_{1:N})$. As discussed in the main paper, a practical recommendation to avoid numerical instability is to increase $\kappa$ until the predictive resampling trajectories converge sufficiently quickly as desired; we see in Figure \ref{fig:pr_convergence_robust} that $\kappa = 10$ is sufficient. 
\begin{figure}[h!]
\centering
\includegraphics[width=0.98\linewidth]{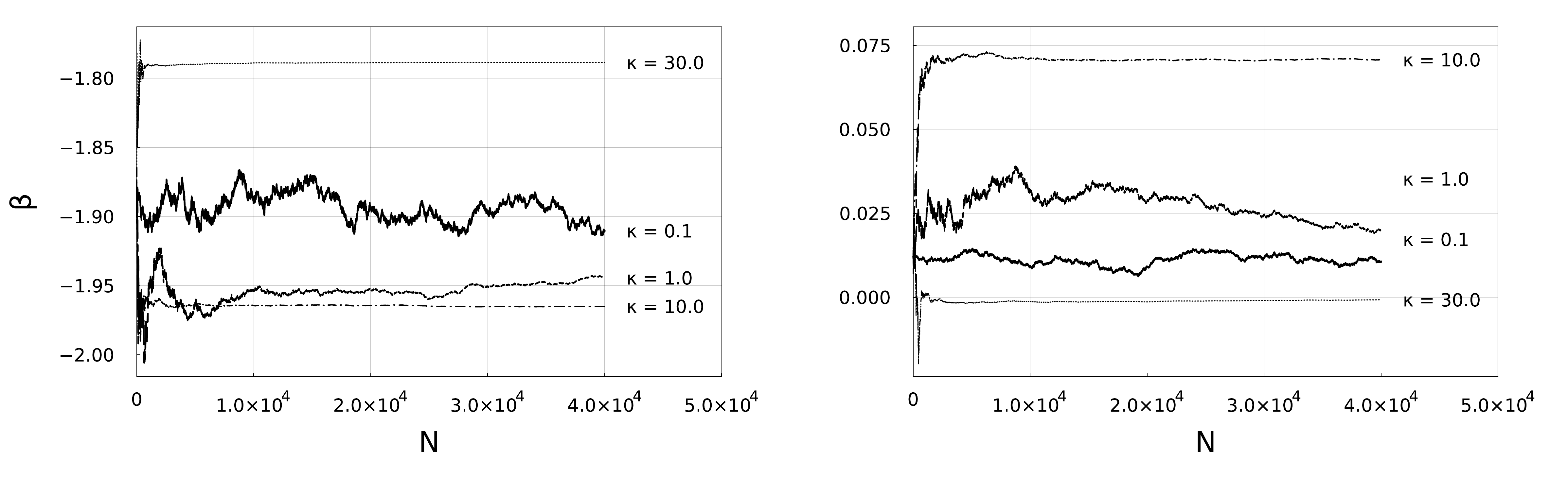}
\caption{Trajectory plots for the Student-$t$ example for selected (Left) active coefficient; (Right) inactive coefficient. Predictive resampling is with update rule \eqref{eq:recur_update} using $P_X = \mathcal N(0, \kappa^2 I_{p})$ where $\kappa = (0.1, 1.0, 10.0, 30.0)$. } 
\label{fig:pr_convergence_robust}
\end{figure}

To understand why increasing $\kappa$ can increase the convergence rate of predictive resampling, we look at the remaining variance at truncation $N$. From Theorem \ref{thm:predictive_invariance_mp}, we have
   \begin{align*}
E\!\left\{\left(\beta_\infty - \beta_N\right)\left(\beta_\infty - \beta_N\right)^\top \right\} = E\left(\mathcal{I}_N^{-1}\right)
\end{align*}
where the expectation is over $X_{n+1:N} \iid P_X$. Increasing the convergence speed of predictive resampling is thus equivalent to reducing the size of $E\left(\mathcal{I}_N^{-1}\right)$, which is connected to the amount of information in each imputed sample $(Y_i,X_i)$. For the Student-$t$ example in Section \ref{sec:sim}, the above variance term can be reduced by increasing $\kappa$ as $\mathcal{I}_N$ is approximately proportional to $\sum_{i = 1}^N X_i X_i^\top$. \vspace{5mm}

\newpage

\subsection{Main simulation and additional examples}\label{sec:addit_examples}

We include additional details on the experiments in the main paper, and illustrate two additional examples to demonstrate the weak design invariance beyond the location model setting. As in the main paper, we generate $B = 5000$ posterior samples (or effective sample size for Markov chain Monte Carlo) for all examples. The observations are again generated from the corresponding parametric model in each experiment.

For the Student-$t$ example in Section \ref{sec:sim}, we note that the precomputation required 0.04s out of the total run-time of 0.12s for $P_X = \mathcal{N}(0, 10^2\, I_p)$. For the case where $P_X = \mathcal{N}(0, I_p)$, we found that $N = 20000$ was sufficient for truncation, where we required 1.31s for precomputation and 2.63s for predictive resampling, resulting in a run-time of 3.94s.

Figure \ref{fig:pr_vs_bayes_sim_gamma} illustrates Gamma regression with a fixed shape of $\alpha =1$, a log link, and a ridge prior $\mathcal{N}(0, {I}_p)$, where $n=p=10$ with 5 active coordinates of magnitude $\approx 0.45$.  For the Gamma family, we have $w(t) = \alpha$, so we can rely on Assumption \ref{as:lyapunov_easy}. We see that the martingale posteriors appear identical for $P_X = \mathcal{N}(0,I_p)$ and $\mathcal{N}(0, 2^2 \, I_p)$ due to weak design invariance. Furthermore, we highlight that this design invariance holds even under clear non-Gaussianity of the martingale posterior, indicating the strength of weak design invariance.
For the parametric martingale posterior, the initial maximum a posteriori estimate $\beta_n$ is computed using the L-BFGS optimizer \citep{liu1989limited}. For predictive resampling, $N = 1000$ is sufficient for convergence, with a precomputation time of 0.35\,ms followed by a run-time of 22\,ms for $B = 5000$ samples. We implement the traditional Bayesian posterior as a comparator using the No-U-Turn sampler \citep{hoffman2014no} in the \texttt{Turing.jl} package \citep{fjelde2025turing} as the prior is not partially conjugate. Here we used 4000 adaptation steps, followed by 8000 samples to attain an effective sample size of $B = 5000$, which required 1.10\,s of run-time.

\begin{figure}[!h]
\centering
\includegraphics[width=0.96\linewidth]{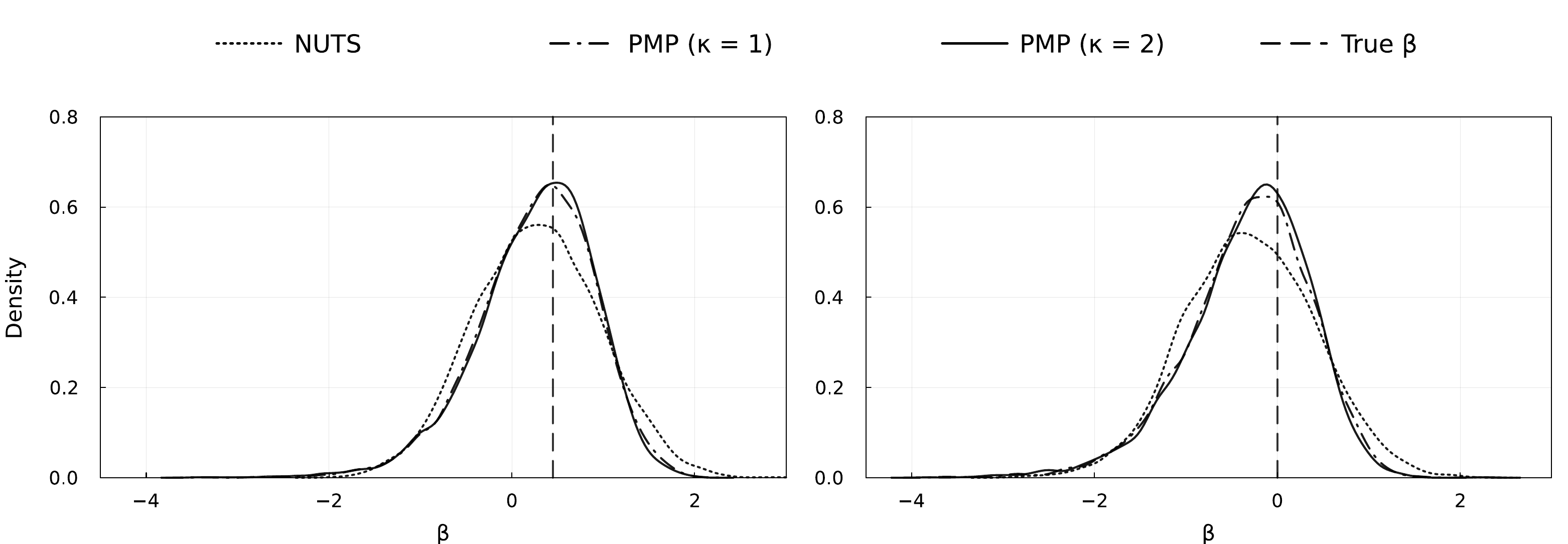}
\caption{Posteriors for the gamma example for selected (Left) active coefficient; (Right) inactive coefficient. Predictive resampling is with update rule \eqref{eq:recur_update} using $P_X = \mathcal N(0, I_{p})$ versus $P_X = \mathcal N(0, 2^2\,I_{p})$. } 
\label{fig:pr_vs_bayes_sim_gamma}
\end{figure}

Figure \ref{fig:pr_vs_bayes_sim_logistic} illustrates a logistic regression example using a ridge prior $\mathcal{N}(0, 0.01{I}_p)$, where $n=1000, p=500$ with 5 active coordinates of magnitude $\approx 0.45$. This is an example where $w(t)$ is not constant; see Example \ref{ex:logistic}. Here we choose two options for $P_X$ with compact support $[-6,6]^p$, in connection to Assumption \ref{as:lyapunov_hard}. All coordinates  $X^{(j)}$ are drawn independently from the uniform distribution or the truncated normal with $\sigma = 3$ on $[-6,6]$ in the two cases respectively. Again, weak design invariance is clearly preserving robustness of the martingale posterior to the choice of $P_X$ in this setting. The initial maximum a posteriori estimate $\beta_n$ is computed using \texttt{glmnet}. For predictive resampling, we find that $N = 5000$ is sufficient for convergence, which requires a precomputation time of 0.32\,s followed by a run-time of 4.83\,s to generate $B = 5000$ samples. As a comparison, we implement the traditional Bayesian posterior with the Gibbs sampler of \cite{polson2013bayesian}. Here we generate 6000 burn-in samples followed by another 6000 samples to achieve a mean effective sample size of $B = 5000$ in the active coordinates, requiring a total run-time of 151.64\,s. We also highlight that a single coordinate of the maximum a posteriori estimate $\beta_n^{(j)}$ does not necessarily align with the mode of the marginal posterior, which explains the observation in Figure \ref{fig:pr_vs_bayes_sim_logistic}.

\begin{figure}[!h]
\centering
\includegraphics[width=0.96\linewidth]{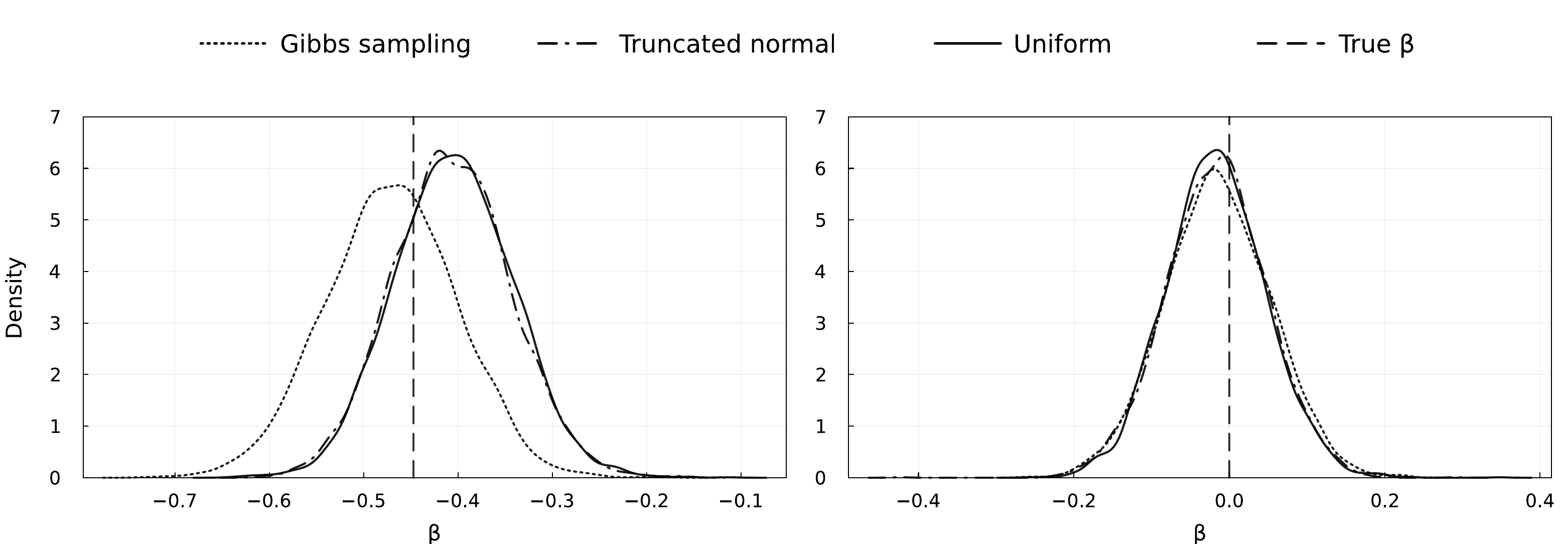}
\caption{Posteriors for the logistic example for selected (Left) active coefficient; (Right) inactive coefficient. Predictive resampling is with update rule \eqref{eq:recur_update} using $P_X = \text{TruncNorm}(-6,6; \sigma = 3)^p$ versus $P_X = \text{Uniform}(-6,6)^p$. } 
\label{fig:pr_vs_bayes_sim_logistic}
\end{figure}

\end{appendices}

\end{document}